\crefname{enumi}{Condition}{Conditions}
\crefname{enumii}{Step}{Steps}
\tikzstyle{overbrace text style}=[font=\tiny, above, pos=.5, yshift=5pt]
\tikzstyle{overbrace style}=[decorate,decoration={brace,raise=5pt,amplitude=3pt}]
\theoremstyle{definition}
\newtheorem{definition}{Definition}
\theoremstyle{plain}
\newtheorem{theorem}{Theorem}
\newtheorem{lemma}[theorem]{Lemma}
\newtheorem{corollary}[theorem]{Corollary}
\newtheorem{observation}{Observation}
\newtheorem{claim}{Claim}
\theoremstyle{definition}
\newtheorem{remark}{Remark}
\newcommand{\cd}{\textsc{Consensus-$1/k$-Division}}
\newcommand{\ch}{\textsc{Consensus-Halving}}
\newcommand{\plus}{I^{+}}
\newcommand{\minus}{I^{-}}
\newcommand{\tplus}{+_T}
\newcommand{\ttimes}{\times_T}
\newcommand{\linfixp}{\textsc{2D-Linear-FIXP}}
\newcommand{\tlinfixp}{\textsc{2D-Truncated-Linear-FIXP}}
\title{Consensus-Halving: Does It Ever Get Easier?\thanks{A preliminary version of this paper appeared in the proceedings of the 21st ACM Conference on Economics and Computation (EC 2020).}}
\author{
\begin{tabular}{cc}
& \\
\textbf{Aris Filos-Ratsikas} & \textbf{Alexandros Hollender}\\
\small{University of Edinburgh, United Kingdom} & \small{University of Oxford, United Kingdom}\\
\href{mailto:Aris.Filos-Ratsikas@ed.ac.uk}{\small{\texttt{Aris.Filos-Ratsikas@ed.ac.uk}}} & \href{mailto:alexandros.hollender@cs.ox.ac.uk}{\small{\texttt{alexandros.hollender@cs.ox.ac.uk}}}\\
& \\
\textbf{Katerina Sotiraki} & \textbf{Manolis Zampetakis}\\
\small{University of California, Berkeley, USA} & \small{University of California, Berkeley, USA}\\
\href{mailto:katesot@berkeley.edu}{\small{\texttt{katesot@berkeley.edu}}} & \href{mailto:mzampet@berkeley.edu}{\small{\texttt{mzampet@berkeley.edu}}}\\
& \\
\end{tabular}
}
\date{}
\begin{document}

\maketitle

\begin{abstract}
In the $\varepsilon$-\emph{Consensus-Halving} problem, a fundamental problem in fair division, there are $n$ agents with valuations over the interval $[0,1]$, and the goal is to divide the interval into pieces and assign a label ``$+$'' or ``$-$'' to each piece, such that every agent values the total amount of ``$+$'' and the total amount of ``$-$'' almost equally. The problem was recently proven by \citet{FRG18-Consensus,FRG18-Necklace} to be the first ``natural'' complete problem for the computational class PPA, answering a decade-old open question.

In this paper, we examine the extent to which the problem becomes easy to solve, if one restricts the class of valuation functions. To this end, we provide the following contributions. First, we obtain a strengthening of the PPA-hardness result of \citep{FRG18-Necklace}, to the case when agents have \emph{piecewise uniform} valuations with only \emph{two blocks}. We obtain this result via a new reduction, which is in fact conceptually much simpler than the corresponding one in \citep{FRG18-Necklace}. Then, we consider the case of \emph{single-block (uniform) valuations} and provide a parameterized polynomial time algorithm for solving $\varepsilon$-\emph{Consensus-Halving} for any $\varepsilon$, as well as a polynomial-time algorithm for $\varepsilon=1/2$. Finally, an important application of our new techniques is the first hardness result for a generalization of Consensus-Halving, the Consensus-$1/k$-Division problem \citep{SS03-Consensus}. In particular, we prove that $\varepsilon$-Consensus-$1/3$-Division is PPAD-hard.
\end{abstract}

\section{Introduction}

The topic of \emph{fair division} has been in the focus of research in economics and mathematics, since the late 1940s and the pioneering works of Banach, Knaster and Steinhaus \citep{Steinhaus1949}, who developed the associated theory. The related literature contains many interesting problems, with the most celebrated perhaps being the problems of \emph{envy-free cake-cutting} and \emph{equitable cake-cutting}, for which a plethora of results have been obtained. More recently, the computer science literature has made a significant contribution in studying the computational complexity of these problems, and attempting to design efficient algorithms for several of their variants \citep{aziz2016discrete,aziz2016discreteb,deng2012algorithmic,arunachaleswaran2019fair}. 

Another classical problem in fair division, whose study dates back to as early as the 1940s and the work of \citet{neyman1946theoreme}, is the \emph{Consensus-Halving problem} \citep{SS03-Consensus}. In this problem, there is a set of $n$ agents with valuation functions over the $I = [0,1]$ interval. The goal is to divide the interval into pieces using at most $n$ cuts, and assign a label from $\{+,-\}$ to each piece, such that every agent values the total amount of $I$ labeled ``$+$'' and the total amount of $I$ labeled ``$-$'' equally. The name ``Consensus-Halving'' is attributed to \citet{SS03-Consensus}, although the problem has been studied under different names in the past. For example, it is also known as \emph{The Hobby-Rice theorem} \citep{hobby1965moment}, or \emph{continuous necklace splitting} \citep{Alon87-Necklace}. Similarly to other well-known problems in fair division, the existence of a solution to the Consensus-Halving problem is \emph{always} guaranteed, and can be proven via the application of a fixed-point theorem; here the \emph{Borsuk-Ulam theorem} \citep{Borsuk1933}. 
As a matter of fact, the problem is a continuous analogue of the well-known Necklace Splitting problem \citep{Goldberg1985,Alon87-Necklace}, whose existence of a solution is typically established via an existence proof for the continuous version. 

The Consensus-Halving problem attracted attention in the literature of computer science recently, due to the recent results of \citet{FRG18-Consensus,FRG18-Necklace} who studied the computational complexity of the approximate version, in which there is a small allowable discrepancy $\varepsilon$ between the values of the two portions. First, in \citep{FRG18-Consensus}, the authors proved that $\varepsilon$-Consensus-Halving for inverse-exponential $\varepsilon$ is complete for the computational class PPA, defined by \citet{Papadimitriou94-TFNP-subclasses}. This was the first PPA-completeness result for a ``natural'' problem, i.e., a computational problem that does not have a polynomial-sized circuit explicitly in its definition, answering an open question from \citet{Papadimitriou94-TFNP-subclasses}, reiterated multiple times over the years \citep{Grigni2001,ABB15-2DTucker}. Then in \citep{FRG18-Necklace}, the authors strengthened their hardness result to the case of inverse-polynomial $\varepsilon$, which also established the PPA-completeness of the Necklace Splitting problem for $2$ thieves.

Despite the aforementioned results, the complexity of the problem is not yet well understood. Does the problem remain hard if one restricts attention to classes of simple valuation functions? Note that the reduction of \citep{FRG18-Consensus,FRG18-Necklace} uses instances with \emph{piecewise constant} valuation functions with \emph{polynomially many} pieces. On the opposite side, are there efficient algorithms for solving special cases of the problem? What if we allow a larger number of cuts?

\subsection{Our Results}
Towards understanding the complexity of Consensus-Halving, we present the following results.
\begin{itemize}
    \item[$\blacktriangleright$] We prove that $\varepsilon$-Consensus-Halving is PPA-complete, even when the agents have \emph{two-block uniform} valuations, i.e., valuation functions which are \emph{piecewise uniform} over the interval and assign non-zero value on at most \emph{two} pieces. This result holds even when $\varepsilon$ is inverse-polynomial, and extends to the case where the number of allowable cuts is $n + n^{1 - \delta}$, for some constant $\delta > 0$.\smallskip
    
    This is an important strengthening of the results in \citep{FRG18-Consensus,FRG18-Necklace} which require the agents to have piecewise constant valuations with polynomially many non-uniform blocks. En route to this result, we obtain a significant simplification to the proof of \citet{FRG18-Consensus,FRG18-Necklace}, which uses new gadgets for the encoding of the circuit of high-dimensional Tucker (see \cref{def:tucker}), which we reduce from. Our new reduction also gives a simplified proof of PPA-completeness for Necklace Splitting with $2$ thieves \citep{FRG18-Consensus,FRG18-Necklace}. \smallskip
    
    \item[$\blacktriangleright$] We study the case of \emph{single-block valuations} and provide the first algorithmic results for the problem.\footnote{To be precise, we provide the first such results for the version of the problem with $n$ agents and $n$ cuts. For a large number of cuts, \citet{brams1996fair} present algorithms for $\varepsilon$-approximate solutions. In fact, very recently and after the publication of the conference version of our paper, \citet{alon2020efficient} provided the first polynomial-time algorithms for the general $\varepsilon$-Consensus-Halving problem with a limited number of cuts (but still more than $2n$).}   Specifically, we present: \medskip
    \begin{itemize}
        \item[$\triangleright$] an algorithm for any $\varepsilon$, whose running time is polynomial in $1/\varepsilon$ and a parameter $d$ related to the maximum number of overlapping blocks.
        \item[$\triangleright$] a polynomial-time algorithm for $1/2$-Consensus-Halving. 
    \end{itemize}\medskip
    We complement our main results with a simple algorithm based on linear programming,
    which solves the problem for single-block valuations in polynomial-time, if one is allowed
    to use $2n-\ell$ cuts, for any constant $\ell$.\\
    
    \item[$\blacktriangleright$] As an application of the new ideas developed in our reduction, we obtain the first hardness result for a generalization of $\varepsilon$-Consensus-Halving, known as $\varepsilon$-\emph{Consensus-$1/k$-Division}, for $k \geq 3$. Specifically, we prove that $\varepsilon$-\emph{Consensus-$1/3$-Division} is PPAD-hard, when $\varepsilon$ is inverse-exponential.
\end{itemize}

\subsection{Discussion and Related Work}

The study of the Consensus-Halving problem dates back to the early 1940s and the work of \citet{neyman1946theoreme}. The first proof of existence for $n$ cuts can be traced back to the 1965 theorem of Hobby and Rice \citep{hobby1965moment}. The problem was famously studied in the context of Necklace Splitting, being a continuous analogue of the latter problem; in fact, most known proofs for Necklace Splitting go via the continuous version\footnote{This is true for the case of $2$ thieves. For $k$ thieves, the proofs go via the Consensus-$1/k$-Division problem instead.}  \citep{Goldberg1985,Alon1986}. The name \emph{Consensus-Halving} is attributed to \citet{SS03-Consensus}, who studied the continuous problem independently, and came up with a constructive proof of existence. Their construction, although yielding an exponential-time algorithm, was later adapted by \citet{filos2018hardness} to prove that the problem lies in the computational class PPA.

The class PPA was defined by \citet{Papadimitriou94-TFNP-subclasses} in his seminal paper in 1994, in which he also defined several other important subclasses of TFNP \citep{Megiddo1991}, the class of \emph{Total Search Problems in NP}, i.e., problems that \emph{always} have solutions which are efficiently verifiable. Among those classes, the class PPAD has been very successful in capturing the complexity of many interesting computational problems \citep{mehta2014constant,garg2018substitution,goldberg2019hairy,chen2013complexity}, highlighted by the celebrated results of \citet{Daskalakis2009} and \citet{chen2009settling} about the PPAD-completeness of computing a Nash equilibrium. On the contrary, since the definition of the class, PPA was not known to admit any natural complete problems, but rather mostly versions of PPAD-complete problems of a topological nature, defined on non-orientable spaces \citep{deng2016understanding,Grigni2001}. In 2015, \citet{ABB15-2DTucker} showed that the computational version of Tucker's Lemma \citep{tucker1945some}, already shown to be in PPA by \citet{Papadimitriou94-TFNP-subclasses}, is actually complete for the class.

Using the latter result as a starting point, \citet{FRG18-Consensus} proved that $\varepsilon$-Consensus-Halving is PPA-complete when $\varepsilon$ is inverse exponential. This was the first PPA-completeness result for a ``natural'' computational problem, where the term ``natural'' takes the specific meaning of a problem that does not have a polynomial-sized circuit in its definition. The quest for such problems that would be complete for PPA was initiated by Papadimitriou himself \citep{Papadimitriou94-TFNP-subclasses} and was later brought up again by several authors, including \citet{Grigni2001} and \citet{ABB15-2DTucker}. In the same paper, the authors also provided a computational equivalence between the $\varepsilon$-Consensus-Halving problem and the well-known Necklace Splitting problem of \citet{Alon87-Necklace} for $2$ thieves \citep{Goldberg1985,Alon1986}, when $\varepsilon$ is inverse-polynomial.
In \citep{FRG18-Necklace}, the authors strengthened their result to $\varepsilon$ being inverse-polynomial, which, together with the aforementioned result from \citep{FRG18-Consensus}, also provided a proof for the PPA-completeness of Necklace Splitting. As we mentioned earlier, besides being a strengthening, our PPA-hardness proof for $\varepsilon$-Consensus-Halving is a notable simplification over that of \citep{FRG18-Necklace}, and importantly, it holds for $\varepsilon$ which is inverse-polynomial. Therefore, we also obtain a new, simplified proof of PPA-hardness for Necklace Splitting with $2$ thieves.

For constant $\varepsilon$, the only hardness result that we know is the PPAD-hardness of \citet{filos2018hardness}, who also show that when $n-1$ cuts are allowed, deciding whether a solution exists is NP-hard. Recently, \citet{deligkas2019computing} studied the complexity of \emph{exact} Consensus-Halving and showed that the problem is FIXP-hard. Interestingly, the authors also introduced a new computational class, called BU (for Borsuk-Ulam) and showed that the problem lies in that class, leaving open the question of whether it is BU-complete. Very recently, using our new reduction presented in \cref{sec:ppahardness} as a starting point, \citet{BatziouHH21-consensus-BBU} showed that computing a \emph{strong} approximate solution of Consensus-Halving (with valuations represented by algebraic circuits) is complete for the class $\textup{BU}_a$ (the strong approximation version of the class BU).

If we generalize the number of labels to $\{1,2,\ldots,k\}$ rather that $\{+,-\}$, and we allow $(k-1)n$ cuts rather than only $n$, then we obtain a generalization of the Consensus-Halving problem which was referred to as \emph{Consensus-$1/k$-Division} in \citep{SS03-Consensus}. The existence of a solution for this problem can be proved via fixed-point theorems that generalize the Borsuk-Ulam theorem \citep{BSS81,Alon87-Necklace}, however very little is known about its complexity. One might feel inclined to believe that Consensus-$1/k$-Division is a harder problem that Consensus-Halving; however, note that in the former problem, we have more cuts at our disposal. In fact, \citet{FRG18-Necklace} conjectured that the complexities of the problems for different values of $k$ are incomparable, and are characterized by different complexity classes. The complexity classes that are believed to be the most related are called PPA-$k$, defined also by \citet{Papadimitriou94-TFNP-subclasses} in his original paper; we refer the reader to the recent papers of \citep{goos2019complexity,Hollender2019} for a more detailed discussion of these classes. As a matter of fact, in a recent paper we have shown \citep{FHSZ20} that the problem is in PPA-$k$, for any $k$ which is a prime power.

Before our paper, virtually nothing was known about the hardness of the problem when $k \geq 3$. While the techniques in \citep{FRG18-Necklace} were highly reliant on the presence of only two labels, our ideas do carry over to the case when $k=3$, which enables us to prove our PPAD-hardness result. While we do not expect the problem for $k\geq 3$ to be PPAD-complete, our proof offers important intuition about the intricacies of the problem and could be useful for proving stronger hardness results in the future.

\section{Preliminaries}\label{sec:preliminaries}

We start with the definition of the $\varepsilon$-approximate version of the \ch\ problem.

\begin{definition}[$\varepsilon$-\ch]
Let $k \geq 2$. We are given $\varepsilon > 0$ and a set $\mathcal{C}$ of continuous probability measures $\mu_1, \dots, \mu_n$ on $I=[0,1]$. The probability measures are given by their density functions on $I$. The goal is to partition the unit interval into $2$ (not necessarily connected) pieces $\plus$ and $\minus$ using at most $n$ cuts, such that $|\mu_j(\plus) - \mu_j(\minus)| \leq \varepsilon$ for all agents $j \in \{1,\ldots,n\}$.
\end{definition}

\noindent We will refer to the probability measures $\mu_1, \ldots, \mu_n$ as \emph{valuation functions} or simply \emph{valuations}. While the existence and PPA-membership results hold more generally, in this paper, we will restrict our attention to the case when the valuation functions are \emph{piecewise constant}. These can be represented explicitly in the input as endpoints and heights of value blocks.

\begin{definition}[\textsc{Piecewise constant valuation functions}]
  A valuation function $\mu_i$ is \emph{piecewise constant} over an interval $I$, if 
  the domain can be partitioned into a finite set of intervals such that the density of $\mu_i$
  is constant over each interval. 
\end{definition}

\noindent Piecewise constant functions are often referred to as \emph{step functions}. 

\begin{definition}[\textsc{Uniform valuation functions}]
We will consider the following subclasses of piecewise constant valuation functions.
\begin{itemize}
    \item[-] \textbf{\emph{Piecewise Uniform:}}, The domain can be partitioned into a finite set of intervals such that the density of $\mu_i$ is either $v_i$ or $0$ over each interval, for some constant $v_i$.\smallskip
    \item[-] \textbf{\emph{$d$-block Uniform:}} The domain can be partitioned into a finite set of intervals, such that in at most $d$ of those the density of $\mu_i$ is $v_i$ and everywhere else it is $0$, for some constant $v_i$.\smallskip
    \item[-] \textbf{\emph{$2$-block Uniform:}} $d$-block uniform valuations for $d=2$.\smallskip
    \item[-] \textbf{\emph{Single-block:}} $d$-block uniform valuations for $d=1$. Here we omit the term ``uniform'', as there is only a single value block.
\end{itemize}
\end{definition}
\noindent Obviously, piecewise constant $\supseteq$ piecewise uniform $\supseteq$ $2$-block uniform $\supseteq$ single-block. 

\subsection{The Computational Classes PPA and PPAD}\label{app:ppad}

As we mentioned in the introduction, \ch\ is a \emph{Total Search Problem in NP}, i.e., a problem with a guaranteed solution which is verifiable in polynomial time. The corresponding class is the class TFNP \citep{Megiddo1991}. Formally, a binary relation $P(x,y)$ is in the class TFNP if for every $x$, there exists a $y$ of size bounded by a polynomial in $|x|$ such that $P(x,y)$ holds and $P(x,y)$ can be verified in polynomial time. The problem is given $x$, to find such a $y$ in polynomial time.

The subclasses of TFNP that will be relevant for this paper are PPAD and PPA \citep{Papadimitriou94-TFNP-subclasses}. These are defined via their canonical problems, \textsc{End-of-Line} and \textsc{Leaf} respectively.

\begin{definition}[\textsc{End-of-Line}]
	The input to the \textsc{End-of-Line} problem consists of two Boolean circuits $S$ (for successor) and $P$ (for predecessor) with $n$ inputs and $n$ outputs such that $P(0^n)=0^n \neq S(0^n)$, and the goal is to find a vertex $x$ such that $P(S(x)) \neq x$ or $S(P(x))\neq x \neq 0^n$.
\end{definition}	
\noindent A problem is \emph{in} PPAD if it is polynomial-time reducible to {\sc End-Of-Line} and it is PPAD-\emph{complete} if {\sc End-Of-Line} reduces to it in polynomial-time. Intuitively, PPAD is defined with respect to a directed graph of exponential size, which is given \emph{implicitly} as input, via the use of the predecessor and successor circuits defined above. PPAD is a subclass of PPA, which is defined similarly, but with respect to an undirected graph and a circuit that outputs the neighbours of a vertex. Its canonical computational problem is called {\sc Leaf}, which is defined below.

\begin{definition}[\textsc{Leaf}]
The input to the \textsc{Leaf} problem is a Boolean circuit $C$ with $n$ inputs and at most $2n$ outputs, outputting the set $\mathcal{N}(y)$ of (at most two) neighbors of a vertex $y$, such that $|\mathcal{N}(0^n)|=1$, and the goal is to find a vertex $x$ such that $x \neq 0^n$ and $|\mathcal{N}(x)|=1$.
\end{definition}	
\noindent A problem is the class PPA if it is polynomial-time reducible to \textsc{Leaf} and is PPA-complete if \textsc{Leaf} reduces to it in polynomial time.

\subsection{High-dimensional Tucker}

Our reduction in \cref{sec:ppahardness} will start from the following problem, which is an $N$-dimensional variant of the $2D$-\textsc{Tucker} problem \citep{Papadimitriou94-TFNP-subclasses,ABB15-2DTucker}.

\begin{definition}[\textsc{high-D-Tucker}]\label{def:tucker}
An instance of \textsc{high-D-Tucker} consists of a labeling $\lambda: [8]^N \to \{\pm 1, \dots, \pm N\}$ computed by a Boolean circuit. We further assume that the labeling is antipodally anti-symmetric (i.e.\ for all $x$ on the boundary of $[8]^N$ it holds that $\lambda(\overline{x}) = - \lambda(x)$ where $\overline{x}_i = 9-x_i$ for all $i$), which can be enforced syntactically. A solution consists of two points $x,y \in [8]^N$ with $\lambda(x) = - \lambda(y)$ and $\|x-y\|_\infty \leq 1$.
\end{definition}

\noindent \citet{FRG18-Necklace} showed that the problem is PPA-hard, when the domain is $[7]^N$ instead of $[8]^N$. We adapt the hardness to the case of \cref{def:tucker} in the theorem below. 

\begin{theorem}\label{thm:tuckerppa}
\textsc{high-D-Tucker} is \textup{PPA}-complete.
\end{theorem}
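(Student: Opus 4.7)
The plan is to establish both directions of PPA-completeness separately. Containment in PPA follows from essentially the same argument as for the two-dimensional Tucker problem of \citet{Papadimitriou94-TFNP-subclasses}: Tucker's lemma guarantees existence, and the standard parity/path-following argument on the triangulation of the cube produces a reduction to \textsc{Leaf}, which goes through in any fixed dimension (the dimension $N$ is encoded in unary relative to the circuit size). So the substantive part is PPA-hardness, which I will obtain by reducing from the $[7]^N$-version of \textsc{high-D-Tucker} established in \citep{FRG18-Necklace}.

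The core object is a coordinatewise \emph{fold} map $\varphi : [8] \to [7]$, defined by
\[
\varphi(1)=1,\ \varphi(2)=2,\ \varphi(3)=3,\ \varphi(4)=\varphi(5)=4,\ \varphi(6)=5,\ \varphi(7)=6,\ \varphi(8)=7.
\]
Two properties drive the reduction: (i) \emph{antipodal compatibility}, namely $\varphi(9-x)=8-\varphi(x)$ for every $x\in[8]$ (a direct case check on the four pairs $\{1,8\},\{2,7\},\{3,6\},\{4,5\}$), and (ii) \emph{1-Lipschitzness}, namely if $|x-y|\le 1$ in $[8]$ then $|\varphi(x)-\varphi(y)|\le 1$ in $[7]$, since the sequence of values is nondecreasing with consecutive differences in $\{0,1\}$. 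Moreover, $\varphi(x)\in\{1,7\}$ iff $x\in\{1,8\}$, so the boundary of $[8]^N$ is mapped into the boundary of $[7]^N$ under the componentwise extension.

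Given a $[7]^N$-Tucker instance $\lambda':[7]^N\to\{\pm1,\dots,\pm N\}$, define the $[8]^N$-instance $\lambda(x)=\lambda'\bigl(\varphi(x_1),\dots,\varphi(x_N)\bigr)$. Property (i) applied componentwise shows that $\lambda$ inherits antipodal anti-symmetry on the boundary of $[8]^N$ from $\lambda'$, so $\lambda$ is a valid \textsc{high-D-Tucker} instance; its circuit is constructed in polynomial time from the circuit of $\lambda'$. Given any solution $(x,y)$ to $\lambda$, property (ii) yields $\|\varphi(x)-\varphi(y)\|_\infty\le 1$, and by construction $\lambda'(\varphi(x))=-\lambda'(\varphi(y))$, so $(\varphi(x),\varphi(y))$ solves $\lambda'$ (the two images are distinct because their labels are nonzero and opposite). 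This finishes the reduction.

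The main obstacle I would flag is precisely the design of $\varphi$: because $7$ is odd and $8$ is even, no componentwise bijection $\varphi:[7]\to[8]$ can satisfy $\varphi(x)+\varphi(8-x)=9$, since summing over $x\in[7]$ would give $2\sum_x\varphi(x)=63$, impossible over the integers. This parity mismatch forces us to go in the opposite direction with a surjective fold that pairs the two central values $\{4,5\}$ of $[8]$ to the single fixed point $4$ of the $[7]$-antipodal action, and to verify carefully that this identification does not destroy either the antipodal anti-symmetry of the new instance or the $\ell_\infty$-closeness condition of solutions. Once $\varphi$ is in hand, however, the rest of the argument is a routine composition.
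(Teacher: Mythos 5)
Your proposal is correct and follows essentially the same route as the paper: your fold map $\varphi$ is exactly the paper's operator $\widehat{\cdot}$ (identity on $\{1,\dots,4\}$, subtract one on $\{5,\dots,8\}$), and the two properties you isolate (antipodal compatibility $\varphi(9-x)=8-\varphi(x)$ and $1$-Lipschitzness preserving $\ell_\infty$-closeness) are precisely the checks the paper performs, with membership likewise delegated to \citet{Papadimitriou94-TFNP-subclasses}. The parity obstruction you note for a bijective embedding is a nice piece of motivation but does not change the argument.
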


\begin{proof}
\citet{Papadimitriou94-TFNP-subclasses} has shown that the problem lies in PPA. In order to show PPA-hardness, we use the fact that \citet{FRG18-Necklace} have proved that the problem is PPA-hard on the domain $[7]^N$ (instead of $[8]^N$) by using a standard snake-embedding technique \citep{chen2009settling,deng2017octahedral}.

Let $\lambda$ be an instance of \textsc{High-D-Tucker} but on the domain $[7]^N$ instead of $[8]^N$. We will reduce this to an instance $\lambda'$ of \textsc{High-D-Tucker} (on our standard domain $[8]^N$). In the two-dimensional case ($N=2$), the high level idea of the reduction is to take the domain $[7]^N$ and duplicate the central vertical and horizontal lines of the grid (thus also duplicating the labels at these grid points).

Formally, we proceed as follows. Define the operator $\widehat{\cdot}$ such that for any $r \in [8]$
\begin{equation*}\widehat{r} := \left\{ \begin{tabular}{lc}
    $r - 1$ &  \textup{if } $r \geq 5$\\
    $r$ & \textup{if } $r \leq 4$
\end{tabular} \right.
\end{equation*}
Then, for any $x=(x_1, \dots, x_N) \in [8]^N$, let $\widehat{x} = (\widehat{x}_1, \dots, \widehat{x}_N) \in [7]^N$. Now define $\lambda'$ such that for all $x \in [8]^N$, $\lambda'(x) := \lambda(\widehat{x})$. This is well-defined and given a circuit that computes $\lambda$, we can construct a circuit for $\lambda'$ in polynomial time.

Let us first show that if $\lambda$ is antipodally anti-symmetric on $[7]^N$, then $\lambda'$ is antipodally anti-symmetric on $[8]^N$. Consider any $x \in \partial([8]^N)$, i.e.\ there exists $j \in [N]$ such that $x_j \in \{1,8\}$. Note that we then have $\widehat{x} \in \partial([7]^N)$, because $\widehat{x}_j \in \{1,7\}$. Thus, we know that $\lambda(8-\widehat{x}_1, \dots, 8-\widehat{x}_N) = - \lambda(\widehat{x}_1, \dots, \widehat{x}_N)$. Using the key observation that $\widehat{9-x_i} = 8-\widehat{x}_i$ for all $i \in [N]$, we obtain that
$$\lambda'(9-x_1, \dots, 9-x_N) = \lambda(8-\widehat{x}_1, \dots, 8-\widehat{x}_N) = - \lambda(\widehat{x}_1, \dots, \widehat{x}_N) = -\lambda'(x_1, \dots, x_N).$$

It remains to show that given any solution to $\lambda'$, we can retrieve in polynomial time a solution to $\lambda$. Let $x,y \in [8]^N$ be such that $\lambda'(x) = -\lambda'(y)$ and $\|x-y\|_\infty \leq 1$. Then, we immediately obtain that $\lambda(\widehat{x}) = -\lambda(\widehat{y})$ and it remains to show that $\|\widehat{x}-\widehat{y}\|_\infty \leq 1$. Consider any $i \in [N]$. If $x_i,y_i \geq 5$ or if $x_i,y_i \leq 4$, then in both cases $|x_i-y_i| \leq 1$ implies $|\widehat{x}_i-\widehat{y}_i| \leq 1$. If $x_i \geq 5$ and $y_i \leq 4$, then $|x_i-y_i| \leq 1$ implies that $x_i=5$ and $y_i=4$ and we get $|\widehat{x}_i-\widehat{y}_i| = 0 \leq 1$. The remaining case is analogous. Thus, we have shown that $\widehat{x}, \widehat{y}$ form a solution to $\lambda$.
\end{proof}

\section{\ch\ with two-block uniform valuations is PPA-hard}\label{sec:ppahardness}

In this section, we present our first result, regarding the PPA-hardness of \ch. 

\begin{theorem}\label{thm:ppahardness}
$\varepsilon$-\ch\ is PPA-hard, when $\varepsilon$ is inverse-polynomial and the agents have two-block uniform valuations.
\end{theorem}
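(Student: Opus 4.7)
The plan is to reduce from \textsc{high-D-Tucker} (\cref{thm:tuckerppa}). Given a Boolean circuit $C$ computing a labeling $\lambda : [8]^N \to \{\pm 1, \dots, \pm N\}$, I will build an instance of $\varepsilon$-\ch\ with two-block uniform valuations and $\varepsilon = 1/\mathrm{poly}(N)$, so that every solution encodes a Tucker solution, i.e., a pair $x, y \in [8]^N$ with $\|x-y\|_\infty \leq 1$ and $\lambda(x) = -\lambda(y)$.

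The unit interval $[0,1]$ is split into three kinds of regions. First, a \emph{coordinate-encoding} region, where $2N$ designated cuts encode two points $x,y \in [8]^N$: each coordinate sub-interval is implicitly divided into $8$ buckets, and the bucket containing the coordinate's cut gives its integer value. Second, a \emph{wire} region of $C$, with one narrow sub-interval per internal wire, where the $+/-$ imbalance inside the sub-interval represents a Boolean truth value. Third, an \emph{output/verification} region, which will force the antipodality condition $\lambda(x) = -\lambda(y)$ on the circuit outputs.

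For each region I design a family of agents all of which are two-block uniform. \emph{Coordinate} agents pin the $2N$ coordinate cuts to valid bucket positions up to an inverse-polynomial slack, via pairs of value blocks placed symmetrically around each candidate position so that any near-equal $+/-$ split must contain a cut in the narrow support. \emph{Circuit} agents implement the standard basis of gates (NOT, AND, OR, COPY, FAN-OUT) by distributing each gate's logic across several agents, each of which touches exactly two wire regions; since every gate couples only a constant number of wires, its semantics can always be expressed as a collection of pairwise constraints, which is exactly what a two-block uniform valuation can enforce. \emph{Output} agents put one block on a wire carrying a bit of $\lambda(x)$ and the other on the corresponding bit of $\lambda(y)$, so that consensus-halving forces the label sum to vanish. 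The total number of agents will be made exactly equal to the number of cuts, meeting \ch's $n$-cut budget.

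The main obstacle is that the previous PPA-hardness proofs of \citet{FRG18-Consensus, FRG18-Necklace} used agents with polynomially many blocks per agent to realize a single gadget, so I have to redesign the coordinate and gate gadgets so that each agent sees only two regions of the interval. The delicate point is robustness: because the truth value on a wire is only approximately the ``right'' value, errors can accumulate along the circuit, and one must carefully amplify the imbalance at each gate and carefully choose the block heights so that the accumulated error along any path of $C$ stays below the inverse-polynomial $\varepsilon$ that a recovered Tucker solution can tolerate. Once this error analysis is in place, reading $x$ and $y$ from the coordinate cuts and invoking correctness of the circuit gadgets yields a Tucker solution in polynomial time, completing the reduction.
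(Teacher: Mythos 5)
Your proposal correctly identifies the starting problem (\textsc{high-D-Tucker}) and the need to rebuild gadgets so each agent has only two blocks. However, the overall architecture you describe does not work, for a reason that goes beyond the error analysis you defer.

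You propose to encode \emph{two} points $x,y\in[8]^N$ in the coordinate region, have the circuit compute $\lambda(x)$ and $\lambda(y)$, and then use ``output agents'' whose balance condition ``forces'' $\lambda(x)=-\lambda(y)$. This has no forcing mechanism. A Consensus-Halving instance is total: any construction has a solution, but nothing in what you describe ties that solution to a Tucker solution. Concretely: (i) an output agent with one block on the wire of a bit of $\lambda(x)$ and one on the corresponding bit of $\lambda(y)$ is already $\varepsilon$-satisfied if those wire values are near $0$ (not near $\pm1$), so its balance condition does not force the two bits to be complementary perfect bits; (ii) nothing in your construction forces $\|x-y\|_\infty\leq 1$, so even if $\lambda(x)=-\lambda(y)$ did hold you would not recover a Tucker solution; (iii) you never say what prevents the coordinate cuts from landing exactly on a bucket boundary where bit extraction is undefined, and since you directly verify a single pair $(x,y)$ you have no room to perturb around this.

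The paper's reduction is structurally different. It encodes a \emph{single} point $x\in[-1,1]^N$ via $N$ free cuts, and the ``verification'' is replaced by a feedback loop: a polynomial number $p(N)$ of circuit-simulators $C_1,\dots,C_{p(N)}$ each evaluate $\lambda$ at slightly displaced points $T[x+j\alpha]$, and a feedback agent $f_i$ per dimension forces the \emph{average} of their $i$-th outputs to be near zero. If all nearby points had the same label, this average would be bounded away from zero, contradicting $f_i$; hence two of the $p(N)$ nearby points must carry opposite labels, yielding a Tucker solution. The displacement plus averaging simultaneously handles bit-extraction failures (at most $N$ of the $p(N)$ simulators can fail), stray cuts, and boundary antipodality (via the equivariance of the gates). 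None of these ideas appears in your sketch, and without a fixed-point/feedback structure of this kind, the reduction does not close. You would need to replace your ``encode $x,y$ and verify'' paradigm with something that ties the totality of \ch\ to the totality of Tucker.

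One smaller issue: you say coordinate agents ``pin the $2N$ coordinate cuts to valid bucket positions.'' In the paper the coordinate cuts are exactly the \emph{un}constrained cuts: every circuit agent forces one cut into a fixed interval, the $N$ feedback agents are the only agents not forcing a cut, hence at most $N$ cuts are free to lie in the coordinate region. Introducing coordinate-pinning agents of your own would consume part of the cut budget and break this accounting; you would need to re-count how many cuts each region can hold.
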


\noindent As we mentioned in the Introduction, \cref{thm:ppahardness} is a strengthening of the result of \citet{FRG18-Necklace}, which requires the valuation functions to have a polynomial number of value blocks, and which is seemingly very difficult to extend to two-block uniform valuations. To achieve this stronger result, we have to develop new gadgetry, based on a new interpretation of the cut positions with respect to the positions of points in the domain of \textsc{high-D-Tucker}. As it turns out, this new interpretation allows us to obtain a new proof of the main theorem of \citet{FRG18-Necklace}, one which is conceptually much simpler, even though it actually applies to more restricted valuations.

Before we proceed, we first remark the following. In \citep{filos2018hardness} (where the PPAD-hardness of $\varepsilon$-\ch\ was proven for constant $\varepsilon$) the authors presented a simple argument that allowed them to extend their hardness result to $n+c$ cuts, where $c$ is some constant. The idea is to make $c+1$ \emph{completely disjoint} copies of the instance of $\varepsilon$-\ch, and solve it using $n+c$ cuts. One of the copies would have to be solved using at most $n$ cuts, which is a PPAD-hard problem. We observe that the same principle applies generically (beyond PPAD-hardness and also to the results of \citet{FRG18-Consensus,FRG18-Necklace}), and in fact extends to $n+n^{1-\delta}$ cuts, where $\delta >0$ is some constant. From \cref{thm:ppahardness}, we obtain the following corollary.   

\begin{corollary}\label{cor:morecuts}
$\varepsilon$-\ch\ is PPA-hard, when $\varepsilon$ is inverse polynomial and the agents have two-block uniform valuations, even when one is allowed to use $n+n^{1-\delta}$ cuts, for constant $\delta >0$.
\end{corollary}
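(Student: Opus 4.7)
The plan is to reduce the plain $n$-cut version established in \cref{thm:ppahardness} to the version with $n + n^{1-\delta}$ cuts, via a disjoint-copies construction that quantitatively sharpens the constant-$c$ trick from \citep{filos2018hardness}. Given a PPA-hard instance $\mathcal{I}$ with $n$ agents and two-block uniform valuations, I would partition $[0,1]$ into $m$ equal-length sub-intervals $J_1, \dots, J_m$ separated by thin buffer zones on which no agent has support, for a parameter $m$ to be chosen below. On each $J_k$ I place an affinely transformed copy of $\mathcal{I}$, rescaling the probability measures so that each transplanted agent is still a probability measure on $[0,1]$ whose two value blocks both lie in $J_k$. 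The new instance $\mathcal{I}'$ has $N := mn$ agents, each still two-block uniform, and the coordinates of $\mathcal{I}'$ can be written down in time polynomial in $n$ and $m$.

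Given any solution to $\mathcal{I}'$ that uses at most $N + N^{1-\delta}$ cuts, let $c_k$ denote the number of those cuts falling strictly inside $J_k$. Since every agent of copy $k$ is supported entirely in $J_k$, its values on $\plus$ and $\minus$ depend only on the restriction of the labeling to $J_k$, which is determined by those $c_k$ cuts. Hence, as soon as $c_{k^*} \leq n$ for some index $k^*$, the restriction of the global labeling to $J_{k^*}$ is already a valid solution of the original instance $\mathcal{I}$ using at most $n$ cuts, and can be read off in polynomial time.

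The counting step is the only genuine constraint. If every $c_k$ were at least $n+1$, the total number of cuts would be at least $m(n+1) = mn + m$, whereas we only have $mn + (mn)^{1-\delta}$ cuts available. So it suffices to pick $m$ with $(mn)^{1-\delta} < m$, i.e.\ $m > n^{(1-\delta)/\delta}$; concretely I would set $m := \lceil n^{(1-\delta)/\delta} \rceil + 1$, and then pigeonhole forces some $c_{k^*} \leq n$. Since $\delta>0$ is a fixed constant, $N = mn = O(n^{1/\delta})$, so the whole reduction runs in polynomial time.

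I do not expect any serious obstacle. The routine items to verify are (i) that the two-block uniform structure and the probability-measure normalization are preserved by the affine rescaling onto $J_k$ (they clearly are), and (ii) that a cut landing exactly on the boundary between $J_k$ and a buffer zone causes no accounting trouble, which follows because buffer zones are null sets for every agent and their label is therefore irrelevant to the consensus constraint. The only subtle point is calibrating $m$ large enough that the surplus of $n^{1-\delta}$ extra cuts cannot be distributed so as to give every copy an additional cut, which is exactly what the choice $m > n^{(1-\delta)/\delta}$ guarantees; everything else is bookkeeping.
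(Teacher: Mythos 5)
Your proposal is correct and follows essentially the same route as the paper, which justifies \cref{cor:morecuts} by exactly this disjoint-copies pigeonhole argument (the constant-$c$ trick of \citep{filos2018hardness} applied with polynomially many copies of the \cref{thm:ppahardness} instance). Your explicit calibration $m > n^{(1-\delta)/\delta}$, giving $N = mn = \mathrm{poly}(n)$ so that the inverse-polynomial $\varepsilon$ and the reduction size remain polynomial, is precisely the quantitative content the paper leaves implicit.
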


\noindent We are now ready to prove \cref{thm:ppahardness}. We first provide an overview of the reduction and we highlight the main simplifications over the proof of \citet{FRG18-Necklace}. Then we proceed to formally present the proof of \cref{thm:ppahardness}.

\subsection{Overview of the reduction}\label{sec:overview}

We are given an instance of \textsc{high-D-Tucker}, namely a labeling $\lambda: [8]^N \to \{\pm 1, \dots, \pm N\}$ computed by a Boolean circuit. We will show how to construct an instance of \textsc{Consensus-Halving} in polynomial time such that any $\varepsilon$-approximate solution yields a solution to the \textsc{high-D-Tucker} instance (for some inversely-polynomial $\varepsilon$). The complexity will be measured with respect to the representation size of the \textsc{high-D-Tucker} instance, i.e.,\ the size of the circuit $\lambda$ (which is also at least $N$).

For clarity and convenience, the instance of \textsc{Consensus-Halving} we will construct will not be defined on the domain $[0,1]$, but instead on some interval $[0,M]$, where $M$ is bounded by a polynomial in the size of the \textsc{high-D-Tucker} circuit $\lambda$. It is easy to transform this into an instance on $[0,1]$ by just re-scaling the valuation functions, namely scaling down the positions of the blocks by $M$ and scaling up the heights of the blocks by $M$.

\paragraph{\textbf{Overview.}} Let us first provide a very high-level description of the instance we construct. Similarly to \citep{FRG18-Necklace}, the left-most end of the instance will be the \emph{Coordinate-Encoding} region. In any solution $S$ to the instance, the way in which this region is divided amongst the labels $+$ and $-$ will represent a point $x \in [-1,1]^N$. A \emph{circuit-simulator} $C$ will read-in the coordinates of $x$, perform some computations (including a simulation of $\lambda$) and output $N$ values $[C(x)]_1, \dots, [C(x)]_N \in [-1,1]$. This circuit-simulator will be implemented by a set of agents where each agent will enforce one gate/operation of the circuit. Unfortunately, the circuit-simulator can sometimes fail to perform the desired computation, so instead of one circuit-simulator $C$ we will actually have a polynomial number $p(N)$ of circuit-simulators $C_1, \dots, C_{p(N)}$. Each of these circuit-simulators will be performing (almost) the same computation. Finally, we will introduce a \emph{Feedback region} where $N$ feedback agents $f_1, \dots, f_N$ will implement the \emph{feedback mechanism}. For each $i \in \{1, \dots, N\}$, feedback agent $f_i$ will ensure that $\frac{1}{p(N)} \sum_{j=1}^{p(N)} [C_j(x)]_i \approx 0$. Namely, it will ensure that the average of the outputs in dimension $i$ is close to zero. We will show that from any solution $S$ to the \textsc{Consensus-Halving} instance, we obtain a solution to the original \textsc{high-D-Tucker} instance. To be a bit more precise, from the point $x \in [-1,1]^N$ encoded by the Coordinate-Encoding region in a solution $S$, we will be able to extract a polynomial number of points on the \textsc{high-D-Tucker} grid, with the guarantee that two of these points form a solution (which can then be identified efficiently).

\paragraph{\textbf{Encoding of a value in $[-1,1]$.}}
Given any solution $S$ of our instance, every interval $I$ of length $1$ of the domain encodes a value in $[-1,1]$ as follows. Let $\plus$ and $\minus$ denote the subsets of $I$ labeled respectively $+$ and $-$ in the solution $S$. Then the value encoded by $I$, $v_S(I)$, is given by $\mu(\plus) - \mu(\minus)$, where $\mu$ is the Lebesgue measure on $\mathbb{R}$. Since there are at most $n$ cuts (where $n$ is the number of agents in the instance), $\plus$ is the union of at most $n+1$ disjoint sub-intervals of $I$ and $\mu(\plus)$ is simply the sum of the lengths of these intervals (and the same holds for $\minus$). It is easy to see that $v_S(I)=0$ corresponds to $I$ being perfectly shared between $+$ and $-$ in $S$, whereas $v_S(I)=+1$ corresponds to the whole interval $I$ being labeled $+$. We will drop the subscript $S$ and just use $v(I)$ in the remainder of this exposition.

\paragraph{\textbf{Coordinate-Encoding region.}} The sub-interval $[0,N]$ of the domain is called the \emph{Coordinate-Encoding} region. Indeed, the way in which this region is subdivided amongst the $+$ and $-$ labels in a solution $S$ will encode the coordinates of a point in $x \in [-1,1]^N$. In more detail, $x_1 \in [-1,1]$ will be given by $v([0,1])$, i.e.,\ the value encoded by interval $[0,1]$. Similarly, $x_2 \in [-1,1]$ will be given by $v([1,2])$, $x_3 \in [-1,1]$ by $v([2,3])$, etc.

\paragraph{\textbf{Constant-Creation region.}} The sub-interval $[N,N+p(N)]$ of the domain is called the \emph{Constant-Creation} region. This region will be used to create the constants that the circuit-simulators need. The circuit-simulator $C_1$ will read-in the value $v([N,N+1]) =: \textup{const}_1$ and will assume that it corresponds to the value $+1$. Note that given the constant $+1$, the circuit-simulator can create any constant $\zeta \in [-1,1]$ by using a $\times \zeta$-gate (multiplication by the constant $\zeta$). Similarly, the circuit-simulator $C_2$ will read-in the value $v([N+1,N+2]) =: \textup{const}_2$ and use it as the constant $+1$, and so on for $C_3, C_4, \dots, C_{p(N)}$.

If $S$ is a solution such that the Constant-Creation region does not contain any cut, then the whole region will have the same label, and without loss of generality we can assume that this label is $+$. Thus, in such a solution $S$, all the circuit-simulators will indeed read-in the constant $+1$ from the Constant-Creation region, i.e.,\ we will indeed have $\textup{const}_j = +1$ for all $j=1, \dots, p(N)$.

\paragraph{\textbf{Circuit-Simulation regions.}} For each $j \in \{1,2, \dots, p(N)\}$, the sub-interval $[N+p(N)+(j-1)q,N+p(N)+jq]$ of the domain will be used by the circuit-simulator $C_j$. The length $q$ used by every circuit-simulator will be upper-bounded by some polynomial in $N$ and the size of the circuit $\lambda$. Every circuit-simulator $C_j$ will read-in the coordinates $x_1, \dots, x_N \in [-1,1]$ of the point $x$ from the Coordinate-Encoding region, as well as the value $\textup{const}_j \in [-1,1]$ from the Constant-Creation region (and assume that it corresponds to the constant $+1$). Using these values, $C_j$ will perform some computations, including a simulation of the Boolean circuit $\lambda$, and finally output $N$ values $[C_j(x,\textup{const}_j)]_1, \dots, [C_j(x,\textup{const}_j)]_N \in [-1,1]$ into the \emph{Feedback region}.

\paragraph{\textbf{Feedback region.}}
The Feedback region is located at the right end of the domain and is subdivided into $N$ intervals $F_1, \dots, F_N$ of length $p(N)$ each. For every $j \in [p(N)]$, let $F_i(j)$ denote the $j$th sub-interval of length $1$ of $F_i$. The $i$th output of circuit-simulator $C_j$ will be located in sub-interval $F_i(j)$. In other words, $v(F_i(j)) = [C_j(x,\textup{const}_j)]_i$.

Every interval $F_i$ will have a corresponding \emph{feedback agent} $f_i$, who will ensure that the average of all the outputs in interval $F_i$ is close to zero. In more detail, agent $f_i$ will have a single block of value that covers interval $F_i$. As a result, this agent will be satisfied only if $\frac{1}{p(N)}\sum_{j=1}^{p(N)} v(F_i(j)) \in [-\varepsilon, \varepsilon]$.

\paragraph{\textbf{Stray Cuts.}} Any agent belonging to a circuit-simulator performs a gate operation. In \cref{sec:gates}, we introduce the different types of gates and how they are implemented by agents. One important feature of the agents implementing the gates is that every such agent ensures that at least one cut must lie in a specific interval $J$ of the domain (in any solution $S$). By construction, we will make sure that these intervals are pairwise disjoint for different agents. Thus, every agent introduced as part of a circuit-simulator will force one cut to lie in a specific interval.

The only agents that are not part of a circuit-simulator are the feedback agents $f_1, \dots, f_N$. Since the number of cuts in any solution is at most the number of agents, there are at most $N$ cuts that are not constrained to lie in some specific interval. We call these the \emph{free cuts}. The free cuts can theoretically ``go'' anywhere in the domain and interfere with the correct functioning of the circuit-simulators or the Constant-Creation region. The expected behavior of these $N$ free cuts is that they should lie in the Coordinate-Encoding region. As such, any of the free cuts that lies outside the Coordinate-Encoding region will be called a \emph{stray cut} (following \citep{FRG18-Necklace}).

\begin{observation}\label{obs:stray}
If there is at least one stray cut, then the point $x \in [-1,1]^N$ encoded by the Coordinate-Encoding region lies on the boundary of $[-1,1]^N$ (i.e.,\ there exists $i$ such that $|x_i|=1$).
\end{observation}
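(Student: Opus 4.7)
I would prove the contrapositive: assuming the encoded point $x$ lies in the open interior $(-1,1)^N$, show that no stray cut can exist. The plan is a two-part counting argument, one side lower-bounding cuts inside the Coordinate-Encoding region and the other side upper-bounding the number of free cuts.

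First I would establish the lower bound. The value encoded by the unit interval $[i-1,i]$ equals $\mu(\plus \cap [i-1,i]) - \mu(\minus \cap [i-1,i])$, and since Lebesgue measure ignores single endpoints, a cut placed exactly at $i-1$ or $i$ cannot change the label pattern inside the open interval $(i-1,i)$. Hence $v([i-1,i]) \in (-1,1)$ forces at least one cut to lie strictly inside $(i-1,i)$. Under the assumption $|x_i|<1$ for every $i \in \{1,\dots,N\}$, summing over the pairwise disjoint open intervals $(0,1),(1,2),\dots,(N-1,N)$ yields at least $N$ distinct cuts inside $(0,N)$, i.e., inside the Coordinate-Encoding region.

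Next I would upper-bound the free cuts. The agents in the instance split into two groups: the $N$ feedback agents $f_1,\dots,f_N$, and all the remaining agents, which belong to circuit-simulators. By the design of the gate gadgets (as recalled in the paragraph on stray cuts), every circuit-simulator agent forces at least one cut into a specific interval $J$, and these intervals are pairwise disjoint across agents and contained in the circuit-simulation regions, which are disjoint from $[0,N]$. Since the total number of cuts is at most the number of agents $n$, and at least $n-N$ of them are already consumed by the circuit-simulator agents' forcing intervals, at most $N$ cuts are free.

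Combining the two bounds finishes the argument: the $\ge N$ cuts inside $(0,N)$ lie in none of the circuit-simulator forcing intervals, so they are all free cuts, and they already saturate the budget of $N$ free cuts. Consequently no free cut can lie outside $[0,N]$, i.e., there is no stray cut, proving the contrapositive. The only subtlety to watch for is the boundary behaviour of cuts at integer positions in $[0,N]$, which is handled by the ``strictly inside $(i-1,i)$'' observation; beyond that, the proof is a clean pigeonhole between forced and free cuts.
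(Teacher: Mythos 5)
Your proof is correct and matches what the paper leaves implicit: the observation is stated without an explicit proof, but the surrounding discussion of forcing intervals, free cuts, and stray cuts sets up exactly the pigeonhole argument you give. Your contrapositive counting — at least one cut strictly inside each unit interval $(i-1,i)$ when $|x_i|<1$, plus $n-N$ cuts consumed by the pairwise-disjoint forcing intervals of circuit-simulator agents, against a budget of $n$ — is the intended reasoning, and your care about cuts exactly at integer endpoints is a sensible touch.
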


\paragraph{\textbf{Stray Cut interference.}} There are two ways for a stray cut to cause trouble:
\begin{enumerate}
\item it can \emph{corrupt} a circuit, i.e.,\ interfere with the correct functioning of the gates of a circuit-simulator. If the cut lies in the region of circuit-simulator $C_j$, then it can make a gate output the wrong result (i.e.,\ not perform the desired operation). If the cut lies in the Constant-Creation region and intersects the interval that is used by circuit-simulator $C_i$ to read-in the constant $\textup{const}_j$, then it can have an effect such that $|\textup{const}_j| \neq 1$. However, in any case, a single stray cut can only interfere with one circuit-simulator in this way. Thus, at most $N$ circuit-simulators can suffer from this kind of interference. We will choose $p(N)$ large enough so that these corrupted circuit-simulators have a very limited influence.
\item it can interfere with the sign of $\textup{const}_j$ for many circuit-simulators $C_j$. Indeed, even a single stray cut can ensure that half of our circuit-simulators read-in the constant $+1$ and the other half read-in the constant $-1$. We will show that this is actually not a problem, and that it does not produce bogus solutions. Since stray cuts can only occur when $x$ lies on the boundary of $[-1,1]^N$ (\cref{obs:stray}), the Tucker boundary conditions will be important for this.
\end{enumerate}
Stray cuts that end up in the Feedback region do not have any effect. Indeed, the feedback agents $f_1, \dots, f_N$ are immune to stray cuts. They always ensure that the average of the outputs is close to zero. Thus, a stray cut can only influence the outputs that a feedback agent sees (as detailed above), but not its functionality.

\paragraph{\textbf{Circuit-Simulator failure.}} There are two ways in which a circuit-simulator can fail to have the desired output:
\begin{enumerate}
\item it is corrupted by a stray cut. This can happen to at most $N$ circuit-simulators.
\item it can fail in extracting the binary bits from (a point close to) $x$. We will ensure that this can happen to at most $N$ circuit-simulators.
\end{enumerate}
Thus, at most $2N$ circuit-simulators fail, i.e.,\ at least $p(N)-2N$ circuit-simulators have the desired output.

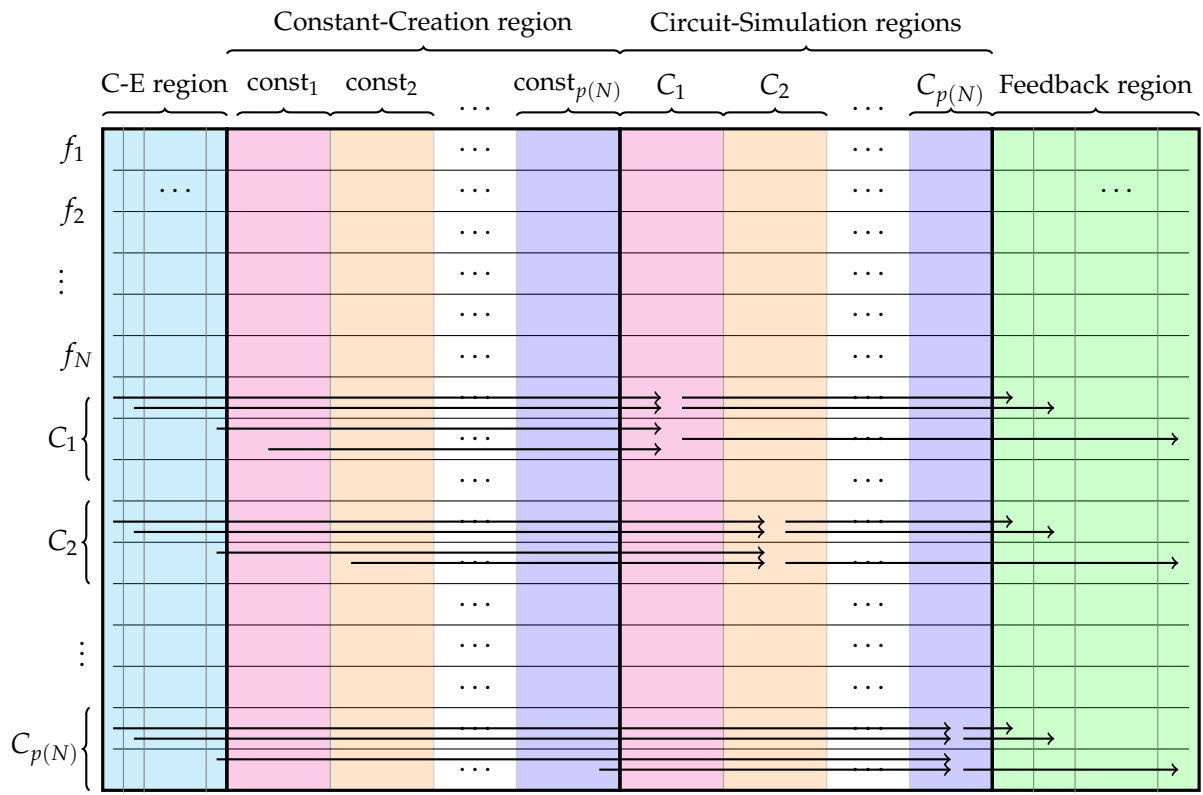
\begin{figure}
    \centering
    \scalebox{0.85}{\begin{tikzpicture}[scale=0.5]
\tikzstyle{myedgestyle} = [cyan]

\node (v1) at (-22,11.5) {};
\node (v2) at (4.5,11.5) {};
\draw  (v1) edge (v2);
\node (v3) at (-22,10.5) {};
\node (v4) at (4.5,10.5) {};
\draw  (v3) edge (v4);
\node (v5) at (-22,9.5) {};
\node (v6) at (4.5,9.5) {};
\draw  (v5) edge (v6);
\node (v7) at (-22,8.5) {};
\node (v8) at (4.5,8.5) {};
\draw  (v7) edge (v8);
\node (v9) at (-22,7.5) {};
\node (v11) at (-22,6.5) {};
\node (v13) at (-22,5.5) {};
\node (v15) at (-22,4.5) {};
\node (v17) at (-22,3.5) {};
\node (v20) at (-22,2.5) {};
\node (v21) at (-22,1.5) {};
\node (v23) at (-22,0.5) {};
\node (v25) at (-22,-0.5) {};
\node (v27) at (-22,-1.5) {};
\node (v29) at (-22,-2.5) {};
\node (v10) at (4.5,7.5) {};
\node (v12) at (4.5,6.5) {};
\node (v14) at (4.5,5.5) {};
\node (v16) at (4.5,4.5) {};
\node (v18) at (4.5,3.5) {};
\node (v19) at (4.5,2.5) {};
\node (v22) at (4.5,1.5) {};
\node (v24) at (4.5,0.5) {};
\node (v26) at (4.5,-0.5) {};
\node (v28) at (4.5,-1.5) {};
\node (v30) at (4.5,-2.5) {};
\draw  (v9) edge (v10);
\draw  (v11) edge (v12);
\draw  (v13) edge (v14);
\draw  (v15) edge (v16);
\draw  (v17) edge (v18);
\draw  (v19) edge (v20);
\draw  (v21) edge (v22);
\draw  (v23) edge (v24);
\draw  (v25) edge (v26);
\draw  (v27) edge (v28);
\draw  (v29) edge (v30);
\draw [fill=cyan, opacity=0.2] (-22,12.5) node (v35) {} rectangle (-19,-3.5) node (v36) {};
\draw [fill=magenta, opacity=0.2] (-19,12.5) node (v31) {} rectangle (-16.5,-3.5);
\draw [fill=orange , opacity=0.2] (-16.5,12.5) rectangle (-14,-3.5);
\draw [fill=blue, opacity=0.2] (-12,12.5) rectangle (-9.5,-3.5) node (v32) {};
\draw [fill=magenta, opacity=0.2] (-9.5,12.5) node (v33) {} rectangle (-7,-3.5);
\draw [fill=orange , opacity=0.2] (-7,12.5) rectangle (-4.5,-3.5);
\draw [fill=blue, opacity=0.2] (-2.5,12.5) rectangle (-0.5,-3.5);
\draw [fill=green,opacity=0.2] (-0.5,12.5) node (v34) {} rectangle (4.5,-3.5);
\draw [black,very thick] (v31) rectangle (v32);
\draw [black,very thick] (v33) rectangle (-0.5,-3.5);
\draw [black,very thick] (v34) rectangle (4.5,-3.5);
\draw [black,very thick](v35) rectangle (v36);
\node (v37) at (-21.5,12.8) {};
\node (v38) at (-21.5,-3.8) {};
\draw [color=gray] (v37) edge (v38);
\node (v39) at (-21,12.8) {};
\node (v40) at (-21,-3.8) {};
\draw [color=gray] (v39) edge (v40);
\node (v41) at (-19.5,12.8) {};
\node (v42) at (-19.5,-3.8) {};
\draw [color=gray] (v41) edge (v42);
\node (v43) at (0.5,12.8) {};
\node (v44) at (0.5,-3.8) {};
\node (v45) at (1.5,12.8) {};
\node (v46) at (1.5,-3.8) {};
\node (v47) at (3.5,12.8) {};
\node (v48) at (3.5,-3.8) {};
\draw [color=gray] (v43) edge (v44);
\draw [color=gray] (v45) edge (v46);
\draw [color=gray] (v47) edge (v48);

\node (a) at (-13,13) {$\ldots$};
\node (a) at (-13,12) {$\ldots$};
\node (a) at (-13,11) {$\ldots$};
\node (a) at (-13,10) {$\ldots$};
\node (a) at (-13,9) {$\ldots$};
\node (a) at (-13,8) {$\ldots$};
\node (a) at (-13,7) {$\ldots$};
\node (a) at (-13,6) {$\ldots$};
\node (a) at (-13,5) {$\ldots$};
\node (a) at (-13,4) {$\ldots$};
\node (a) at (-13,3) {$\ldots$};
\node (a) at (-13,2) {$\ldots$};
\node (a) at (-13,1) {$\ldots$};
\node (a) at (-13,0) {$\ldots$};
\node (a) at (-13,-1) {$\ldots$};
\node (a) at (-13,-2) {$\ldots$};
\node (a) at (-13,-3) {$\ldots$};

\node (a) at (-3.5,13) {$\ldots$};
\node (a) at (-3.5,12) {$\ldots$};
\node (a) at (-3.5,11) {$\ldots$};
\node (a) at (-3.5,10) {$\ldots$};
\node (a) at (-3.5,9) {$\ldots$};
\node (a) at (-3.5,8) {$\ldots$};
\node (a) at (-3.5,7) {$\ldots$};
\node (a) at (-3.5,6) {$\ldots$};
\node (a) at (-3.5,5) {$\ldots$};
\node (a) at (-3.5,4) {$\ldots$};
\node (a) at (-3.5,3) {$\ldots$};
\node (a) at (-3.5,2) {$\ldots$};
\node (a) at (-3.5,1) {$\ldots$};
\node (a) at (-3.5,0) {$\ldots$};
\node (a) at (-3.5,-1) {$\ldots$};
\node (a) at (-3.5,-2) {$\ldots$};
\node (a) at (-3.5,-3) {$\ldots$};

\draw [->,black, thick] (-21.75,6) to (-8.5,6);
\draw [->,black, thick] (-21.25,5.75) to (-8.5,5.75);
\draw [->,black, thick] (-19.25,5.25) to (-8.5,5.25);

\draw [->,black, thick] (-18,4.75) to (-8.5,4.75);

\draw [->,black, thick] (-8,6) to (0,6);
\draw [->,black, thick] (-8,5.75) to (1,5.75);
\draw [->,black, thick] (-8,5) to (4,5);

\draw [->,black, thick] (-21.75,3) to (-6,3);
\draw [->,black, thick] (-21.25,2.75) to (-6,2.75);
\draw [->,black, thick] (-19.25,2.25) to (-6,2.25);

\draw [->,black, thick] (-16,2) to (-6,2);

\draw [->,black, thick] (-5.5,3) to (0,3);
\draw [->,black, thick] (-5.5,2.75) to (1,2.75);
\draw [->,black, thick] (-5.5,2) to (4,2);

\draw [->,black, thick] (-21.75,-2) to (-1.5,-2);
\draw [->,black, thick] (-21.25,-2.25) to (-1.5,-2.25);
\draw [->,black, thick] (-19.25,-2.75) to (-1.5,-2.75);

\draw [->,black, thick] (-10,-3) to (-1.5,-3);

\draw [->,black, thick] (-1.2,-2) to (0,-2);
\draw [->,black, thick] (-1.2,-2.25) to (1,-2.25);
\draw [->,black, thick] (-1.2,-3) to (4,-3);

\draw [
thick,
decoration={
    brace,
    raise=5
},
decorate
] (-22,12.5) -- (-19,12.5)
node [pos=0.5,anchor=north,yshift=25] {\small{C-E region}}; 

\draw [
thick,
decoration={
    brace,
    raise=5
},
decorate
] (-19,14) -- (-9.5,14)
node [pos=0.5,anchor=north,yshift=25] {\small{Constant-Creation region}}; 

\draw [
thick,
decoration={
    brace,
    raise=5
},
decorate
] (-9.5,14) -- (-0.5,14)
node [pos=0.5,anchor=north,yshift=25] {\small{Circuit-Simulation regions}};

\draw [
thick,
decoration={
    brace,
    raise=5
},
decorate
] (v31) -- (-16.5,12.5)
node [pos=0.5,anchor=north,yshift=25] {\small{$\text{const}_1$}}; 

\draw [
thick,
decoration={
    brace,
    raise=5
},
decorate
] (-16.5,12.5) -- (-14,12.5)
node [pos=0.5,anchor=north,yshift=25] {\small{$\text{const}_2$}};

\draw [
thick,
decoration={
    brace,
    raise=5
},
decorate
] (-12,12.5) -- (-9.5,12.5)
node [pos=0.5,anchor=north,yshift=25] {\small{$\text{const}_{p(N)}$}};

\draw [
thick,
decoration={
    brace,
    raise=5
},
decorate
] (-9.5,12.5) -- (-7,12.5)
node [pos=0.5,anchor=north,yshift=25] {$C_1$}; 

\draw [
thick,
decoration={
    brace,
    raise=5
},
decorate
] (-7,12.5) -- (-4.5,12.5)
node [pos=0.5,anchor=north,yshift=25] {$C_2$};

\draw [
thick,
decoration={
    brace,
    raise=5
},
decorate
] (-2.5,12.5) -- (-0.5,12.5)
node [pos=0.5,anchor=north,yshift=25] {$C_{p(N)}$}; 

\draw [
thick,
decoration={
    brace,
    raise=5
},
decorate
] (-0.5,12.5) -- (4.5,12.5)
node [pos=0.5,anchor=north,yshift=25] {\small{Feedback region}};

\draw [
thick,
decoration={
    mirror,
    raise=5
},
decorate
] (-22,12.5) -- (-22,11.5)
node [pos=0.5,anchor=west,xshift=-20] {$f_1$};

\draw [
thick,
decoration={
    mirror,
    raise=5
},
decorate
] (-22,11) -- (-22,10)
node [pos=0.5,anchor=west,xshift=-20] {$f_2$};

\node (dot) at (-23, 9) {$\vdots$};

\draw [
thick,
decoration={
    mirror,
    raise=5
},
decorate
] (-22,7.5) -- (-22,6.5)
node [pos=0.5,anchor=west,xshift=-20] {$f_N$};

\draw [
thick,
decoration={
    brace,
    mirror,
    raise=5
},
decorate
] (-22,6) -- (-22,4)
node [pos=0.5,anchor=west,xshift=-25] {$C_1$};

\draw [
thick,
decoration={
    brace,
    mirror,
    raise=5
},
decorate
] (-22,3.5) -- (-22,1.5)
node [pos=0.5,anchor=west,xshift=-25] {$C_2$};

\node (dot) at (-22.5, 0) {$\vdots$};

\draw [
thick,
decoration={
    brace,
    mirror,
    raise=5
},
decorate
] (-22,-1.5) -- (-22,-3.5)
node [pos=0.5,anchor=west,xshift=-39] {$C_{p(N)}$};

\node (dot) at (-20.25, 11) {$\ldots$};
\node (dot) at (2.5, 11) {$\ldots$};

\end{tikzpicture} }
    \caption{An overview of the different regions defined in the reduction. The regions corresponding to different Circuit-Simulators are color-coded. An arrow indicates that the region where it is pointing receives inputs from the region from where it is originating. On the left, the different types of agents are shown, namely the feedback agents, as well as the agents corresponding to the different Circuit-Simulators. The Coordinate-Encoding region and the Feedback Region are divided into sub-intervals, indicated by vertical gray lines, as detailed in \cref{sec:overview}.}
    \label{fig:overview}
\end{figure}

\subsection{Major simplifications compared to the previous proof}

The major simplifications compared to the previous proof of \citet{FRG18-Necklace} are as follows:

\paragraph{\textbf{A much cleaner domain.}} The PPA-hardness of \textsc{high-D-Tucker} was already established in \citep{FRG18-Necklace} and our version can be obtained from that one using minor modifications, see \cref{thm:tuckerppa}. The corresponding result of \citep{FRG18-Necklace} is a standard application of the ``snake-embedding'' technique developed in \citep{chen2009settling}. However, the reduction in \citep{FRG18-Necklace} requires (a) a further constraint on how the domain is colored and more importantly (b) the embedding of the \textsc{high-D-Tucker} instance into a M{\"o}bius-type simplex domain, in which two facets have been ``identified'' with each other - one can envision a high-dimensional M{\"o}bius strip with an instance of \textsc{high-D-Tucker} in its center, embedding in a high-dimensional simplex.  A key step in the reduction is the extension of the labeling of \textsc{high-D-Tucker} to the remainder of the domain, in a way that does not introduce any artificial solutions, and such that solutions to \textsc{high-D-Tucker} can be traced back from solutions on other points on the domain. For this purpose, the authors of \citep{FRG18-Necklace} develop a rather complicated coordinate transformation, applied to the inputs read from the positions of the cuts. They establish how to compute the transformation and its inverse in polynomial time and how distances in the two coordinate systems (before and after the transformation) are polynomially related.
In contrast, our reduction works with the rather clean domain of \textsc{high-D-Tucker}, avoiding all the unnecessary technical clutter of the domain used in \citep{FRG18-Necklace}.

\paragraph{\textbf{Simpler gadgetry.}} Another complication of the proof in \citep{FRG18-Necklace} is the use of blanket-sensor agents, which constrain the positions of the cuts in the coordinate-encoding region, to ensure that solutions to $\varepsilon$-\ch\ do not encode points that lie too far from a specific region in the ``middle'' of the domain, called the ``significant region''; this is achieved via appropriate feedback provided by these agents to the coordinate-encoding agents. To make sure that the blanket-sensor agents do not ``cancel'' each other, extra care must be taken on how the feedback of these agents is designed, giving rise to a series of technical lemmas. Our reduction does not need to use any such agents and is therefore significantly simpler in that regard as well. 

\paragraph{\textbf{Label sequence robustness.}} The reduction in \citep{FRG18-Necklace} requires knowledge of the label sequence, i.e., whether the first cut that occurs in the c-e region has the label $+$ or $-$ on its left side. This is fundamental for the design of the gates, as they read the inputs as the distances from the left endpoints of the corresponding designated intervals, unlike our interpretation, which measures the difference between the value of the two labels. Thus, for the disorientation of the domain and to deal with sign flips that happen due to the stray cuts, the authors of \citep{FRG18-Necklace} employ a pre-processing circuit that uses the first coordinate-detecting agent as a reference agent, when performing computations. This is again not needed in our case; our equivariant gates ensure that even when the corresponding point lies on the boundary of the \textsc{high-D-Tucker} domain, the output is computed correctly in a much simpler way.

\subsection{Arithmetic Gates}\label{sec:gates}

In this section we show how to construct gates which perform various operations on numbers in $[-1,1]$ with error at most $g(\varepsilon) = 16 \varepsilon$, where $\varepsilon$ is the error we allow in a \textsc{Consensus-Halving} solution. Some of these gates will be immune to ``corruption'' by a stray cut, while others might get corrupted and not work properly.

Recall that in any solution $S$, any unit length interval $I$ of the domain represents value $v(I) \in [-1,1]$. We will now show how to perform computations with these values. We let $T : \mathbb{R} \to [-1,1]$, $z \mapsto \max(-1, \min(1,z))$, i.e.,\ $T[z]$ is the \emph{truncation} of $z \in \mathbb{R}$ in $[-1,1]$. We will also abuse notation and use $T[x] = (T[x_1], \dots, T[x_N])$ for $x \in [-1,1]^N$. At this stage, we assume that $\varepsilon$ is sufficiently small for the gates to work (namely $\varepsilon \leq 2^{-10}$ is enough). 

We will design \emph{basic gates}, namely \textbf{\emph{Multiplication by $-1$ $[G_{\times (-1)}]$}}, \textbf{\emph{Constant $\zeta \in [-1,1] \cap \mathbb{Q}$ $[G_{\zeta}]$}} and \emph{\textbf{Addition $[G_{+}]$}}, and \emph{additional gates}, namely \emph{\textbf{Copy $[G_{\textup{copy}}]$}}, \emph{\textbf{Multiplication by $k \in \mathbb{N}$ $[G_{\times k}]$}} and \textbf\emph{Boolean Gates}, \emph{\textbf{Negation $[G_{\lnot}]$}}, \emph{\textbf{AND $[G_{\land}]$}} and \emph{\textbf{OR $[G_{\lor}]$}}. 

\subsubsection{Basic Gates}

\paragraph{\textbf{$\delta$-Volume Gate $[G_{\delta}]$:}} Let $\delta \in [2\varepsilon,1]$. Let $I$ and $O$ be disjoint intervals of length $1$. The agent for this gate has a block of length $1-\delta$ and height $\frac{1}{2-\delta}$ centered in interval $I$ and a block of length $1$ and (same) height $\frac{1}{2-\delta}$ in interval $O$. \smallskip

\noindent It is easy to check that since $\delta \geq 2\varepsilon$, at least one cut must lie strictly within $O$ in any solution. Furthermore, this gate has the notable property that it cannot be corrupted. From this construction, we obtain the following gates:
\begin{itemize}
    \item[-] \textbf{\emph{Multiplication by $-1$ $[G_{\times (-1)}]$:}} Set $\delta = 2\varepsilon$. Then, in any solution it holds that $v(O) = T[-v(I) \pm 4 \varepsilon]$. See \cref{fig:mulbyminus1} for an illustration. \smallskip
    \item[-] \textbf{\emph{Constant $\zeta \in [-1,1] \cap \mathbb{Q}$ $[G_{\zeta}]$:}} To create a constant, we use an input interval $I$ in the Constant-Creation region. Let $j$ be such that this gate is part of circuit-simulator $C_j$. In this case, we use input $I:=[N+(j-1),N+j]$ (the region corresponding to $\textup{const}_j$). If $v(I)=+1$ (i.e.,\ $\textup{const}_j = +1$), then this gate will work properly and cannot be corrupted. 
    \begin{itemize}
    \item For $\zeta \leq 0$, we let $\delta=\max(1+\zeta,2\varepsilon)$ and obtain that $v(O) = T[\zeta \pm 4 \varepsilon]$. 
    \item For $\zeta > 0$, use $G_{-\zeta}$ and then $G_{\times (-1)}$, for a total error of at most $8 \varepsilon$. Note that if $|v(I)| = 1$, then this gate can also be used to obtain $T[v(I) \times \zeta \pm 8 \varepsilon]$.
    \end{itemize} See \cref{fig:constant} for an illustration.
\end{itemize}

\begin{figure}
\centering
\begin{tikzpicture}[scale=1,transform shape]
	\node (a_1) at (-10pt,0pt) {\small{$G_{\times(-1)}$}}; 
	\node (a_2) at (300pt, 0pt) {};
	\draw (a_1)--(a_2);
	\draw[fill=red!20!white] (24pt,0pt) rectangle (56pt,22pt);

	\draw[fill=red!20!white] (230pt,0pt) rectangle (270pt,22pt);
	
		\draw[fill=gray!20!white] (20pt,0pt) rectangle (24pt,22pt);
	\draw[fill=gray!20!white] (56pt,0pt) rectangle (60pt,22pt);

	\node (a_1) at (-10pt,-50pt) {\small{$G_{\times(-1)}$}}; 
	\node (a_2) at (300pt, -50pt) {};
	\draw (a_1)--(a_2);
	\draw[fill=red!20!white] (24pt,-50pt) rectangle (56pt,-28pt);
	\draw[fill=gray!20!white] (20pt,-50pt) rectangle (24pt,-28pt);
	\draw[fill=gray!20!white] (56pt,-50pt) rectangle (60pt,-28pt);

	\draw[fill=red!20!white] (230pt,-50pt) rectangle (270pt,-28pt);
	
		\draw [
    thick,
    decoration={
        brace,
        mirror,
        raise=5pt
    },
    decorate
] (20pt,-50pt) -- (24pt,-50pt)
node [pos=0.5,anchor=north,yshift=-8pt] {$2\varepsilon$};

	\draw [
    thick,
    decoration={
        brace,
        mirror,
        raise=5pt
    },
    decorate
] (20pt,-70pt) -- (60pt,-70pt)
node [pos=0.5,anchor=north,yshift=-10pt] {\small{${I}$}};

\draw [
    thick,
    decoration={
        brace,
        mirror,
        raise=5pt
    },
    decorate
] (230pt, -70pt) -- (270pt, -70pt)
node [pos=0.5,anchor=north,yshift=-10pt] {\small{$O$}};

\draw[dashed,color=blue] (258pt,50pt) -- (258pt, -15pt);

\draw[dashed,color=blue] (32pt,50pt) -- (32pt, -70pt);
\draw[dashed,color=red] (242pt,-10pt) -- (242pt, -70pt);

\node[color=blue] at (20pt,30pt) {$+$};
\node[color=blue] at (40pt,30pt) {$-$};

\node[color=blue] at (230pt,30pt) {$+$};
\node[color=blue] at (270pt,30pt) {$-$};

\node[color=red] at (230pt,-20pt) {$-$};
\node[color=red] at (270pt,-20pt) {$+$};

\end{tikzpicture} \caption{A  \textbf{\emph{Multiplication by $-1$ $[G_{\times (-1)}]$}} gate. The gray shaded regions are not part of the agents valuation on $I$, but are shown for clarity. The input to the gate is $-1/2$ and the output is $1/2 \pm 4\varepsilon$. Two different sets of cut are shown (top and bottom), to emphasize the fact that the gadgets are resilient to flips in the parity of the cut sequence. On both cases, the parity sequence on the left is $+/-$; on the top, the parity sequence on the right is also $+/-$, and the cut is placed in the rightmost half in $O$, whereas on the bottom, the parity sequence on the right is $-/+$, and the cut is placed in the leftmost half in $O$.}
\vspace{0.5cm}
\label{fig:mulbyminus1}
\end{figure}
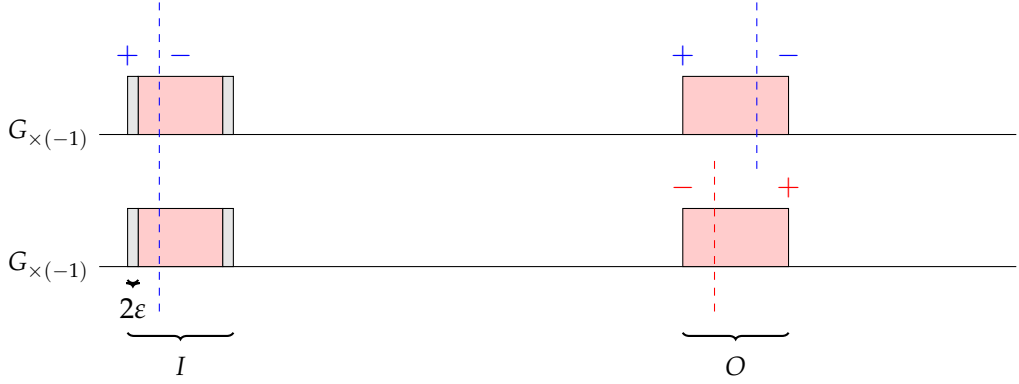

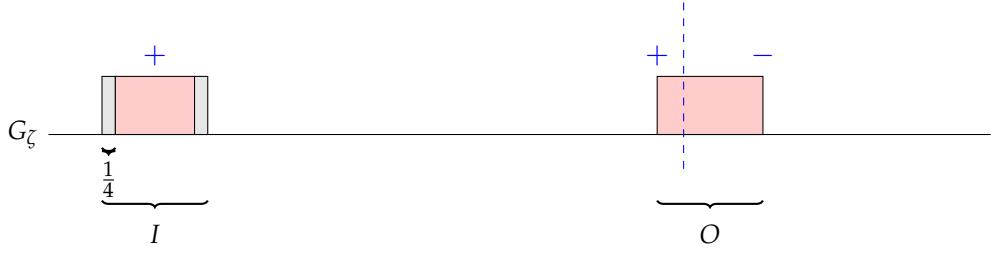
\begin{figure}
\centering
\begin{tikzpicture}[scale=1,transform shape]
	\node (a_1) at (-10pt,0pt) {\small{$G_{\zeta}$}}; 
	\node (a_2) at (300pt, 0pt) {};
	\draw (a_1)--(a_2);
	\draw[fill=red!20!white] (25pt,0pt) rectangle (55pt,22pt);

	\draw[fill=red!20!white] (230pt,0pt) rectangle (270pt,22pt);
	
		\draw[fill=gray!20!white] (20pt,0pt) rectangle (25pt,22pt);
	\draw[fill=gray!20!white] (55pt,0pt) rectangle (60pt,22pt);

		\draw [
    thick,
    decoration={
        brace,
        mirror,
        raise=5pt
    },
    decorate
] (20pt,0pt) -- (25pt,0pt)
node [pos=0.5,anchor=north,yshift=-6pt] {$\frac{1}{4}$};

	\draw [
    thick,
    decoration={
        brace,
        mirror,
        raise=5pt
    },
    decorate
] (20pt,-20pt) -- (60pt,-20pt)
node [pos=0.5,anchor=north,yshift=-10pt] {\small{${I}$}};

\draw [
    thick,
    decoration={
        brace,
        mirror,
        raise=5pt
    },
    decorate
] (230pt, -20pt) -- (270pt, -20pt)
node [pos=0.5,anchor=north,yshift=-10pt] {\small{$O$}};

\draw[dashed,color=blue] (240pt,50pt) -- (240pt, -15pt);

\node[color=blue] at (40pt,30pt) {$+$};

\node[color=blue] at (230pt,30pt) {$+$};
\node[color=blue] at (270pt,30pt) {$-$};

\end{tikzpicture} \caption{A  \textbf{\emph{Constant $\zeta \in [-1,1] \cap \mathbb{Q}$ $[G_{\zeta}]$}} gate. The gray shaded regions are not part of the agents valuation on $I$, but are shown for clarity. $I$ is part of the Constant-Creation region and therefore (in a well-behaved case), it is not intersected by any cuts. The value block of the agent on the left is labeled entirely by ``$+$'', and the cut on the right side assumes the corresponding position in favor of ``$-$''m to balance out the discrepancy. In the example, $\zeta = -3/4$ and therefore $\delta = 1/4$, and the output on the right is $-3/4$.}
\label{fig:constant}
\end{figure}

\paragraph{\textbf{Addition $[G_{+}]$:}} Let $I_1$ and $I_2$ be the two length-$1$ intervals encoding the two inputs. Let $I'$ be an interval of length $2$ that is disjoint from $I_1$ and $I_2$. Let $J$ be an interval of length $3$ that is disjoint from $I_1, I_2$ and $I'$. We first use a $G_{\times (-1)}$-gate with input $I_1$ and output $I'[0,1]$, and another one with input $I_2$ and output $I'[1,2]$.\smallskip

\noindent To compute addition we create a new agent with valuation function that has height $1/5$ in $I'$ and in $J$, and height $0$ everywhere else. Note that since $\varepsilon < 1/5$, in any solution there must be a cut in $J$. We say that the gate is corrupted, if there are at least two cuts lying strictly in the interval $J$. The output of the gate will be in interval $O = J[1,2]$. If the gate is not corrupted, then by construction we have $v(O) = T[v(I_1) + v(I_2) \pm 16 \varepsilon]$. See \cref{fig:addition} for an illustration.

\begin{figure}
\centering
\begin{tikzpicture}[scale=0.8,transform shape]
	\node (a_1) at (-10pt,0pt) {\small{$G_{\times(-1)}$}}; 
	\node (a_2) at (430pt, 0pt) {};
	\draw (a_1)--(a_2);
	
	\node (a_1) at (-10pt,-50pt) {\small{$G_{\times(-1)}$}}; 
	\node (a_2) at (430pt, -50pt) {};
	\draw (a_1)--(a_2);
	
	\node (a_1) at (-10pt,-100pt) {\small{$G_{+}$}}; 
	\node (a_2) at (430pt, -100pt) {};
	\draw (a_1)--(a_2);

	\draw[fill=red!20!white] (24pt,0pt) rectangle (56pt,22pt);
	\draw[fill=gray!20!white] (20pt,0pt) rectangle (24pt,22pt);
	\draw[fill=gray!20!white] (56pt,0pt) rectangle (60pt,22pt);
	
	\node (dots1) at (70pt,-25pt) {$\ldots$};
	
	\draw[fill=red!20!white] (94pt,-50pt) rectangle (126pt,-28pt);
	\draw[fill=gray!20!white] (90pt,-50pt) rectangle (94pt,-28pt);
	\draw[fill=gray!20!white] (126pt,-50pt) rectangle (130pt,-28pt);
	
	\node (dots1) at (150pt,-25pt) {$\ldots$};

	\draw[fill=red!20!white] (160pt,0pt) rectangle (200pt,22pt);

	\draw[fill=red!20!white] (200pt,-50pt) rectangle (240pt,-28pt);
	
	\draw[fill=red!20!white] (160pt,-100pt) rectangle (240pt,-92pt);
	
	\node (dots1) at (270pt,-75pt) {$\ldots$};
	
	\draw[fill=red!20!white] (300pt,-100pt) rectangle (420pt,-92pt);

		\draw [
    thick,
    decoration={
        brace,
        mirror,
        raise=5pt
    },
    decorate
] (20pt,0pt) -- (24pt,0pt)
node [pos=0.5,anchor=north,yshift=-8pt] {$2\varepsilon$};

	\draw [
    thick,
    decoration={
        brace,
        mirror,
        raise=5pt
    },
    decorate
] (20pt,-100pt) -- (60pt,-100pt)
node [pos=0.5,anchor=north,yshift=-10pt] {\small{${I_1}$}}; 

	\draw [
    thick,
    decoration={
        brace,
        mirror,
        raise=5pt
    },
    decorate
] (90pt,-100pt) -- (130pt,-100pt)
node [pos=0.5,anchor=north,yshift=-10pt] {\small{${I_2}$}};

\draw [
    thick,
    decoration={
        brace,
        mirror,
        raise=5pt
    },
    decorate
] (160pt, -100pt) -- (240pt, -100pt)
node [pos=0.5,anchor=north,yshift=-10pt] {\small{$I'$}};

\draw [
    thick,
    decoration={
        brace,
        mirror,
        raise=5pt
    },
    decorate
] (340pt, -100pt) -- (380pt, -100pt)
node [pos=0.5,anchor=north,yshift=-10pt] {\small{$O$}};

\draw [
    thick,
    decoration={
        brace,
        mirror,
        raise=5pt
    },
    decorate
] (300pt, -120pt) -- (420pt, -120pt)
node [pos=0.5,anchor=north,yshift=-10pt] {\small{$J$}};

\draw[dashed,color=blue] (32pt,35pt) -- (32pt, -110pt);
\draw[dashed,color=blue] (98pt,35pt) -- (98pt, -110pt);

\draw[dashed,color=blue] (192pt,35pt) -- (192pt, -110pt);
\draw[dashed,color=blue] (204pt,35pt) -- (204pt, -110pt);

\draw[dashed,color=blue] (400pt,35pt) -- (400pt, -110pt);

\node[color=blue] at (20pt,40pt) {$+$};
\node[color=blue] at (40pt,40pt) {$-$};

\node[color=blue] at (90pt,40pt) {$+$};
\node[color=blue] at (108pt,40pt) {$-$};

\node[color=blue] at (180pt,40pt) {$+$};
\node[color=blue] at (198pt,40pt) {$-$};
\node[color=blue] at (218pt,40pt) {$+$};

\node[color=blue] at (390pt,40pt) {$-$};
\node[color=blue] at (410pt,40pt) {$+$};

\end{tikzpicture} \caption{An \emph{\textbf{Addition $[G_{+}]$}} gate. An arbitrary sequence of labels is shown. First, we use two $G_{\times(-1)}$ gates for the two inputs and the outputs of those gates are ``read together'' in a single interval $I'$. The output of the $G_{+}$ gate is read from $O$. In the example, the first input is $-1/2$ and the second input is $-3/4$. The cut in $J$ is placed appropriately to balance out the excess of ``$+$'' in $I'$, but the output of the gate is actually $-1$, i.e., the gate applies truncation to the output of the addition as it is smaller than $-1$.}
\label{fig:addition}
\end{figure}
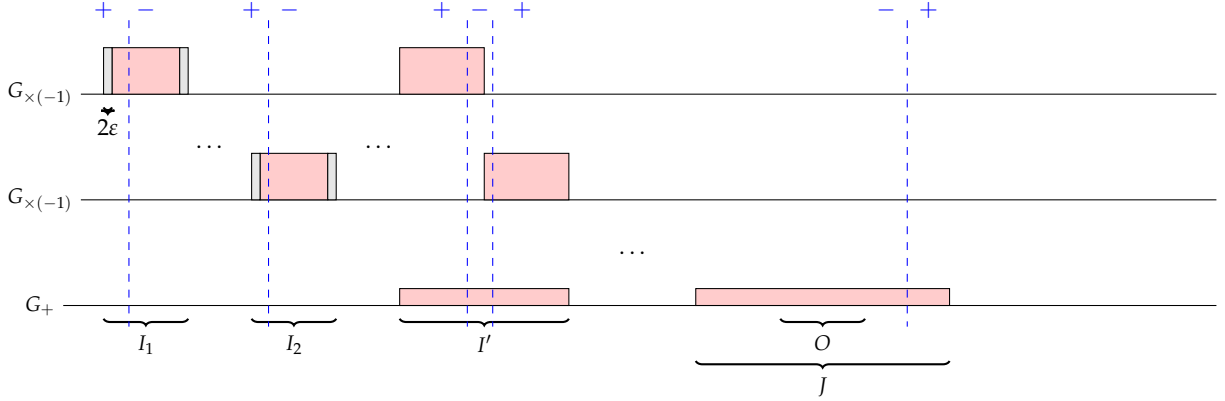

\subsubsection{Additional Gates}

Using the gates presented above, we can also implement the following operations.

\paragraph{\textbf{Copy $[G_{\textup{copy}}]$:}} We can copy a value by using two successive $G_{\times (-1)}$-gates. The error will be at most $8 \varepsilon$. This gate can not be corrupted.

\paragraph{\textbf{Multiplication by $k \in \mathbb{N}$ $[G_{\times k}]$:}} This can be implemented by a chain of $k-1$ additions that repeatedly add the input to the intermediate summation. It is easy to check that the error is at most $(k-1)16 \varepsilon$.

\paragraph{\textbf{Boolean Gates:}} For the Boolean gates, the bits $\{0,1\}$ will be represented by the values $-1$ (for $0$) and $+1$ (for $1$). We say that an input value $v(I) \in [-1,1]$ is a \emph{perfect bit}, if $v(I) \in \{-1,1\}$. The Boolean gates we will construct have the following very nice property: if all inputs are perfect bits, then the output is a perfect bit (and is the correct output).
\begin{itemize}
    \item[-] \emph{\textbf{Negation Gate $[G_{\lnot}]$:}}. This can be done by first using a $G_{\times (-1)}$-gate and then a $G_{\times 2}$-gate. The error will be at most $20\varepsilon$, so this will work as intended as long as $20\varepsilon \leq 1$.
    \item[-] \emph{\textbf{AND Gate: $[G_{\land}]$:}} Let $b_1$ and $b_2$ be the two inputs. We perform the computation $((b_1+b_2)-1/2) \times 4$ by using two $G_+$-gates, a $G_{-1/2}$-gate and a $G_{\times 4}$-gate. The error will be at most $12 \times 16 \varepsilon$, so the gate will work as intended as long as $12 \times 16 \varepsilon \leq 1$.
    \item[-] \emph{\textbf{OR Gate $[G_{\lor}]$:}} This can easily be obtained by using $G_{\land}$ and $G_{\lnot}$.
\end{itemize}
\noindent Note that all the arithmetic gates we have presented have error at most $g(\varepsilon)=16\varepsilon$, with the exception of $G_{\times k}$, which has an error of $(k-1)g(\varepsilon)$. Thus, we have to be careful whenever we use this gate for some non-constant $k$.

\begin{remark}[\textbf{Equivariant Gates}]\label{rem:equivariant-gates}
Note that the operation performed by any gate is \emph{equivariant}. Namely, if we flip the sign of all inputs, the same output is still valid, but with a flipped sign. For $G_{\times -1}$ and $G_{+}$ gates this is obvious. For $G_{\zeta}$, we have to recall that $\textup{const}_j$ is the input to the gate. With this interpretation, the equivariance is once again obvious. For the $G_{\land}$-gate the equivariance is a bit more subtle. Note that it uses a $G_{-1/2}$-gate, which uses $\textup{const}_j$. Thus, in this case again, if we flip the sign of the inputs $b_1$, $b_2$ and $\textup{const}_j$, the sign of the output is flipped too.

This property of the gates is not a coincidence. It follows from the way we are encoding values. Note that in any solution $S$, if we swap the labels $+$ and $-$, the solution remains valid. The only thing that has changed is that for any gate, the sign of all inputs and outputs has been flipped.

It follows that any circuit that we construct out of these gates will be equivariant. Namely, the computation will still be valid if we flip the sign of all inputs (including $\textup{const}_j$) and all outputs.
\end{remark}

\subsection{Circuit-Simulators}

In this section we describe the functionality of the circuit-simulators and what it achieves. Recall that every circuit-simulator $C_j$ reads-in inputs $x_1, \dots, x_N \in [-1,1]$ from the Coordinate-Encoding region and $\textup{const}_j \in [-1,1]$ from the Constant-Creation region. The circuit-simulator then performs some computations and outputs $[C_j(x, \textup{const}_j)]_1, \dots, [C_j(x, \textup{const}_j)]_N \in [-1,1]$ into the Feedback region. If the circuit-simulator $C_j$ is corrupted (i.e.,\ one of the stray cuts interferes with it), then we will not claim anything about the outputs of $C_j$. In that case, we will only use the fact that all the outputs must lie in $[-1,1]$ (which is guaranteed by the way values are represented).

If the circuit-simulator $C_j$ is not corrupted, then we know that all gates will perform correct computations, and we also know that $\textup{const}_j \in \{-1,+1\}$. In our construction of $C_j$, we will be assuming that $\textup{const}_j = +1$. However, we will show later that even if $\textup{const}_j = -1$, $C_j$ will output something useful.

\paragraph{\textbf{Phase 1: equi-angle displacement.}}
In the first phase, $C_j$ applies a small displacement to its input $x$. Namely, for every $i \in [N]$, $C_j$ computes $\widehat{x}_i \approx T[x_i + j \alpha]$, where $\alpha = \frac{1}{16p(N)}$. This is achieved by using a $G_{j \alpha}$-gate to create the constant $j \alpha$ (by using $\textup{const}_j$), followed by a $G_+$-gate to perform the addition $x_i + j\alpha$. However, since $\varepsilon > 0$, the gates might make some error in the computations. Nevertheless, by construction of the gates, we immediately obtain:

\begin{claim}\label{clm:phase1}
Assume that the circuit-simulator $C_j$ is not corrupted and $\textup{const}_j = +1$. Then the equi-angle displacement phase outputs $\widehat{x}_i = T[x_i + j\alpha \pm 2g(\varepsilon)]$ for all $i$.
\end{claim}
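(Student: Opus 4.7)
The plan is to directly track the errors introduced by the two gates that implement the displacement, applying the error guarantees established in \cref{sec:gates}. Since $C_j$ is not corrupted, by hypothesis every gate belonging to $C_j$ performs its intended operation within error at most $g(\varepsilon) = 16\varepsilon$, and by assumption $\varepsilon$ is small enough (say $\varepsilon \le 2^{-10}$) for the gates to work correctly.

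First I would analyze the constant-creation step. Feeding $\textup{const}_j = +1$ into the $G_{j\alpha}$-gate, the construction of $G_\zeta$ (which reads its input from the Constant-Creation region and requires precisely $\textup{const}_j = +1$ to behave correctly) guarantees that its output $c_j$ satisfies $c_j = T[j\alpha \pm g(\varepsilon)]$. Since $j \le p(N)$ and $\alpha = 1/(16\, p(N))$, we have $|j\alpha| \le 1/16$, which together with $g(\varepsilon) \ll 1$ implies that the argument of the truncation already lies strictly inside $[-1,1]$. Hence the truncation is inactive at this step and we may write $c_j = j\alpha + \eta_j$ for some $|\eta_j| \le g(\varepsilon)$.

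Next I would compose with the $G_+$-gate applied to the inputs $x_i$ (read from the Coordinate-Encoding region) and $c_j$. By the correctness of the addition gate in the uncorrupted setting, its output equals $T[x_i + c_j \pm g(\varepsilon)]$. Substituting $c_j = j\alpha + \eta_j$ and absorbing $\eta_j$ into the $\pm$ symbol, using that $T$ is $1$-Lipschitz so that an additive perturbation of magnitude at most $g(\varepsilon)$ of the pre-truncation quantity changes $T[\cdot]$ by at most $g(\varepsilon)$, yields $\widehat{x}_i = T[x_i + j\alpha \pm 2g(\varepsilon)]$, as claimed.

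I do not expect any real obstacle here beyond bookkeeping: the whole argument is a composition of two gate error bounds, each of size $g(\varepsilon)$. The only subtlety worth flagging is that the intermediate constant $c_j$ must not itself be clipped by the truncation, which is precisely why $\alpha$ was defined with the $1/16$ factor so that $|j\alpha| \le 1/16$ keeps $c_j$ safely away from the endpoints $\pm 1$. This reliance on the choice of $\alpha$ is the one place where the proof is not purely mechanical.
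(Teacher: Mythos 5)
Your proposal is correct and matches the paper's argument: the paper treats \cref{clm:phase1} as immediate from the gate constructions, namely one error of at most $g(\varepsilon)$ from the $G_{j\alpha}$-gate (fed by $\textup{const}_j=+1$) and one of at most $g(\varepsilon)$ from the $G_{+}$-gate, composing to $T[x_i + j\alpha \pm 2g(\varepsilon)]$, which is exactly the bookkeeping you carry out. Your extra remark about $|j\alpha|\le 1/16$ keeping the constant away from the truncation endpoints is fine but not strictly needed, since $j\alpha\in[-1,1]$ already guarantees $|T[j\alpha+\eta]-j\alpha|\le|\eta|$.
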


\paragraph{\textbf{Phase 2: bit extraction.}}
In the second phase, $C_j$ extracts the three most significant bits from each $\widehat{x}_i \in [-1,1]$. These three bits tell us where $\widehat{x}_i$ lies in $[-1,1]$, namely in which of the eight possible standard intervals of length $1/4$: $[-1,-3/4]$, $[-3/4,-1/2]$, $[-1/2,-1/4]$, $[-1/4,0]$, $[0,1/4]$, $[1/4,1/2]$, $[1/2,3/4]$, $[3/4,1]$. Instead of the usual $\{0,1\}$, our bits will take values in $\{-1,+1\}$. The first bit $b_1 \in \{-1,+1\}$ indicates whether $\widehat{x}_i$ is positive or negative. If  $b_1 = +1$, then $\widehat{x}_i \in [0,1]$. If $b_1 = -1$, then $\widehat{x}_i \in [-1,0]$. The second bit $b_2 \in \{-1,+1\}$, then indicates in which half of that interval $\widehat{x}_i$ lies. Thus, if $b_1=+1$ and $b_2=-1$, then $\widehat{x}_i \in [0,1/2]$. Note that some of the bits are not well-defined if $\widehat{x}_i \in B$ where $B = \{-3/4,-1/2,-1/4,0,1/4,1/2,3/4\}$. Thus, we cannot expect the bit extraction to succeed in this case. In fact, the bit extraction will fail if $\widehat{x}_i$ is sufficiently close to any point in $B$.

The bit extraction for $\widehat{x}_i$ is performed as follows.
\begin{enumerate}
    \item $b_1 \approx T[\widehat{x}_i \times \lceil 1/g(\varepsilon)\rceil]$ (use $G_{\times \lceil 1/g(\varepsilon)\rceil}$-gate)
    \item $\widehat{x}_i' \approx T[\widehat{x}_i - b_1/2]$ (use $G_{-1/2}$-gate and $G_{+}$-gate)
    \item $b_2 \approx T[\widehat{x}_i' \times \lceil 1/g(\varepsilon)\rceil]$
    \item $\widehat{x}_i'' \approx T[\widehat{x}_i' - b_2/4]$
    \item $b_3 \approx T[\widehat{x}_i'' \times \lceil 1/g(\varepsilon)\rceil]$
\end{enumerate}

Note that to compute $-b_1/2$ and $-b_2/4$ we just use the corresponding constant gate, namely $G_{-1/2}$ and $G_{-1/4}$ (with input $b_1$ or $b_2$ respectively, instead of $\textup{const}_j$). The computation may be incorrect if $b_1$ or $b_2$ are not in $\{-1,1\}$, but in that case the bit-extraction has already failed anyway.

We can show that the bit-extraction succeeds if $\widehat{x}_i$ is sufficiently far away from any point in $B$. Letting $\textup{dist}(t, B) = \min_{p \in B} |t-p|$, we obtain:

\begin{claim}\label{clm:phase2}
Assume that the circuit-simulator $C_j$ is not corrupted and $\textup{const}_j = +1$. If $\textup{dist}(T[x_i + j \alpha], B) \geq 8g(\varepsilon)$, then the bit-extraction phase for $\widehat{x}_i$ outputs the correct bits for $T[x_i + j \alpha]$.
\end{claim}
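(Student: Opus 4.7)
Set $t := T[x_i + j\alpha]$, so by \cref{clm:phase1} and the fact that $T$ is $1$-Lipschitz on $\mathbb{R}$, we have $|\widehat{x}_i - t| \le 2g(\varepsilon)$. Our hypothesis is $\operatorname{dist}(t,B) \ge 8g(\varepsilon)$, which in particular gives $|t - p| \ge 8g(\varepsilon)$ for each $p \in \{0, \pm 1/4, \pm 1/2, \pm 3/4\}$. Throughout, let $k := \lceil 1/g(\varepsilon)\rceil$, so the $G_{\times k}$-gate has error at most $(k-1)g(\varepsilon) < 1$. The plan is to prove by induction on the step that after extracting each bit, the corresponding residual $\widehat{x}_i^{(r)}$ approximates the ideal remainder $t - b_1/2 - b_2/4 - \dots$ well enough that the next multiplication by $k$ followed by truncation recovers the correct sign.

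\emph{Step 1 (first bit).} Since $0 \in B$, $|t| \ge 8g(\varepsilon)$, hence $|\widehat{x}_i| \ge 6g(\varepsilon)$ and so $|k\widehat{x}_i| \ge 6$. Adding the $G_{\times k}$-error (bounded by $1$) cannot change the sign, and the magnitude still exceeds $1$, so after truncation $b_1 = \operatorname{sgn}(t) \in \{-1,+1\}$, which is the correct first bit for $t$.

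\emph{Step 2 (second bit).} Because $|b_1|=1$, the $G_{-1/2}$-gate applied to $b_1$ outputs $T[-b_1/2 \pm g(\varepsilon)/2]$, and the subsequent $G_+$-gate adds $\widehat{x}_i$ to this with its own error of at most $g(\varepsilon)$. Tracking the truncation (which is inactive here since $|t - b_1/2| \le 1/2$ and errors are tiny), we get $|\widehat{x}_i' - (t - b_1/2)| \le 2g(\varepsilon) + 3g(\varepsilon)/2 \le 4g(\varepsilon)$. Since $\pm 1/2 \in B$, we have $|t - b_1/2| \ge 8g(\varepsilon)$, so $|\widehat{x}_i'| \ge 4g(\varepsilon)$ with the same sign as $t - b_1/2$. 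Multiplying by $k$ gives magnitude at least $4$, and the $G_{\times k}$ error of at most $1$ preserves the sign, so $b_2 = \operatorname{sgn}(t - b_1/2)$ is the correct second bit.

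\emph{Step 3 (third bit).} Analogously, since $|b_2| = 1$, the gates give $|\widehat{x}_i'' - (t - b_1/2 - b_2/4)| \le 4g(\varepsilon) + 3g(\varepsilon)/2 \le 6g(\varepsilon)$ (again the final truncation is inactive). Because $\pm 1/4 \in B$, $|t - b_1/2 - b_2/4| \ge 8g(\varepsilon)$, so $|\widehat{x}_i''| \ge 2g(\varepsilon)$ with correct sign. Then $|k\widehat{x}_i''| \ge 2$ and the error of at most $1$ in the final $G_{\times k}$ preserves the sign, yielding the correct $b_3$. Combined, $(b_1, b_2, b_3)$ identifies the standard length-$1/4$ subinterval of $[-1,1]$ containing $t$, as claimed.

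\textbf{Main obstacle.} The only real subtlety is the error bookkeeping: one has to verify that the accumulated error after each bit extraction, namely roughly $2g(\varepsilon)$, then $4g(\varepsilon)$, then $6g(\varepsilon)$, stays below the safety margin $8g(\varepsilon)$ provided by the distance-to-$B$ hypothesis; the constant $8$ in the hypothesis is exactly what makes this slack suffice for three bits. One must also check at each stage that truncation does not spuriously clip the residual (it never does, since $|t - b_1/2| \le 1/2$ and $|t - b_1/2 - b_2/4| \le 1/4$, both well inside $[-1,1]$).
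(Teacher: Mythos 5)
Your proof is correct and follows essentially the same route as the paper's: establish the correct sign of the first residual, show the $G_{\times \lceil 1/g(\varepsilon)\rceil}$-gate's error of at most $1$ cannot flip a sign of magnitude $\geq 2$, then peel off each bit while the distance-to-$B$ slack of $8g(\varepsilon)$ absorbs the accumulated $2g(\varepsilon)$ per extraction step. The one cosmetic difference is bookkeeping: you track $|\widehat{x}_i^{(r)} - (t - b_1/2 - \dots)|$ directly against the ideal target $t = T[x_i+j\alpha]$, whereas the paper tracks $\operatorname{dist}(\widehat{x}_i^{(r)}, B)$ and only at the very end observes that $\widehat{x}_i$ and $t$ must lie in the same standard interval of length $1/4$; your version gets the conclusion about $t$ more directly. (Incidentally, your version also avoids the typo in the paper's step 4 where $b_2/2$ should read $b_2/4$.)
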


\begin{proof}
Since $\textup{dist}(T[x_i + j \alpha], B) \geq 8 g(\varepsilon)$, it follows by \cref{clm:phase1} that $\textup{dist}(\widehat{x}_i, B) \geq 6g(\varepsilon)$, and in particular $|\widehat{x}_i| \geq 6g(\varepsilon)$. In step 1 we use a $G_{\times \lceil 1/g(\varepsilon)\rceil}$-gate on input $\widehat{x}_i$. By construction of the gate, it follows that $b_1 = T[\widehat{x}_i \times \lceil 1/g(\varepsilon)\rceil \pm (\lceil 1/g(\varepsilon)\rceil-1) g(\varepsilon)] = T[\widehat{x}_i \times \lceil 1/g(\varepsilon)\rceil \pm 1]$ where we used $|(\lceil 1/g(\varepsilon)\rceil-1) g(\varepsilon)| \leq 1$. Since $|\widehat{x}_i| \geq 6g(\varepsilon)$, it holds that $|\widehat{x}_i \times \lceil 1/g(\varepsilon) \rceil| \geq 6 \geq 2$. Thus, $b_1 \in \{-1,+1\}$ is the correct bit for $\widehat{x}_i$.

In step 2 we compute $\widehat{x}_i' = T[\widehat{x}_i - b_1/2 \pm 2g(\varepsilon)]$. Thus, we have $\textup{dist}(\widehat{x}_i', B) \geq 4g(\varepsilon)$, and in particular $|\widehat{x}_i'| \geq 4g(\varepsilon)$,  which implies that $b_2 = T[\widehat{x}_i' \times \lceil 1/g(\varepsilon)\rceil \pm 1] \in \{-1,+1\}$ is the correct first bit for $\widehat{x}_i'$ and the correct second bit for $\widehat{x}_i$.

In step 4 we compute $\widehat{x}_i'' = T[\widehat{x}_i' - b_2/2 \pm 2g(\varepsilon)]$. Thus, $|\widehat{x}_i''| \geq 2g(\varepsilon)$, which implies that $b_3 = T[\widehat{x}_i'' \times \lceil 1/g(\varepsilon)\rceil \pm 1] \in \{-1,+1\}$ is the correct first bit for $\widehat{x}_i''$, i.e.,\ the correct second bit for $\widehat{x}_i'$ and the correct third bit for $\widehat{x}_i$.

Finally, note that if $\textup{dist}(T[x_i + j \alpha], B) \geq 8 g(\varepsilon)$, then $T[x_i + j \alpha]$ and $\widehat{x}_i$ must lie in the same standard interval of length $1/4$, i.e.,\ they must have the same three bits.
\end{proof}

\paragraph{\textbf{Phase 3: simulation of $\lambda$.}}
Recall that $\lambda: [8]^N \to \{\pm 1, \dots, \pm N\}$ is the Boolean circuit computing the \textsc{high-D-Tucker} labeling. We interpret $[8]$ as a subdivision of $[-1,1]$ into standard intervals of length $1/4$. Namely, $1$ corresponds to $[-1,-3/4]$, $2$ to $[-3/4,-1/2]$, etc. Thus, $[8]^N$ can be interpreted as a subdivision of $[-1,1]^N$ into hypercubes of side-length $1/4$. With this in mind, we define $\overline{\lambda} : ([-1,1] \setminus B)^N \to \{\pm 1, \dots, \pm N\}$, so that for any $x \in ([-1,1] \setminus B)^N$, $\overline{\lambda}(x)$ is the label that $\lambda$ assigns to the hypercube containing $x$.

We can assume that the inputs of $\lambda$ consist of three bits each, such that the number represented by these three bits yields an element in $[8]$ (by using $[8] \equiv \{0,1,\dots,7\}$). The three bits $b_1,b_2,b_3 \in \{-1,+1\}$ extracted from $T[x_i + j \alpha]$ tell us exactly in which interval $T[x_i + j \alpha]$ lies. Note that if we were to map those bits to $\{0,1\}$ (where $-1 \mapsto 0$ and $+1 \mapsto 1$), then the bit-string $b_1b_2b_3$ would correspond to the number associated with the interval and would thus be the correct corresponding input to the circuit.

We re-interpret the circuit $\lambda$ as working on bits $\{-1,+1\}$, where $-1$ corresponds to $0$. Clearly, we can implement this circuit in $C_j$ with our Boolean gates. As long as the inputs to every gate are perfect bits (i.e.,\ in $\{-1,+1\}$), the output of the gate will also be a perfect bit, and will correspond to the result of the operation computed by the gate. The inputs to the circuit will be exactly the bits obtained in the bit extraction phase for each $\widehat{x}_i$. Thus, it follows that if the bit extraction phase succeeds, then the simulation of $\lambda$ will always have a correct output. In other words, using \cref{clm:phase2}, we obtain:

\begin{claim}\label{clm:phase3}
Assume that the circuit-simulator $C_j$ is not corrupted and $\textup{const}_j = +1$. If $\textup{dist}(T[x_i + j \alpha], B) \geq 8g(\varepsilon)$ for all $i \in [N]$, then the simulation phase outputs $\overline{\lambda}(T[x + j \alpha])$.
\end{claim}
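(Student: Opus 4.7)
The plan is to combine \cref{clm:phase2} with a straightforward induction over the gates simulating $\lambda$. First, fix any coordinate $i \in [N]$. The hypothesis $\textup{dist}(T[x_i + j\alpha], B) \geq 8g(\varepsilon)$, together with the assumptions that $C_j$ is not corrupted and $\textup{const}_j = +1$, lets me invoke \cref{clm:phase2} directly: the bit-extraction phase for $\widehat{x}_i$ outputs the three correct bits $b_1^{(i)}, b_2^{(i)}, b_3^{(i)} \in \{-1,+1\}$ corresponding to $T[x_i + j\alpha]$, i.e.\ the bits that identify which of the eight standard sub-intervals of $[-1,1]$ of length $1/4$ contains $T[x_i + j\alpha]$. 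In particular, the condition also guarantees $T[x_i + j\alpha] \notin B$, so the hypercube of $[8]^N$ containing $T[x+j\alpha]$ and hence $\overline{\lambda}(T[x+j\alpha])$ are well defined.

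Next, I appeal to the ``perfect-bit preservation'' property of the Boolean gates emphasized in \cref{sec:gates}: if every input to a Boolean gate ($G_{\lnot}$, $G_{\land}$, $G_{\lor}$) is a perfect bit in $\{-1,+1\}$, then the gate's output is a perfect bit and coincides with the correct Boolean result (provided $\varepsilon$ is small enough, which is assumed throughout). Ordering the gates of the simulation of $\lambda$ topologically, I proceed by induction on gate depth. The base case is immediate: the inputs to the first layer are either the extracted bits $b_1^{(i)}, b_2^{(i)}, b_3^{(i)}$, which are perfect bits by the previous paragraph, or the constant $\textup{const}_j = +1$, which is also a perfect bit. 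For the inductive step, every gate receives perfect bits as inputs by the induction hypothesis, so by the preservation property its output is again a perfect bit equal to the true Boolean value of the corresponding sub-computation of $\lambda$.

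Applying this inductive argument all the way to the output gates, each output of the simulation is a perfect bit equal to the value $\lambda$ would compute when fed, in each coordinate, the element of $[8]$ encoded by the bits $b_1^{(i)}, b_2^{(i)}, b_3^{(i)}$. By the identification of $[8]$ with the standard length-$1/4$ sub-intervals of $[-1,1]$, this input tuple is exactly the point of $[8]^N$ whose hypercube contains $T[x+j\alpha]$. By the definition of $\overline{\lambda}$, the output of the simulation phase therefore equals $\overline{\lambda}(T[x+j\alpha])$, as required.

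There is no real obstacle here: the claim is essentially a bookkeeping corollary of \cref{clm:phase2} and of the way Boolean gates were constructed. The only minor care point is making sure that the hypothesis is applied to every coordinate $i$ so that \emph{all} inputs fed into the Boolean circuit $\lambda$ are simultaneously perfect bits, which is exactly what the quantifier ``for all $i \in [N]$'' in the statement provides.
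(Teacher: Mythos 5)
Your proof is correct and takes essentially the same route as the paper: invoke \cref{clm:phase2} coordinate-by-coordinate to obtain perfect bits, then propagate the perfect-bit preservation property of the Boolean gates inductively through the simulated circuit (noting, as you do, that $\textup{const}_j = +1$ supplies perfect-bit constants where needed). The paper states this argument informally in the paragraph preceding the claim; your write-up just makes the induction explicit.
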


\paragraph{\textbf{Phase 4: output into the Feedback region.}}
For convenience, we will assume that the output of the Boolean circuit $\lambda$ is encoded in a particular way. It is easy to see that this is without loss of generality, since we can always modify $\lambda$ so that it follows this encoding. The output of $\lambda$ is an element in $\{\pm 1, \dots, \pm N\}$. The encoding we choose uses $2N$ bits $y_1^a,y_1^b, y_2^a,y_2^b, \dots, y_N^a,y_N^b$ to encode such an element. The element $+i$ is represented by $y_i^a = y_i^b = 1$ and $y_\ell^a = 1$, $y_\ell^b = 0$ for all $\ell \neq i$. The element $-i$ is represented by $y_i^a = y_i^b = 0$ and $y_\ell^a = 0$, $y_\ell^b = 1$ for all $\ell \neq i$.

Recall that in the simulation of $\lambda$ inside $C_j$, we actually use the bits $\{-1,+1\}$ instead of $\{0,1\}$. Thus, output $+i$ is represented by $y_i^a = y_i^b = +1$ and $y_\ell^a = +1$, $y_\ell^b = -1$ for all $\ell \neq i$. Whereas the element $-i$ is represented by $y_i^a = y_i^b = -1$ and $y_\ell^a = -1$, $y_\ell^b = +1$ for all $\ell \neq i$. For any $i \in [N]$ and any $z \in ([-1,1] \setminus B)^N$ define $\overline{\lambda}_i(z)$ to be:
\begin{itemize}
    \item $\overline{\lambda}_i(z) = +1$ if $\overline{\lambda}(z) = +i$
    \item $\overline{\lambda}_i(z) = -1$ if $\overline{\lambda}(z) = -i$
    \item $\overline{\lambda}_i(z) = 0$ otherwise.
\end{itemize}
Then, by \cref{clm:phase3} we obtain that $T[y_i^a + y_i^b] = \overline{\lambda}_i(T[x + j \alpha])$.

In this last phase, for each $i \in [N]$ we compute $T[y_i^a + y_i^b]$ and copy this value into the Feedback region, namely into interval $F_i(j)$ (recall that $C_j$ is the current circuit-simulator). For this we first use a $G_{+}$-gate and then a $G_{\textup{copy}}$-gate. Thus, we immediately obtain:

\begin{claim}\label{clm:phase4}
Assume that the circuit-simulator $C_j$ is not corrupted and $\textup{const}_j = +1$. If $\textup{dist}(T[x_i + j \alpha], B) \geq 8g(\varepsilon)$ for all $i \in [N]$, then $[C_j(x, \textup{const}_j)]_i := v(F_i(j)) = T[\overline{\lambda}_i(T[x + j \alpha]) \pm 2g(\varepsilon)]$ for all $i \in [N]$.
\end{claim}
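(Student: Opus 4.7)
The plan is to derive Claim~\ref{clm:phase4} as a direct corollary of Claim~\ref{clm:phase3}, coupled with the per-gate error analysis already established in \cref{sec:gates}. First I would invoke Claim~\ref{clm:phase3}: under the stated hypotheses (no corruption, $\textup{const}_j = +1$, and all coordinates well-separated from $B$), the simulation phase produces perfect bits $y_1^a, y_1^b, \ldots, y_N^a, y_N^b \in \{-1,+1\}$ which correctly encode $\overline{\lambda}(T[x + j \alpha])$ according to the chosen encoding. This is the only nontrivial input to the argument; the remainder of the proof is a routine bookkeeping of errors produced by two additional gates per output dimension.

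Next, for each $i \in [N]$, I would verify by case analysis on the encoding that $T[y_i^a + y_i^b]$ equals $\overline{\lambda}_i(T[x + j\alpha])$ \emph{exactly}. There are three cases: (i) if $\overline{\lambda}(T[x + j\alpha]) = +i$, then $y_i^a = y_i^b = +1$, so $y_i^a + y_i^b = 2$ and $T[y_i^a + y_i^b] = +1 = \overline{\lambda}_i$; (ii) if $\overline{\lambda}(T[x+j\alpha]) = -i$, then $y_i^a = y_i^b = -1$, giving $T[y_i^a + y_i^b] = -1 = \overline{\lambda}_i$; (iii) otherwise, $y_i^a$ and $y_i^b$ have opposite signs, so $y_i^a + y_i^b = 0 = \overline{\lambda}_i$. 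In all three cases the equality holds on the nose.

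Finally, I would track the accumulated error introduced by phase~4's two gates. The value $y_i^a + y_i^b$ is computed by a $G_{+}$-gate, which by its construction contributes error at most $g(\varepsilon) = 16\varepsilon$; the result is then copied into $F_i(j)$ by a $G_{\textup{copy}}$-gate (two successive $G_{\times(-1)}$-gates), contributing at most an additional $8\varepsilon$. Since $C_j$ is not corrupted, neither gate is corrupted, so the truncated arithmetic guarantees apply, and the two error contributions add. Hence
\[
v(F_i(j)) \;=\; T\bigl[\, \overline{\lambda}_i(T[x + j \alpha]) \;\pm\; (16\varepsilon + 8\varepsilon)\,\bigr] \;=\; T\bigl[\, \overline{\lambda}_i(T[x + j\alpha]) \pm 2g(\varepsilon)\,\bigr],
\]
since $24\varepsilon \leq 32\varepsilon = 2 g(\varepsilon)$, which is exactly the claimed bound.

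The only subtlety worth flagging is case (iii): here $y_i^a + y_i^b = 0$, so the outer truncation $T[\cdot]$ is vacuous and the gate errors propagate additively without being ``absorbed'' by saturation; one must therefore carry the full $\pm 2g(\varepsilon)$ slack rather than hoping for cancellation. Everything else is mechanical, since the hypotheses of Claim~\ref{clm:phase3} (no corruption, $\textup{const}_j = +1$, separation from $B$) are exactly the hypotheses we assume here, so no further conditions need to be verified.
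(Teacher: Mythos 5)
Your proposal is correct and follows exactly the route the paper takes (the paper states the claim as an immediate consequence of \cref{clm:phase3}, the identity $T[y_i^a+y_i^b]=\overline{\lambda}_i(T[x+j\alpha])$ for the chosen encoding, and the error bounds $16\varepsilon$ for $G_{+}$ and $8\varepsilon$ for $G_{\textup{copy}}$, giving $24\varepsilon \leq 2g(\varepsilon)$). Your explicit case analysis of the encoding and the remark about the truncation being vacuous when $\overline{\lambda}_i=0$ simply spell out details the paper leaves implicit.
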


\subsection{Proof of Correctness}\label{sec:ppacorrectness}

In this section we prove that the reduction works, i.e.,\ from any solution to the \textsc{Consensus-Halving} instance, we can obtain a solution to the original \textsc{high-D-Tucker} instance in polynomial time. In order to do this, we consider two cases and show that we can retrieve a solution in both cases. The first case corresponds to a ``well-behaved'' solution where there are no stray cuts. The second case corresponds to a solution with stray cuts. In both cases, we show that, if $x$ denotes the point encoded by the Coordinate-Encoding region in a solution of the Consensus-Halving instance, then the set $\{T[x + j \alpha] : j \in \{\pm 1,\dots, \pm p(N)\}\}$ must contain two points that have opposite labels in the original \textsc{high-D-Tucker} instance. Two such points can then easily be extracted in polynomial time by simply computing the labels of all $2p(N)$ points in that set. Note also that all points in the set are very close to each other by construction and so the two selected points will lie in adjacent hypercubes of the \textsc{high-D-Tucker} instance (as defined in Phase 3).\\

\noindent We set $p(N)=4N^2$ and pick $\varepsilon$ such that $16 g(\varepsilon) \leq \frac{1}{16p(N)}$, i.e.,\ $\varepsilon \leq \frac{1}{2^{14}N^2}$.

\begin{lemma}\label{lem:no-stray}
Let $S$ be any $\varepsilon$-approximate solution for the \textsc{Consensus-Halving} instance. If $S$ does not have any stray cuts, then it yields a solution to the \textsc{high-D-Tucker} instance in polynomial time.
\end{lemma}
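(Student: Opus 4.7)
The plan is to combine two observations: first, that without stray cuts every circuit-simulator $C_j$ behaves as intended (so \cref{clm:phase4} applies whenever bit extraction succeeds), and second, that the small equi-angle displacement $\alpha = 1/(16 p(N))$ forces the hypercubes visited by $T[x + j\alpha]$ as $j$ varies to be pairwise $\ell_\infty$-adjacent in $[8]^N$. Concretely, I would first note that since $S$ has no stray cuts, no simulator is corrupted and the Constant-Creation region contains no cuts; swapping the $+/-$ labels globally if needed, one may assume $\textup{const}_j = +1$ for every $j$. Letting $x \in [-1,1]^N$ be the point encoded in the Coordinate-Encoding region and $H(j) \in [8]^N$ the hypercube of the standard $1/4$-subdivision of $[-1,1]^N$ containing $T[x + j\alpha]$, the Tucker adjacency is immediate from the $1$-Lipschitzness of $T$: for any $j, j' \in [p(N)]$,
\begin{equation*}
\|T[x + j\alpha] - T[x + j'\alpha]\|_\infty \leq |j-j'|\alpha \leq p(N)\alpha = 1/16 < 1/4,
\end{equation*}
so $\|H(j) - H(j')\|_\infty \leq 1$. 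It therefore suffices to produce two non-failing indices $j_1, j_2$ with $\overline{\lambda}(T[x + j_1 \alpha]) = -\overline{\lambda}(T[x + j_2 \alpha])$, and then $(H(j_1), H(j_2))$ is the required \textsc{high-D-Tucker} solution.

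I would next bound the set $F \subseteq [p(N)]$ of \emph{failing} simulators, defined as those $j$ for which $\textup{dist}(T[x_i + j\alpha], B) < 8 g(\varepsilon)$ for some $i$. Fixing $i$, the image $\{T[x_i + j\alpha] : j \in [p(N)]\}$ is monotone in $j$ and lies in an interval of length at most $p(N)\alpha = 1/16 < 1/4$; since the seven points of $B$ are spaced $1/4$ apart, this interval meets at most one $p \in B$, and the $j$'s with $|T[x_i + j\alpha] - p| < 8 g(\varepsilon)$ form a contiguous range of length at most $16 g(\varepsilon)/\alpha \leq 1$, containing at most $2$ integers. Summing over $i$ gives $|F| \leq 2N$.

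The main argument then combines the feedback constraint with an integrality trick. Define $V_i := \sum_{j \notin F} \overline{\lambda}_i(T[x + j\alpha])$, which is an integer since each summand lies in $\{-1, 0, +1\}$. Using \cref{clm:phase4} for $j \notin F$ and the trivial bound $|v(F_i(j))| \leq 1$ for $j \in F$, the feedback constraint $|\sum_j v(F_i(j))| \leq p(N)\varepsilon$ yields
\begin{equation*}
|V_i| \leq p(N)\varepsilon + |F| + 2 p(N) g(\varepsilon) < |F| + 1,
\end{equation*}
where the strict inequality uses that $16 g(\varepsilon) \leq 1/(16 p(N))$ makes the first and third summands each much less than $1$. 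Integrality then upgrades this to $|V_i| \leq |F|$. Now suppose for contradiction that no two non-failing indices produce antipodal labels. Then for each $i$, all non-failing $j$ with $|\overline{\lambda}(T[x + j\alpha])| = i$ share a single sign, so $|V_i|$ equals the count of such $j$'s, and summing over $i$ gives
\begin{equation*}
p(N) - |F| \;=\; \sum_{i=1}^N |V_i| \;\leq\; N \cdot |F|,
\end{equation*}
i.e.\ $4N^2 - 2N \leq 2N^2$, which fails for $N \geq 2$ (the case $N=1$ reduces to a trivial one-dimensional Tucker problem). The hardest point I anticipate is getting the constants to actually close: the integrality upgrade of $|V_i|$ is essential, because a weaker $|V_i| \leq |F| + O(1)$ bound would inflate the right-hand side of the pigeonhole beyond $p(N) - |F|$ and destroy the contradiction.
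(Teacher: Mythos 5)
Your proof is correct, and it follows the paper's overall strategy for this lemma: no stray cuts $\Rightarrow$ no corrupted simulators and $\textup{const}_j=+1$ after a global label swap; at most $O(N)$ simulators can fail bit extraction because the displaced points $T[x_i+j\alpha]$ are spaced $\alpha\geq 16g(\varepsilon)$ apart while the ``bad'' window around each point of $B$ has width $16g(\varepsilon)$ (the paper in fact gets at most one failure per dimension, your bound of two per dimension is a harmless over-count); and all the points $T[x+j\alpha]$ lie in pairwise adjacent hypercubes since $p(N)\alpha\leq 1/16$. Where you genuinely diverge is the final counting step. The paper pigeonholes over dimensions: it finds a single dimension $i$ in which at least $|X|/N$ non-failing simulators output the same nonzero sign (otherwise an antipodal pair exists), and then shows the feedback sum in that one dimension exceeds $p(N)\varepsilon$, contradicting agent $f_i$. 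You instead use every feedback constraint at once: each integer $V_i=\sum_{j\notin F}\overline{\lambda}_i(T[x+j\alpha])$ is forced to satisfy $|V_i|\leq |F|$, and summing over $i$ under the no-antipodal-pair assumption gives $p(N)-|F|\leq N|F|$, impossible for $N\geq 2$ with $p(N)=4N^2$. Both arguments close with comparable constants (the paper's also only works ``for $N$ sufficiently large,'' so your separate treatment of $N=1$ by brute force is on equal footing); your version is arguably slightly cleaner in that it never has to single out a dimension or split into the $+1$/$-1$ cases. One small inaccuracy in your closing remark: the integrality upgrade is not actually essential---even the non-integral bound $|V_i|<|F|+1$ yields $p(N)-|F|<N(|F|+1)\leq 2N^2+N$, which still contradicts $p(N)-|F|\geq 4N^2-2N$ for $N\geq 2$---but keeping it does no harm.
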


\begin{proof}
Since there are no stray cuts, none of the circuit-simulators is corrupted. In particular, there is no cut strictly inside the Constant-Creation region. Thus, without loss of generality we can assume that the whole Constant-Creation region is labeled $+$. This means that all circuit-simulators read-in the constant $+1$, i.e.,\ $\textup{const}_j = +1$ for all $j$.

Since the circuit-simulators are not corrupted, the only way in which they can fail is if the bit extraction phase fails. Let $x \in [-1,1]^N$ be the point represented by the Coordinate-Encoding region in $S$. Let $z^{(j)} = T[x + j \alpha]$.

Consider any $i \in [N]$. First of all, note that if $z^{(j)}_i \in \{-1,+1\}$, then the bit extraction will not fail (for dimension $i$). So, we ignore any such points. For all $j \neq \ell$ such that $z^{(j)}_i, z^{(\ell)}_i \notin \{-1,+1\}$ it holds that
$$\alpha \leq |z^{(j)}_i - z^{(\ell)}_i| \leq p(N) \alpha.$$
Since $p(N) \alpha \leq 1/16$ and $\alpha = \frac{1}{16p(N)} \geq 16 g(\varepsilon)$, it follows that there exists at most one $j^*$ such that $z^{(j^*)}_i$ lies too close to $B = \{-3/4,-1/2,-1/4,0,1/4,1/2,3/4\}$. Namely, there exists $j^* \in [p(N)]$ such that for all $j \in [p(N)] \setminus \{j^*\}$, we have $\textup{dist}(z^{(j)}_i,B) \geq 8 g(\varepsilon)$. By \cref{clm:phase2} it follows that the bit extraction phase in dimension $i$ fails in at most one circuit-simulator.

We thus obtain that the bit extraction (in any dimension) fails in at most $n$ circuit-simulators. Let $X \subseteq [p(N)]$ be such that the bit extraction does not fail in $C_j$ for all $j \in X$. Then, we have shown that $|X| \geq p(N) - N \geq p(N) -2N$. By \cref{clm:phase3}, we get that for all $j \in X$, phase 3 of the circuit-simulator $C_j$ outputs $\overline{\lambda}(z^{(j)})$, i.e., the correct label for $z^{(j)}$ (which lies in $([-1,1] \setminus B)^N$).

Since $\|z^{(j)} - z^{(\ell)}\|_\infty \leq p(N) \alpha \leq 1/16$ for all $j, \ell \in [p(N)]$, all the points $z^{(j)}$ lie in (pairwise) adjacent hypercubes of $[-1,1]^N$ (as defined in phase 3). Thus, in order to find a solution to the \textsc{high-D-Tucker} instance, it suffices to find $j, \ell \in X$ such that $\overline{\lambda}(z^{(j)}) = - \overline{\lambda}(z^{(\ell)})$.

By \cref{clm:phase4}, we have that for every $j \in X$ the circuit-simulator $C_j$ outputs $$[C_j(x, \textup{const}_j)]_1, \dots, [C_j(x, \textup{const}_j)]_N$$ such that for all $i \in [N]$ we have $[C_j(x, \textup{const}_j)]_i \approx \overline{\lambda}_i(z^{(j)})$ (up to error $2g(\varepsilon)$). It holds that for every $j \in X$, there exists $i_j \in [N]$ such that 
$|\overline{\lambda}_{i_j}(z^{(j)})| = 1$. Thus, there exist $i \in [N]$ and $X_i \subseteq X$ such that 
\[
|X_i| \geq |X|/N, |\overline{\lambda}_i(z^{(j)})| = 1 \ \  \text{for all } \ \ j \in X_i \ \ \text{ and } \ \ \overline{\lambda}_i(z^{(j)})| = 0 \ \ \text{ for all } \ \ j \in X \setminus X_i.\] If there exist $j, \ell \in X_i$ such that $\overline{\lambda}_i(z^{(j)}) = - \overline{\lambda}_i(z^{(\ell)})$, then $z^{(j)}$ and $z^{(\ell)}$ have opposite labels and we are done. Thus, assume that $\overline{\lambda}_i(z^{(j)}) = +1$ for all $j \in X_i$ (the case with $-1$ instead works analogously). It follows that \[
[C_j(x, \textup{const}_j)]_i \geq 1 - 2g(\varepsilon) \ \ \text{ for all } j \in X_i.\] Since we also have $[C_j(x, \textup{const}_j)]_i \geq -2g(\varepsilon)$ for all $j \in X \setminus X_i$, we can write:
\begin{equation*}
\begin{split}
\sum_{j=1}^{p(N)} [C_j(x, \textup{const}_j)]_i &= \sum_{j \in X} [C_j(x, \textup{const}_j)]_i + \sum_{j \in [p(N)] \setminus X} [C_j(x, \textup{const}_j)]_i\\  
&\geq |X_i| - p(N)2g(\varepsilon) - |[p(N)] \setminus X|\\  
&\geq \frac{p(N)-2N}{N} - p(N)2g(\varepsilon) - 2N \geq 2N-2-2^7/2^{14} > p(N) \varepsilon
\end{split}
\end{equation*}
for $N$ sufficiently large, where we used $p(N) = 4N^2$ and $\varepsilon \leq \frac{1}{2^{14}N^2}$. However, recall that feedback agent $f_i$ ensures that $\sum_{j=1}^{p(N)} [C_j(x, \textup{const}_j)]_i = \sum_{j=1}^{p(N)} v(F_i(j)) \in [-p(N)\varepsilon, p(N)\varepsilon]$. So, we have obtained a contradiction.
\end{proof}

\begin{lemma}\label{lem:stray}
Let $S$ be any $\varepsilon$-approximate solution for the \textsc{Consensus-Halving} instance. If $S$ has at least one stray cut, then it yields a solution to the \textsc{high-D-Tucker} instance in polynomial time.
\end{lemma}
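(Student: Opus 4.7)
The plan is to adapt the pigeonhole argument of \cref{lem:no-stray} to the presence of stray cuts. The new complications are twofold. First, a stray cut inside a circuit-simulator region \emph{corrupts} that circuit (at most $s_{CS}$ cases), and a stray cut inside the Constant-Creation region can produce $\textup{const}_j \notin \{-1,+1\}$ for at most $s_{CC}$ distinct $j$. Since $s_{CS}+s_{CC}\leq N$, together these spoil at most $N$ circuit-simulators. Second, among the uncorrupted circuits with $\textup{const}_j \in \{-1,+1\}$, some may read the constant $-1$. These remaining circuits still suffer from the usual bit-extraction failures, but the same spacing argument used in \cref{lem:no-stray} (applied separately to the sequences $T[x_i + j\alpha]$ and $T[x_i - j\alpha]$) caps the failures at $N$ per sign of $\textup{const}_j$, for a total of at most $2N$. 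Hence I can define a ``good'' set $X \subseteq [p(N)]$ of size at least $p(N) - 3N$.

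The main obstacle is to reinterpret the output of a good circuit with $\textup{const}_j=-1$ so that it becomes comparable with those of good circuits with $\textup{const}_j=+1$. For such a $j$, the equivariance of the gates (\cref{rem:equivariant-gates}) gives
\[
v(F_i(j)) \;\approx\; -\overline{\lambda}_i(T[-x + j\alpha]).
\]
This is where the Tucker boundary condition enters. By \cref{obs:stray} there is some $i^*$ with $x_{i^*}\in\{-1,+1\}$, and a quick case analysis shows that in either case the $i^*$-th coordinate of $T[-x+j\alpha]$ lies in $[-1,-3/4]\cup[3/4,1]$. Thus the hypercube of $T[-x+j\alpha]$ sits in the outermost layer of $[8]^N$, and the antipodal anti-symmetry $\overline{\lambda}(-z) = -\overline{\lambda}(z)$ applies there. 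Combined with the oddness of the truncation $T$, this collapses into the clean identity
\[
-\overline{\lambda}_i(T[-x + j\alpha]) \;=\; \overline{\lambda}_i(-T[-x + j\alpha]) \;=\; \overline{\lambda}_i(T[x - j\alpha]).
\]
Setting $y^{(j)} := T[x+j\alpha]$ when $\textup{const}_j=+1$ and $y^{(j)} := T[x-j\alpha]$ when $\textup{const}_j=-1$, every good circuit outputs a value approximating $\overline{\lambda}_i(y^{(j)})$, exactly as in the no-stray case.

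The finish mirrors \cref{lem:no-stray}. Because $\|y^{(j)} - y^{(\ell)}\|_\infty \leq 2 p(N)\alpha = 1/8 < 1/4$ for all $j,\ell\in X$, the corresponding hypercubes in $[8]^N$ are pairwise adjacent. A pigeonhole over dimensions yields $i\in[N]$ and $X_i\subseteq X$ of size at least $(p(N)-3N)/N \geq 4N-3$ on which $|\overline{\lambda}_i(y^{(j)})|=1$. If two elements of $X_i$ carry opposite signs, their hypercubes directly give a \textsc{high-D-Tucker} solution. Otherwise they all share one sign (WLOG $+1$), and the feedback agent $f_i$ sees a sum $\sum_j v(F_i(j)) \geq (4N-3) - 2g(\varepsilon)p(N) - 3N = N - 3 - O(\varepsilon N^2)$, which strictly exceeds the allowed slack $p(N)\varepsilon$ for $N$ sufficiently large---a contradiction. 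In contrast with \citep{FRG18-Necklace}, no M\"obius-type embedding or pre-processing circuit is needed to handle the sign flips: the boundary of $[-1,1]^N$ together with the equivariant gates already supplies the required symmetry.
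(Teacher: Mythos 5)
Your proof is correct and follows essentially the same route as the paper's: equivariance turns an uncorrupted circuit with $\textup{const}_j=-1$ into one whose output is $-\overline{\lambda}_i(T[-x+j\alpha])$, \cref{obs:stray} places $T[-x+j\alpha]$ in a boundary hypercube so the antipodal anti-symmetry (plus oddness of $T$) rewrites this as $\overline{\lambda}_i(T[x-j\alpha])$, and then the spacing, pigeonhole and feedback-agent contradiction of \cref{lem:no-stray} go through unchanged. The only differences are organizational: you apply the boundary symmetry uniformly up front and run a single pigeonhole over the points $y^{(j)}$, whereas the paper pigeonholes the raw outputs and invokes the boundary condition only in the mixed-constant case, and your count $|X|\geq p(N)-3N$ is slightly looser than the paper's $p(N)-2N$, but both suffice for the final inequality.
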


\begin{proof}
As explained earlier, stray cuts can affect the circuit-simulators in two ways. First of all, a stray cut can corrupt a circuit-simulator. Since there are at most $N$ stray cuts, at most $N$ circuit-simulators can be corrupted. The other way in which the circuit-simulators can be affected, is if there are stray cuts in the Constant-Creation region and $\textup{const}_j = -1$ for some $C_j$, and $\textup{const}_\ell = +1$ for some other $C_\ell$.

Let us now take a look at what happens if a circuit-simulator $C_j$ has $\textup{const}_j = -1$. Since all the gates in $C_j$ are equivariant (\cref{rem:equivariant-gates}), it follows that $C_j$ with inputs $x_1, \dots, x_N$ and $\textup{const}_j$ is also equivariant. This implies that 
\[
[C_j(x, -1)]_i = - [C_j(-x, +1)]_i \text{ for all } i \in [N].\]
Thus, the output of $C_j$ is the negation of a valid output that $C_j$ would have had if its inputs were $-x$ and $\textup{const}_j = +1$ instead. We use the term ``a valid output'' instead of ``the output'', because $C_j$ can have multiple valid outputs for a single input (because every gate is allowed a small error). So, we are saying that the output of $C_j$ on inputs $x,-1$ is the negation of a valid output on inputs $-x,+1$. Thus, we immediately obtain that the analogous statements for Claims \ref{clm:phase1},\ref{clm:phase2},\ref{clm:phase3} and \ref{clm:phase4} also hold in this case. In other words, we get that if circuit-simulator $C_j$ is not corrupted, and if $\textup{dist}(T[x_i + j \alpha], B) \geq 8g(\varepsilon)$, then 
\[
[C_j(x, +1)]_i = \overline{\lambda}_i(T[x + j \alpha]) \pm 2g(\varepsilon) \ \ \text{ and } \ \  [C_j(x, -1)]_i = -\overline{\lambda}_i(T[-x + j \alpha]) \pm 2g(\varepsilon).
\]
Consider any $i \in [N]$. Using the same argument as in the proof of \cref{lem:no-stray}, it is easy to show that at most one of the points 
\[
T[x_i + \alpha], \dots, T[x_i + p(N) \alpha] \ \  \text{ and } \ \ T[x_i - \alpha], \dots, T[x_i - p(N) \alpha]\] lies within distance at most $8g(\varepsilon)$ of $B$. It follows that at most one of the points
\[
T[x_i + \alpha], \dots, T[x_i + p(N) \alpha] \ \ \text{ and } \ \  T[-x_i + \alpha], \dots, T[-x_i + p(N) \alpha]
\]lies within distance at most $8g(\varepsilon)$ of $B$. Thus, we again have that the bit extraction phase fails in at most $N$ circuit-simulators.

Let $X \subseteq [p(N)]$ be such that for all $j \in X$, $C_j$ is not corrupted and the bit extraction does not fail in $C_j$. Then we have shown that $|X| \geq p(N) - 2N$. The same argument as in the proof of \cref{lem:no-stray} yields that there must exist $i \in [N]$ and $j, \ell \in X$ such that 
\[
[C_j(x, \textup{const}_j)]_{i} = +1 \pm 2g(\varepsilon) \ \  \text{ and } \ \ [C_\ell(x, \textup{const}_\ell)]_{i} = -1 \pm 2g(\varepsilon).\]
If $\textup{const}_j = \textup{const}_\ell = +1$, then it follows that \[
\overline{\lambda}_i(T[x + j \alpha]) = +1\ \ \text{ and } \ \  \overline{\lambda}_i(T[x + \ell \alpha]) = -1.\]
Thus, we have obtained two points in adjacent hypercubes that have opposite labels. If $\textup{const}_j = \textup{const}_\ell = -1$, then it follows that \[
\overline{\lambda}_i(T[-x + j \alpha]) = -1 \ \ \text{ and } \ \ \overline{\lambda}_i(T[-x + \ell \alpha]) = +1.\] Thus, we again have obtained two points in adjacent hypercubes that have opposite labels.

Let us now see what happens in the case where $\textup{const}_j = - \textup{const}_\ell$. Without loss of generality assume that $\textup{const}_j = +1$ and $\textup{const}_\ell = -1$. Then, we obtain that $\overline{\lambda}(T[x+j \alpha]) = \overline{\lambda}(T[-x+\ell \alpha])$. Since there is at least one stray cut in the solution $S$, by \cref{obs:stray} we know that the point $x \in [-1,1]^N$ encoded by the Coordinate-Encoding region must lie on the boundary of $[-1,1]^N$. Thus, since $p(N) \alpha \leq 1/16$, $u=T[x+j \alpha]$ and $v=T[-x+\ell \alpha]$ must lie in hypercubes on the boundary. Since $2 p(N) \alpha \leq 1/8$, we know that $u$ and $-v$ lie in adjacent hypercubes. However, by the boundary conditions of the \textsc{high-D-Tucker} instance $\lambda$, we have $\overline{\lambda}(u) = \overline{\lambda}(v) = -\overline{\lambda}(-v)$. Thus, $u$ and $-v$ yield adjacent hypercubes with opposite labels, and thus a solution to the \textsc{high-D-Tucker} instance. Note that $u$ and $-v$ cannot lie in the same hypercube, because of the boundary conditions of $\lambda$.
\end{proof}

\section{Algorithms for single-block valuations}

In the previous section, we proved that even when we have $2$-block uniform valuations, the problem remains PPA-hard (even when we are allowed to use $n + n^{1-\delta}$ cuts, for some constant $\delta>0$). In this section, we will consider the natural case that is not covered by our hardness, that of single-block valuations. Our main results of the section are (a) an algorithm for solving the problem to any precision $\varepsilon$ (where $\varepsilon$ appears polynomially in the running time), which is parameterized by the maximum number of intersection between the blocks of different agents and (b) a polynomial-time algorithm for $1/2$-\ch. The latter algorithm generalizes to the case of $d$-block uniform valuations (in fact, even to the case of piecewise constant valuations with $d$ blocks) if one is allowed to use $d\cdot n$ cuts instead of $n$. Towards the end of the section, we also present a simple idea based on linear programming, which allows us to solve the \ch\ problem in polynomial time, when we are allowed to use $2n-\ell$ cuts, for any $\ell$ which is constant.

\subsection{A parameterized algorithm for \texorpdfstring{$\varepsilon$}{ε}-\ch}\label{sec:dpalgo}

We start with the algorithm for 
solving $\varepsilon$-\ch\ that is parameterized by
the \textit{maximum intersection} between the 
probability measures. In particular, if $d$ is the
maximum number of measures with positive density at
any point $x \in [0, 1]$ and the maximum value of the
densities is at most $M$ then we provide a dynamic
programming algorithm that computes an
$\varepsilon$-approximate solution in time 
$O\left(\left(\frac{M}{\varepsilon}\right)^d \cdot \mathrm{poly}(M/\varepsilon, n, d) \right)$.
We start with the formal definition of the maximum 
intersection quantity. In this section we denote by 
$f_i$ the probability density function of the 
probability measure $\mu_i$.

\begin{definition}[\textsc{Maximum Intersection \& Maximum Value}]
    We say that a single-block instance of 
  $\varepsilon$-\ch\ has 
  \textit{maximum intersection} $d$ if for every
  $x \in [0, 1]$ it holds that the set
  $\mathcal{R}(x) = \{i \in [n] \mid f_i(x) > 0\}$,
  has cardinality 
  $\left|\mathcal{R}(x)\right| \le d$. We also
  say that the instance has 
  \textit{maximum value} $M$ if for every 
  $i \in [n]$ and every $x \in [0, 1]$ it holds
  that $f_i(x) \le M$. This is equivalent to
  $b_i - a_i \ge 1/M$.
\end{definition}

\noindent For the rest of the section, we often refer
to the following quantity.

\begin{definition}[\textsc{Value of Balance}]
    For any $\varepsilon$-\ch\ instance 
  $\mathcal{C} = \{\mu_1, \dots, \mu_n\}$, any
  vector of cuts $\vec{s}$ and any 
  $z \in [0, 1]$ let 
  $b_i(\vec{s}; z)$ be the mass with respect to 
  $\mu_i$ of the part of the interval $[z, 1]$ labeled
  ``$+$'' minus the part of the interval labeled
  ``$-$'', when split with the cuts $\vec{s}$.
  Formally, $b_i(\vec{s}; z) = \mu_i([z,1]^{+}) - \mu_i([z,1]^{-})$,
  given the set of cuts $\vec{s}$.
\end{definition}
  
  The first step of the algorithm is to 
discretize the interval $[0, 1]$ into
intervals of length $\varepsilon/(2 M)$. Hence, we 
split the interval $[0, 1]$ into 
$m = 2 M/\varepsilon$ equal subintervals of the 
form $p_{\ell} = [(\ell - 1)/m, \ell/m]$, where
$\ell \in [m]$. The following claim shows the 
sufficiency of our discretization and follows 
very easily from the above definitions.

\begin{claim} \label{clm:sufficientDiscretization}
    Let 
  $Q_m = \left\{\frac{\ell - 1}{m} \mid \ell \in [m] \right\}$,
  where $m = M/\varepsilon'$ and let 
  $\mathcal{C} = \{\mu_1, \dots, \mu_n\}$ be a
  single-block $\varepsilon$-\ch\ instance with 
  maximum value $M$. Then we define the
  \emph{rounded} single-block instance
  $\mathcal{C}' = \{\mu'_1, \dots, \mu'_n\}$ 
  where $a'_i$ and $b'_i$ are equal to the
  number of $Q_m$ that is closer to $a_i$ and 
  $b_i$ respectively. Then every 
  $\varepsilon$-\ch\ solution of
  $\mathcal{C}'$ is an $(\varepsilon + \varepsilon')$-\ch\ solution of
  $\mathcal{C}$.
  Additionally, there exists a solution to the
  $\varepsilon'$-\ch\ problem where
  the positions of the cuts lie in the set
  $Q_m$.
\end{claim}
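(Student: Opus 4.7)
The plan is to prove each of the two assertions of the claim in turn, as they require rather different arguments.

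\textbf{First assertion.} I would bound the change in discrepancy by an $L^{1}$ estimate on the densities. For any fixed set of cuts producing labels $(I^+, I^-)$, observe that
\[
\bigl|(\mu_i(I^+) - \mu_i(I^-)) - (\mu_i'(I^+) - \mu_i'(I^-))\bigr| = \Bigl|\int_0^1 (\mathbf{1}_{I^+} - \mathbf{1}_{I^-})(f_i - f_i')\,dx\Bigr| \le \|f_i - f_i'\|_1.
\]
Each endpoint $a_i, b_i$ has moved by at most $1/(2m) = \varepsilon'/(2M)$, so the symmetric difference of the supports $[a_i,b_i]$ and $[a_i',b_i']$ has Lebesgue measure at most $1/m$. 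Since the density values there are bounded by $M$, and the densities are identical (or, under the renormalization $v_i' = 1/(b_i'-a_i')$, negligibly different thanks to the lower bound $b_i - a_i \ge 1/M$) on the overlap, one obtains $\|f_i - f_i'\|_1 \le \varepsilon'$. This immediately implies that every $\varepsilon$-solution of $\mathcal{C}'$ is an $(\varepsilon+\varepsilon')$-solution of $\mathcal{C}$.

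\textbf{Second assertion.} For the existence of a grid-aligned solution, the plan is to start from a $0$-Consensus-Halving solution of $\mathcal{C}'$ (which exists with at most $n$ cuts by the Hobby--Rice theorem) and to reposition each cut onto $Q_m$. The key structural observation is that on each grid cell $p_\ell$ every density $f_i'$ is constant, so moving a cut within $p_\ell$ affects each $\mu_i'$-discrepancy only through the total length within $p_\ell$ that is reassigned between ``$+$'' and ``$-$''. Using this, the cuts can be snapped to grid endpoints in a coordinated way while keeping each $\mu_i'$-discrepancy within $\varepsilon'$; by the first assertion, the resulting configuration is then an $\varepsilon'$-solution of $\mathcal{C}$ with all cuts in $Q_m$.

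\textbf{Main obstacle.} The hard part is the second assertion: a naive independent rounding of each cut to its nearest grid point could accumulate an error of order $n\varepsilon'$ in a single agent's discrepancy, since many cuts may lie inside the same agent's block. The argument must therefore carry out the rounding globally, leveraging the piecewise-constant structure of $\mathcal{C}'$ on the grid. A natural way to do this is via a polytope formulation: the set of feasible $x \in [0,1]^m$ (where $x_\ell$ denotes the fraction of $p_\ell$ labeled ``$+$'') achieving zero $\mathcal{C}'$-discrepancy is cut out by the $n$ linear balance constraints, so its vertices have at most $n$ fractional coordinates, each corresponding to a single cut interior to a grid cell that must be handled with care. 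In the first assertion, the subtlety is mild but real: the renormalization needed to keep $\mu_i'$ a probability measure introduces a correction on the support overlap, which must be controlled using $b_i - a_i \ge 1/M$.
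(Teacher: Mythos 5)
The paper itself gives no written proof of this claim (it is asserted to ``follow very easily'' from the definitions), so I can only judge your argument on its own terms. Your first assertion is handled by the natural argument and is essentially fine, except for one quantitative slip: once you renormalize the height to $v_i'=1/(b_i'-a_i')$ so that $\mu_i'$ stays a probability measure, the correction on the overlap is \emph{not} negligible --- it is of the same order as the symmetric-difference term. For a block of length exactly $1/M$ whose two endpoints both round outward by almost $1/(2m)$, one gets $\|f_i-f_i'\|_1=\tfrac{4\delta}{b_i'-a_i'}\approx 2\varepsilon'$, not $\le\varepsilon'$. This only costs a constant factor (rechoose $\varepsilon'$ or the grid spacing), but the sentence claiming the renormalization is ``negligibly different'' is wrong as stated.

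The genuine gap is in the second assertion, which you yourself flag as the main obstacle and then only sketch. The polytope/vertex route does not work as proposed: a vertex of the polytope over per-cell $+$-fractions has at most $n$ coordinates strictly between $0$ and $1/m$, but nothing prevents the integral coordinates from alternating between $0$ and $1/m$ from cell to cell, so realizing such a vertex as a labeling can require up to $m-1$ cuts, while an $\varepsilon'$-\ch\ solution must use at most $n$ cuts (this budget is exactly what the dynamic program assumes). Indeed, if arbitrarily many cuts were allowed the assertion would be trivial (alternate labels on every grid cell), so the cut budget is the crux, and your sketch never addresses it. A workable argument instead keeps the at most $n$ cuts of an \emph{exact} Hobby--Rice solution of $\mathcal{C}'$ and snaps them to $Q_m$ in a coordinated left-to-right manner --- e.g.\ rounding each cut after shifting by the accumulated error of the previous cuts, so that within any agent's block (a union of grid cells, containing a consecutive run of cuts) the alternating-sign displacements telescope to at most one grid cell --- which changes each agent's discrepancy by $O(v_i'/m)=O(\varepsilon')$ while preserving the number of cuts; independent nearest-point rounding, as you correctly note, can lose $n\varepsilon'$. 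As written, the existence of a grid-aligned $\varepsilon'$-solution with at most $n$ cuts is not established by your proposal.
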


  Because of Claim 
\ref{clm:sufficientDiscretization} we will 
assume for the rest of this section that we are
working with $\mathcal{C}'$ and we will focus
on finding an $(\varepsilon' = \varepsilon/2)$-\ch\
solution with cuts in $Q_m$, where 
$m = 2 M /\varepsilon$. This will give us an 
$\varepsilon$-approximate solution for the 
single-block instance $\mathcal{C}$. We also
need the following definitions:
\begin{itemize}
  \item[$\triangleright$] for any $z \in [0, 1]$
  and any instance 
  $\mathcal{C} = \{\mu_1, \dots, \mu_n\}$ we 
  define the set of measures that have positive
  mass in $[z, 1]$ as follows:
  $\mathcal{U}(z; \mathcal{C}) = \{i \in [n] \mid \mu_i \in \mathcal{C} \wedge \mu_i([z, 1]) > 0\}$
  where we might drop $\mathcal{C}$ when it is 
  clear from the context, see also Figure
  \ref{fig:dynamic:sets},
  \item[$\triangleright$] for any $z \in [0, 1]$
  and any instance 
  $\mathcal{C} = \{\mu_1, \dots, \mu_n\}$ we 
  define the set of measures that have positive
  density at $z$ as follows:
  $\mathcal{R}(z; \mathcal{C}) = \{i \in [n] \mid \mu_i \in \mathcal{C} \wedge f_i(z) > 0\}$
  where we might drop $\mathcal{C}$ when it is 
  clear from the context, see also Figure
  \ref{fig:dynamic:sets}.
\end{itemize}

\begin{figure}
  \centering
  \includegraphics[scale=0.45]{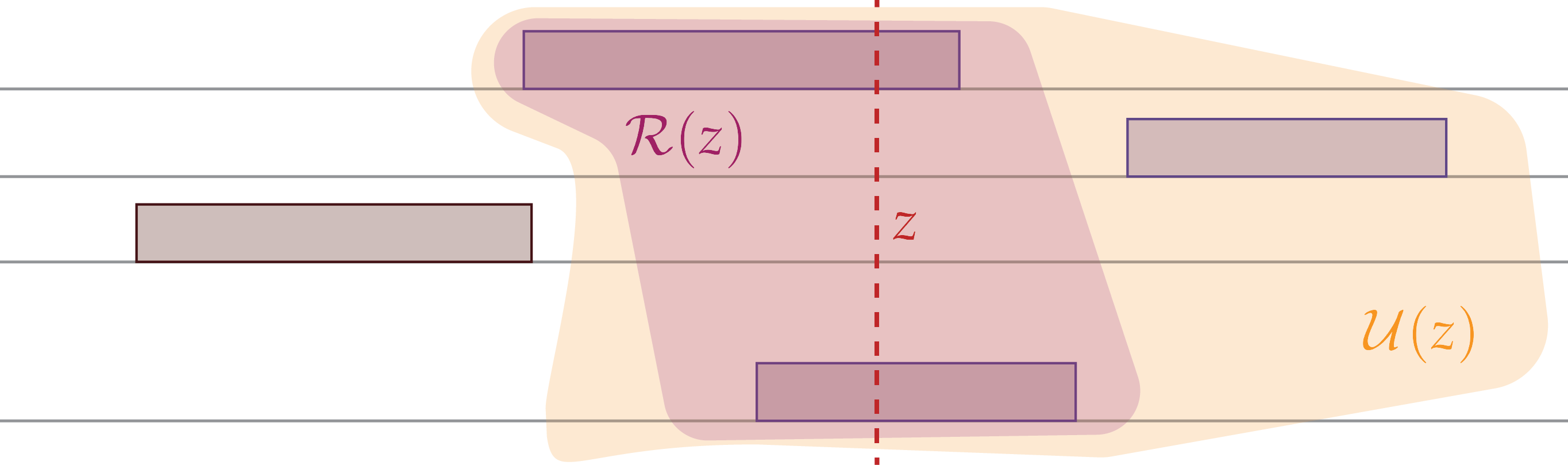}
   \caption{The definition of the sets $\mathcal{U}(z)$  and $\mathcal{R}(z)$.}
  \label{fig:dynamic:sets}
\end{figure}

  We also define 
$\bar{Q}_m = \{z \mid z \in Q_m \vee -z \in Q_m\}$.
Now we are ready to define our main recursive
relation for solving the instance 
$\mathcal{C}'$. For this we define the function
$\alpha(q_1, \dots, q_d, z, t)$ where 
$q_j \in \bar{Q}_m$, $z \in Q_m$, and 
$t \in [n]$ and 
$\alpha : \bar{Q}_m^d \times Q_m \times [n] \to \{0, 1\}$. 
The intuitive explanation of the value of
$\alpha$ is the following:
\smallskip

\begin{quote}
  \emph{
    $\alpha(q_1, \dots, q_d, z, t)$ denotes 
  whether it is possible to find $t$ cuts 
  $\vec{s}$ to split the interval $[z, 1]$ such
  that: (1) if $i$ is the $j$th element of~
  $\mathcal{R}(z)$ it holds that 
  $|b_i(\vec{s}; z) - q_j| \le \varepsilon$, (2)
  for every 
  $i \in \mathcal{U}(z) \setminus \mathcal{R}(z)$ 
  it holds that $|b_i(\vec{s}; 0)| = |b_i(\vec{s}; z)| \le \varepsilon$. 
  }
\end{quote}
\smallskip

  The value of $\alpha(q_1, \dots, q_d, z, t)$
can be recursively computed via the following
procedure.

\begin{enumerate}
    \item for every $r \in Q_m$, with $r > z$ do
    \begin{itemize}
      \item[a.] if there exists 
      $i \in \mathcal{R}(z)$ and $i$ is the
      $j$th element of $\mathcal{R}(z)$ but 
      $i \not\in \mathcal{R}(r)$ and by adding
      the cut $r$ to the set of cuts then the 
      condition 
      $|b_i(r; z) - q_j| > \varepsilon$ holds 
      then \textbf{continue},
      \item[b.] if there exists 
      $i \in \mathcal{U}(z) \setminus \mathcal{R}(z)$
      but 
      $i \not\in \mathcal{U}(r) \setminus \mathcal{R}(r)$ 
      then \textbf{continue}
      \item[c.] else for every 
      $i \in \mathcal{R}(r)$, where $i$ is the
      $j$th element of $\mathcal{R}(r)$
      \begin{itemize}
        \item[$\cdot$] if $i \not\in \mathcal{R}(z)$ then we set 
        $q'_j = (-1)^t \mu_i([z, r])$
        \item[$\cdot$] if $i$ is the $\ell$th element of $\mathcal{R}(z)$ 
        then we set 
        $q'_j = (-1)^t \mu_i([z, r]) + q_{\ell}$
        \item[$\cdot$] call the function $\alpha(q'_1, \dots, q'_d, r, t - 1)$
      \end{itemize}
    \end{itemize}
    \item return \textbf{true} if at least one of the recursive calls is successful, and \textbf{false} otherwise.
\end{enumerate}

  This procedure evaluates the binary 
function $\alpha$, but our goal is to solve 
the search problem that finds a set of cuts
that form a solution to 
$\varepsilon'$-\ch. This can be 
easily done by storing one possible solution 
whenever $\alpha = \textbf{true}$. We call 
this possible solution 
$\beta(q_1, \dots, q_d, z, t)$. More formally,
the algorithm is shown in \cref{alg:mainBreimanPopulation}.
    
\begin{algorithm}[h]
\caption{Recursive Computation of $\alpha$, $\beta$}
  \KwIn{leftover balances $q_1, \dots, q_d$, position of last cut $z$, leftover number of cuts $t$, required accuracy $\varepsilon$.}

  \KwOut{value of $(\alpha(q_1, \dots, q_d, z, t), \beta(q_1, \dots, q_d, z, t))$ as described above.}

  \If{$t = 0$}{
    \If{$q_j = 0~\forall j \in [d]$}{
      \textbf{return} $(\textbf{true}, \{\})$
    }
    \textbf{return} $(\textbf{false}, \{\})$
  }
  
  \For{every $r \in Q_m$ with $r > z$}{
    \For{$j \in [d]$}{
      Let $i$ be the $j$th element of $\mathcal{R}(z)$
      
      \If{$i \not\in \mathcal{R}(r)$ and $|b_i(r; z) - q_j| > \varepsilon$}{
        \textbf{continue} to next value of $r$
      }
    }
    \For{$i \in \mathcal{U}(z) \setminus \mathcal{R}(z)$}{
      \If{$i \not\in \mathcal{U}(r) \setminus \mathcal{R}(r)$}{
        \textbf{return} $(\textbf{false}, \{\})$
      }
    }
    \For{$j \in [d]$}{
      Let $i$ the $j$th element of $\mathcal{R}(r)$
      
      \If{$i \not\in \mathcal{R}(z)$}{
        $q'_j \leftarrow (-1)^t \mu_i([z, r])$
      }
      \If{$i$ is the $\ell$th element of $\mathcal{R}(z)$}{
        $q'_j \leftarrow (-1)^t \mu_i([z, r]) + q_{\ell}$
      }
    }
    $\mathrm{res} \leftarrow (\alpha(q'_1, \dots, q'_d, r, t - 1), \beta(q'_1, \dots, q'_d, r, t - 1))$
    
    \If{$\mathrm{res} = \textbf{\emph{true}}$}{
      \textbf{return} $(\textbf{true}, \beta(q'_1, \dots, q'_d, r, t - 1) \cup\{r\})$
    }
  }
  
  \textbf{return} $(\textbf{false}, \{\})$
  \label{alg:mainBreimanPopulation}
\end{algorithm}

  From the above recursive algorithm we see
that the evaluation order for the dynamic 
programming algorithm that computes 
$\alpha(q_1, \dots, q_d, z, t)$ starts from
$t = 0$ to $t = n$, $z = 1$ to $z = 0$ and 
$|q_j| = 1$ to $|q_j| = 0$.

\begin{theorem} \label{thm:dynamicProgramming}
    There exists a dynamic programming 
  algorithm that for any single-block 
  instance $\mathcal{C}$ of $\varepsilon$-\ch\ 
  with maximum value $M$ and maximum intersection 
  $d$, computes a set of cuts that define a
  solution. The running time of the algorithm
  is 
  $O\left( \left(\frac{2 M}{\varepsilon}\right)^{d + 2} n (n + d) \right)$.
\end{theorem}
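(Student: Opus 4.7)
The plan is to prove correctness and the running-time bound for the recursive procedure. By \cref{clm:sufficientDiscretization} applied with $\varepsilon' = \varepsilon/2$, it suffices to produce an $(\varepsilon/2)$-\ch\ solution of the rounded instance $\mathcal{C}'$ whose cuts lie in $Q_m$ with $m=2M/\varepsilon$; the claim guarantees that such a discretized solution exists, and composing it with the $\varepsilon/2$ rounding error yields an $\varepsilon$-\ch\ solution of the original $\mathcal{C}$. The algorithm's entry point is then $\alpha(0,\dots,0,\,0,\,n)$, and we must show that this call returns \textbf{true} and that the stored $\beta$ reconstructs a valid cut set.

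Correctness is established by induction on $t$, showing that $\alpha(q_1,\dots,q_d,z,t) = \textbf{true}$ iff there exist $t$ cuts in $(z,1]\cap Q_m$ satisfying conditions (1) and (2) in the informal specification of $\alpha$. The base case $t=0$ leaves $[z,1]$ as a single labeled piece, so the balance of every measure still active or starting in $[z,1]$ is $\pm\mu_i([z,1])$; matching the targets then forces $q_j=0$ for all $j$, which is exactly the test the algorithm performs, and the prior \textbf{continue}/\textbf{return} tests guarantee that this situation can indeed arise only when all remaining active blocks have been accounted for. For the inductive step, placing a cut at $r$ fixes the contribution of $[z,r]$ to the balance of every measure whose block intersects $[z,r]$; the sign of this contribution is $(-1)^t$ because the leftmost of the $t+1$ pieces on $[z,1]$ sits at a fixed parity in the alternating label pattern. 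The update rules $q'_j = (-1)^t \mu_i([z,r])$ (for $i$ newly in $\mathcal{R}(r)$) and $q'_j = (-1)^t\mu_i([z,r]) + q_\ell$ (for $i$ carried over from $\mathcal{R}(z)$) encode, in the same format as the input, the residual balance the remaining $t-1$ cuts on $(r,1]$ must achieve. The two early-exit tests enforce the $\varepsilon$-tolerance for measures whose block ends inside $[z,r]$, which is the only moment that tolerance can still be verified, and guarantee that measures in $\mathcal{U}(z)\setminus\mathcal{R}(z)$ are not prematurely abandoned. The $\beta$-construction is a standard carry-along of one witnessing child.

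For the running time, the table is indexed by $(q_1,\dots,q_d,z,t)\in \bar Q_m^{\,d} \times Q_m \times \{0,\dots,n\}$, giving at most $O\!\left((2M/\varepsilon)^{d+1} n\right)$ states after absorbing constants that depend only on $d$. Each state iterates over at most $|Q_m|=O(M/\varepsilon)$ candidates $r$ and performs $O(n+d)$ work per candidate, namely scanning the $O(n)$ measures in $\mathcal{U}(z)\setminus\mathcal{R}(z)$ and computing the $d$ new targets for $\mathcal{R}(r)$. Multiplying yields the announced $O\!\left((2M/\varepsilon)^{d+2} n(n+d)\right)$ bound.

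The main obstacle I expect is the sign/parity bookkeeping in the inductive step: verifying that the $(-1)^t$ convention correctly tracks the label-alternation implicit in the definition of $b_i$, and handling the edge cases where a measure enters or leaves $\mathcal{R}$ between $z$ and $r$ without double-counting its contribution. A secondary subtlety is aligning the block-by-block $\varepsilon$-slack enforced by the algorithm with the global $\varepsilon/2$-tolerance promised by \cref{clm:sufficientDiscretization}, so that the existence of a continuous discretized solution really does translate into an accepting path of the recursion, and not merely into one that is rejected by a local tolerance check.
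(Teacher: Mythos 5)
Your proposal is correct and follows essentially the same route as the paper: the paper's own justification of \cref{thm:dynamicProgramming} consists precisely of \cref{clm:sufficientDiscretization} (discretize with $\varepsilon'=\varepsilon/2$ and restrict cuts to $Q_m$), the recursive specification of $\alpha$ with the $(-1)^t$ parity bookkeeping, and the implicit state-space-times-work count over $\bar{Q}_m^{\,d}\times Q_m\times[n]$, which is exactly what you flesh out. The induction on $t$ and the per-state $O\bigl(|Q_m|(n+d)\bigr)$ accounting you give are the natural completion of the argument the paper leaves at the level of the algorithm description.
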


\subsection{A polynomial-time algorithm for \texorpdfstring{$1/2$}{1/2}-\ch}

In this subsection, we present a polynomial-time algorithm for $1/2$-\ch, for the case of single-block valuations. We state the main theorem, which will be proven in the subsection.

\begin{theorem}
There exists a polynomial-time algorithm for $1/2$-\ch, when agents have single-block valuations.
\end{theorem}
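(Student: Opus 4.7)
The plan is to reduce the problem to a combinatorial assignment step followed by a linear programming refinement, exploiting the generous slack that $\varepsilon = 1/2$ affords. The starting observation is that, with alternating labels on the pieces induced by the cuts, a single cut placed in the \emph{middle half} $M_i := [a_i + (b_i-a_i)/4,\; b_i - (b_i-a_i)/4]$ of the block $[a_i,b_i]$ of agent $i$ would, by itself, give discrepancy at most $1/2$, since each of the two sub-intervals of the block would carry between $1/4$ and $3/4$ of the block's mass. More generally, for any set of cuts inside $B_i$, the discrepancy of agent $i$ under alternating labels equals the absolute value of the alternating sum of the sub-piece lengths of $B_i$, normalised by $b_i - a_i$, and the condition ``$\leq 1/2$'' becomes a single linear inequality in those cut positions.

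The next step of the plan is to find an assignment $\sigma$ from agents to $n$ cut slots such that the cut $s_{\sigma(i)}$ lies in $M_i$ for every $i$. Since the $M_i$ form a family of intervals on the real line, this matching problem has a Hall-type structure specific to interval systems and can be handled by a standard greedy sweep (sort the middle halves by right endpoint, then place a cut at the rightmost feasible position of each interval in turn, reusing already-placed cuts whenever possible). From such an assignment I would read off the induced combinatorial structure, which specifies, for each agent $i$, the set of cuts that will ultimately fall inside $B_i$. Relative to this fixed structure, the $1/2$-satisfaction constraints become a polynomial-size system of linear inequalities in the cut-position variables; I would then solve the resulting LP and label the induced pieces alternately to output the $1/2$-CH solution.

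The hard part will be handling overlapping blocks: a cut placed inside $M_i$ may also sit inside $B_j$ for some $j \neq i$, so that $B_j$ ends up containing several cuts and its discrepancy is no longer controlled by a single ``designated'' cut alone. The technical crux is therefore to prove that the LP above is always feasible, using the $1/2$ slack together with the interval structure of single-block valuations — for instance, by arguing that overlapping middle halves can be served by a common cut near their intersection (so that fewer of the cut slots are consumed by ``foreign'' blocks than the worst case suggests), and that any leftover cuts falling inside a block can be positioned symmetrically enough to keep the alternating sum of piece lengths within $(b_i-a_i)/2$. I expect this feasibility argument, rather than the matching step or the LP-solving itself, to absorb most of the work, and if a direct proof of feasibility turns out to be awkward I would fall back on an inductive / exchange argument that perturbs an infeasible candidate solution towards a feasible one while preserving polynomiality.
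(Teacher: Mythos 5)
Your proposal correctly isolates where the difficulty lies, but it does not resolve it, and that unresolved step is essentially the whole theorem. The matching/sweep step and the final LP are the easy parts: with $n$ cuts and $n$ agents there is never any scarcity preventing a cut from being placed in each middle half $M_i$, and once a combinatorial structure is fixed the constraints are indeed linear. The genuine content is the claim that \emph{some} combinatorial structure admits a feasible assignment of positions, i.e., that the ``foreign'' cuts landing inside an agent's block can always be arranged so that the alternating sum stays within half the block's mass. You explicitly defer this (``I expect this feasibility argument \dots to absorb most of the work''), offering only the heuristics that overlapping middle halves can share a cut and that leftover cuts can be placed ``symmetrically enough,'' plus a fallback exchange argument. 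Neither is substantiated, and the second one is genuinely delicate: a cut that is foreign to several overlapping blocks cannot be symmetric with respect to all of them simultaneously, so a pointwise symmetry argument does not go through, and fixing the structure by a right-endpoint sweep gives no guarantee that the induced LP is feasible --- feasibility depends heavily on which structure you commit to.

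For comparison, the paper's proof is a direct greedy construction that builds the positions (no LP): agents are processed in non-increasing order of block height (equivalently non-decreasing block length), each agent gets a cut at the midpoint of the yet-unreserved part of her block, and a ``reserved region'' symmetric about that cut is frozen so no later cut may enter it; reserved regions of odd parity may later be expanded symmetrically. The correctness rests on quantitative invariants that play exactly the role of your missing feasibility lemma: every reserved region is internally balanced, any reserved region that is internal to a later agent's block is worth at most $1/2$ to that agent (this is where the height ordering is used), and each of the at most two boundary reserved regions contributes imbalance at most $1/4$, so the designated cut can always restore $1/2$-satisfaction. Until you supply an argument of comparable strength --- some invariant controlling the aggregate effect of foreign cuts on every block simultaneously --- the proposal remains a plan with a gap at its core rather than a proof.
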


\paragraph{\textbf{High-level description:}} The high-level idea of the algorithm is the following greedy strategy. We will consider the agents in order of non-decreasing height of their valuation blocks. Since each agent has a single block of value, this means that for two agents $i$ and $j$ with $i < j$ that have their value block in intervals $I_i$ and $I_j$, agent $i$ will be considered before agent $j$. For each agent, we will attempt to ``reserve'' a large enough sub-interval of her value block (of total value at least $1/2$ for the agent) and split it in half (to two parts of equal value at least $1/4$ to the agent) using a cut, assigning each half to $+$ and $-$ respectively. At that point, this split will ensure that $|\mu_i(I_i^{+}) - \mu_i(I_i^{-})| \leq 1/2$, regardless of the labeling of the remaining part of the agent's value block $I_i$.
A reserved sub-interval, or \emph{reserved region} (RR) will never be intersected by any subsequent cut. This ensures that the guarantee $|\mu_i(I_i^{+}) - \mu_i(I_i^{-})| \leq 1/2$ for every agent $i$ previously considered will continue to hold. 

Reserving a large enough region for the first considered agent is straightforward. For any subsequent agent $i$, part of her value block interval $I_i$ might already be ``covered'' by regions that were reserved in previous steps. Before inserting any new cut, we will \emph{expand} some of the RRs which are contained in $I_i$ (those that exhibit a different label on each of their endpoints), until we either ensure that the agent is approximately satisfied with the imbalance of labels that she sees, or these RRs cannot be expanded any longer. In the latter case, we will place the cut corresponding to agent $i$ in $I_i$, on the midpoint of the ``virtual'' interval $U_i \subseteq I_i$, consisting of all the intervals of $I_i$ not covered by RRs, ``glued'' together. Then, we will create a new RR, which will potentially also contain some of the RRs already present in $I_i$, which will be such that (a) the total value covered by RRs for agent $i$ is $1/2$ and (b) the agent is approximately satisfied (up to a $1/2$) with the value she has for RRs in $I_i$.

\subsubsection{The algorithm} 
First, we will use the following terminology: A \emph{reserved region} (RR) $R$ is a designated interval in which no further cuts are allowed to be placed, other than those that already lie in the region. We will define the \emph{parity} of $R$ to be \emph{odd} if the left-hand side of the leftmost cut intersecting $R$ and the right-hand side of the rightmost cut intersecting $R$ have different labels; otherwise, we will say that the parity of $R$ is \emph{even}. 

We will let $I_i$ denote the (single) interval in which agent $i$ has positive value, and we will let $R_i = \bigcup_{R_j \text{ is an RR}} (R_j \cap I_i)$ be the part of $I_i$ that is ``covered'' by RRs. We will also let $U_i = I_i \setminus R_i$ denote the \emph{unreserved} part of $I_i$. Finally, we will say that an RR $R$ is an \emph{internal RR} of $I_i$, if it is entirely contained in $I_i$ (including the case where the endpoint of the RR is the endpoint of the interval); otherwise, we will say that $R$ is a \emph{boundary RR}. Note that since RRs can be contained in other RRs, it is possible that an internal RR is contained in a boundary RR (see \cref{fig:rrs}). An overview of the algorithm is given in \cref{alg:halfch}.\\

    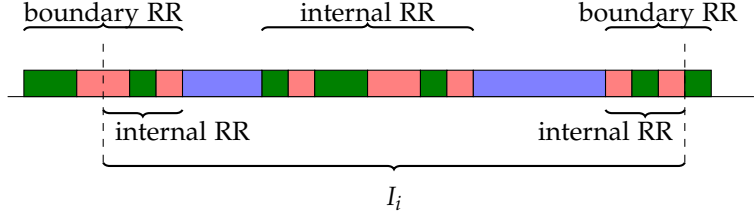
\begin{figure}
\centering
\begin{tikzpicture}[scale=1,transform shape]
	\node (a_1) at (-20pt,0pt) {}; 
	\node (a_2) at (270pt, 0pt) {};
	\draw (a_1)--(a_2);

	\draw[fill=black!50!green] (-10pt,0pt) rectangle (10pt,10pt);
	\draw[fill=white!50!red] (10pt,0pt) rectangle (30pt,10pt);
    \draw[fill=black!50!green] (30pt,0pt) rectangle (40pt,10pt);
    \draw[fill=white!50!red] (40pt,0pt) rectangle (50pt,10pt);
    \draw[fill=black!50!green] (80pt,0pt) rectangle (90pt,10pt);
    \draw[fill=white!50!red] (90pt,0pt) rectangle (100pt,10pt);
     \draw[fill=black!50!green] (100pt,0pt) rectangle (120pt,10pt);
    \draw[fill=white!50!red] (120pt,0pt) rectangle (140pt,10pt);
    \draw[fill=black!50!green] (140pt,0pt) rectangle (150pt,10pt);
    \draw[fill=white!50!red] (150pt,0pt) rectangle (160pt,10pt);

	\draw[fill=white!50!red] (210pt,0pt) rectangle (220pt,10pt);
	\draw[fill=black!50!green] (220pt,0pt) rectangle (230pt,10pt);
	\draw[fill=white!50!red] (230pt,0pt) rectangle (240pt,10pt);
	\draw[fill=black!50!green] (240pt,0pt) rectangle (250pt,10pt);
	
	\draw[fill=white!50!blue] (160pt,0pt) rectangle (210pt,10pt);
	\draw[fill=white!50!blue] (50pt,0pt) rectangle (80pt,10pt);

\draw [
    thick,
    decoration={
        brace,
        mirror,
        raise=5pt
    },
    decorate
] (20pt,-20pt) -- (240pt,-20pt)
node [pos=0.5,anchor=north,yshift=-10pt] {\small{$I_i$}};

\draw [
    thick,
    decoration={
        brace,
        raise=-15pt
    },
    decorate
] (80pt,40pt) -- (160pt,40pt)
node [pos=0.5,anchor=north,yshift=0pt] {\small{internal RR}};

\draw [
    thick,
    decoration={
        brace,
        mirror,
        raise=15pt
    },
    decorate
] (20pt,10pt) -- (50pt,10pt)
node [pos=0.5,anchor=north,xshift=10pt, yshift=-15pt] {\small{internal RR}};

\draw [
    thick,
    decoration={
        brace,
        mirror,
        raise=15pt
    },
    decorate
] (210pt,10pt) -- (240pt,10pt)
node [pos=0.5,anchor=north,xshift=-10pt, yshift=-15pt] {\small{internal RR}};

\draw [
    thick,
    decoration={
        brace,
        raise=-15pt
    },
    decorate
] (-10pt,40pt) -- (50pt,40pt)
node [pos=0.5,anchor=north,yshift=0pt] {\small{boundary RR}};

\draw [
    thick,
    decoration={
        brace,
        raise=-15pt
    },
    decorate
] (210pt,40pt) -- (250pt,40pt)
node [pos=0.5,anchor=north,yshift=0pt] {\small{boundary RR}};

\draw[dashed] (20pt,20pt) -- (20pt, -25pt);
\draw[dashed] (240pt,20pt) -- (240pt, -25pt);

\end{tikzpicture} \caption{Internal and external reserved areas. The areas colored green have been assigned to one of the labels (e.g., ``+'') and the areas colored red had been assigned to the other (e.g., ``-''). The areas of the interval $I_i$ which are not yet assigned to any label are colored blue. The parity of the RRs can also be seen: the boundary RRs and the internal RR in the middle of the interval have odd parity, whereas the other two internal RRs (which are sub-regions of the boundary RRs) have even parity.}
\label{fig:rrs}
\end{figure}

\noindent The algorithm considers the agents in non-increasing order of the heights of their value blocks. We will say that an agent $i$ is \emph{$1/2$-satisfied at step $j$}, if, after considering agent $j$, it holds that $|\mu_i(I^{+}) - \mu_i(I^{-}) \leq 1/2$, i.e., at least half of the value of the agent is balanced between $+$ and $-$. Let step $i$ be the step when agent $i$ is considered. At step $i$ the algorithm does the following.\medskip
\begin{enumerate}
    \item Expand internal RRs of odd parity, until agent $i$ is $1/2$-satisfied, or until any internal RRs of odd parity can no longer be extended. See \cref{sec:oddparity} below.\medskip
    \item Place the cut in the midpoint of $\mathcal{U}_i$ (where $\mathcal{U}_i$ is obtained by considering $U_i$ as a continuous interval, and disregarding $R_i$). See \cref{sec:placecut} below.\medskip
    \item Create a new RR, by considering only $U_i$, possibly merging the intervals of the new RR with some RRs in $R_i$. See \cref{sec:virtualrr} below.
\end{enumerate}

\subsubsection{Expanding internal Reserved Regions of odd parity}\label{sec:oddparity} Consider step $i$, when we are considering agent $i$, and her corresponding value block interval $I_i$. We will consider internal RRs of odd parity in $I_i$. For any such RR, we extend its endpoints at the same rate towards the left and the right, until we either:
    \begin{enumerate}[label=(\alph*)]
        \item reach a point where the agent is $1/2$-satisfied or \label{item1}
        \item reach the endpoint of one or two other RRs, or, \label{item2}
        \item reach one of the endpoints of $I_i$. \label{item3}
    \end{enumerate}

  \noindent If \cref{item1} above is satisfied, then we do not consider any other RR and continue with the next agent. Otherwise, we continue with the next internal RR of odd parity, if any. Note that after this part of the algorithm, the cut corresponding to agent $i$ has not been placed yet. See \cref{fig:oddRRexpand}.
   
    \begin{figure}
    \centering
        \begin{tikzpicture}[scale=1,transform shape]
	\node (a_1) at (-20pt,0pt) {}; 
	\node (a_2) at (240pt, 0pt) {};
	\draw (a_1)--(a_2);

	\draw[fill=black!50!green] (-10pt,0pt) rectangle (10pt,10pt);
	\draw[fill=white!50!red] (10pt,0pt) rectangle (30pt,10pt);
    \draw[fill=black!50!green] (30pt,0pt) rectangle (40pt,10pt);
    \draw[fill=white!50!red] (40pt,0pt) rectangle (50pt,10pt);
    \draw[fill=black!50!green] (80pt,0pt) rectangle (90pt,10pt);
    \draw[fill=white!50!red] (90pt,0pt) rectangle (100pt,10pt);
     \draw[fill=black!50!green] (100pt,0pt) rectangle (120pt,10pt);
    \draw[fill=white!50!red] (120pt,0pt) rectangle (140pt,10pt);
    \draw[fill=black!50!green] (140pt,0pt) rectangle (150pt,10pt);
    \draw[fill=white!50!red] (150pt,0pt) rectangle (160pt,10pt);

	\draw[fill=white!50!blue] (160pt,0pt) rectangle (240pt,10pt);
	\draw[fill=white!50!blue] (50pt,0pt) rectangle (80pt,10pt);

	\node (a_1) at (-20pt,-30pt) {}; 
	\node (a_2) at (240pt, -30pt) {};
	\draw (a_1)--(a_2);

    \draw[fill=black!50!green] (-10pt,-30pt) rectangle (10pt,-20pt);
	\draw[fill=white!50!red] (10pt,-30pt) rectangle (30pt,-20pt);
    \draw[fill=black!50!green] (30pt,-30pt) rectangle (40pt,-20pt);
    \draw[fill=white!50!red] (40pt,-30pt) rectangle (50pt,-20pt);
    \draw[fill=black!50!green] (80pt,-30pt) rectangle (90pt,-20pt);
    \draw[fill=white!50!red] (90pt,-30pt) rectangle (100pt,-20pt);
     \draw[fill=black!50!green] (100pt,-30pt) rectangle (120pt,-20pt);
    \draw[fill=white!50!red] (120pt,-30pt) rectangle (140pt,-20pt);
    \draw[fill=black!50!green] (140pt,-30pt) rectangle (150pt,-20pt);
    \draw[fill=white!50!red] (150pt,-30pt) rectangle (160pt,-20pt);

	\draw[fill=white!50!blue] (190pt,-20pt) rectangle (240pt,-30pt);
	
	\draw[fill=white!50!red] (160pt,-20pt) rectangle (190pt,-30pt);
\draw[fill=black!50!green] (50pt,-20pt) rectangle (80pt,-30pt);
	
	\node (a_3) at (255pt,0pt) {$\ldots$};
	\node (a_4) at (255pt,-30pt) {$\ldots$};

\draw [
    thick,
    decoration={
        brace,
        mirror,
        raise=5pt
    },
    decorate
] (20pt,-50pt) -- (270pt,-50pt)
node [pos=0.5,anchor=north,yshift=-10pt] {\small{$I_i$}};

\draw [
    thick,
    decoration={
        brace,
        raise=-15pt
    },
    decorate
] (80pt,30pt) -- (160pt,30pt)
node [pos=0.5,anchor=north,yshift=0pt] {\small{$R$}};

\draw [
    thick,
    decoration={
        brace,
        raise=-15pt
    },
    decorate
] (50pt,0pt) -- (190pt,0pt)
node [pos=0.5,anchor=north,yshift=0pt] {\small{$R'$}};

\draw[dashed] (20pt,40pt) -- (20pt, -55pt);
\draw[dashed] (270pt,40pt) -- (270pt, -55pt);

\end{tikzpicture}           \caption{An expansion of an internal reserved region of odd parity. The areas colored green have been assigned to one of the labels (e.g., ``+'') and the areas colored red had been assigned to the other (e.g., ``-''). The areas of the interval $I_i$ which are not yet assigned to any label are colored blue. The reserved region $R$ is expanded symmetrically on both sides, until it meets the left endpoint of the boundary region of $I_i$ on the left, in which case the expansion stops (\cref{item2} in \cref{sec:oddparity}). This results in the creation of new RRs on the left (a boundary RR and an internal RR contained in the boundary RR) that contain both the original RR $R$ and its expansion $R'$. Crucially, the area of $I_i$ that is covered by RRs has increased, while maintaining the same balance between the two labels.}
        \label{fig:oddRRexpand}
    \end{figure}
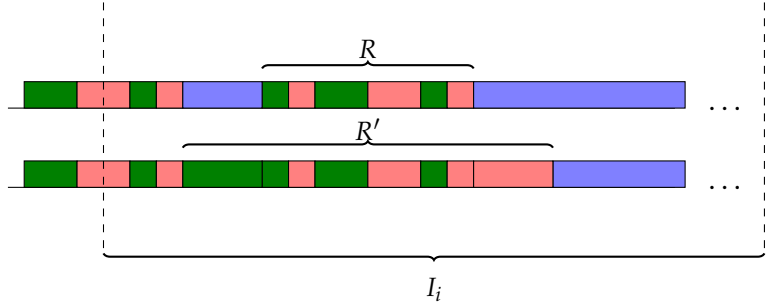

\subsubsection{Placing the cut}\label{sec:placecut} After we consider all internal RRs of odd parity, and if \cref{item1} above has not been satisfied yet, we will use the agent's corresponding cut to balance out the remaining value block $I_i$, to ensure that the agent becomes $1/2$-satisfied. To do that, we consider agent $i$'s valuation in $I_i$ on $U_i$, and we place the cut on the midpoint $y$ of $U_i$ - one can envision this operation as considering $U_i$ as a continuous interval (merging any sub-intervals separated by RRs in $R_i$) and splitting this interval in half via the placement of the cut. See \cref{fig:placecut}.

\subsubsection{Creating the new Reserved Region} \label{sec:virtualrr} After the cut has been placed, we need to reserve a corresponding region for agent $i$, to ensure that she is $1/2$-satisfied from the updated union of reserved regions $R_i$ in $I_i$. To do that, we reserve an equal portion of $U_i$ to the left and to the right of the position $y$ of the cut, until the agent becomes $1/2$-satisfied. We argue in the proof of \cref{lem:satisfied} that this is always possible. Intuitively, one can visualize that we extend the region from $y$ to the left and to the right at the same rate, ``jumping over'' RRs, until we have reserved enough area to $1/2$-satisfy the agent. The effect of this process is a new RR, possibly containing previously reserved RRs and the newly reserved regions. See \cref{fig:placecut}.

 \begin{figure}
   \centering
        \begin{tikzpicture}[scale=1,transform shape]
	\node (a_1) at (-20pt,0pt) {}; 
	\node (a_2) at (20pt, 0pt) {};
	\draw (a_1)--(a_2);

	\draw[fill=white!50!red] (20pt,0pt) rectangle (30pt,10pt);
    \draw[fill=black!50!green] (30pt,0pt) rectangle (40pt,10pt);
    \draw[fill=white!50!red] (40pt,0pt) rectangle (50pt,10pt);
    \draw[fill=black!50!green] (80pt,0pt) rectangle (90pt,10pt);
    \draw[fill=white!50!red] (90pt,0pt) rectangle (100pt,10pt);
    \draw[fill=black!50!green] (100pt,0pt) rectangle (110pt,10pt);
    \draw[fill=white!50!red] (110pt,0pt) rectangle (120pt,10pt);
    \draw[fill=black!50!green] (120pt,0pt) rectangle (130pt,10pt);

	\draw[fill=white!50!blue] (130pt,0pt) rectangle (180pt,10pt);
	\draw[fill=white!50!red] (180pt,0pt) rectangle (240pt,10pt);
	\draw[fill=white!50!blue] (50pt,0pt) rectangle (80pt,10pt);

	\node (a_1) at (-20pt,-50pt) {}; 
	\node (a_2) at (240pt, -50pt) {};
	\draw (a_1)--(a_2);

	\draw[fill=white!50!red] (20pt,-50pt) rectangle (30pt,-40pt);
    \draw[fill=black!50!green] (30pt,-50pt) rectangle (40pt,-40pt);
    \draw[fill=white!50!red] (40pt,-50pt) rectangle (50pt,-40pt);
    \draw[fill=black!50!green] (80pt,-50pt) rectangle (90pt,-40pt);
    \draw[fill=white!50!red] (90pt,-50pt) rectangle (100pt,-40pt);
    \draw[fill=black!50!green] (100pt,-50pt) rectangle (110pt,-40pt);
    \draw[fill=white!50!red] (110pt,-50pt) rectangle (120pt,-40pt);
    \draw[fill=black!50!green] (120pt,-50pt) rectangle (130pt,-40pt);

	\node (a_3) at (0pt,0pt) {$\ldots$};
	\node (a_4) at (0pt,-30pt) {$\ldots$};

	\draw[fill=white!50!blue] (140pt,-50pt) rectangle (180pt,-40pt);
	\draw[fill=white!50!red] (180pt,-50pt) rectangle (240pt,-40pt);
	\draw[fill=white!50!blue] (50pt,-50pt) rectangle (80pt,-40pt);

	\draw[fill=black!20!green] (130pt,-50pt) rectangle (140pt,-40pt);
	\draw[fill=black!20!green] (70pt,-50pt) rectangle (80pt,-40pt);
	
	\draw[fill=white!70!red] (140pt,-50pt) rectangle (160pt,-40pt);
	
	\node (a_3) at (255pt,0pt) {$\ldots$};
	\node (a_4) at (255pt,-30pt) {$\ldots$};

\draw [
    thick,
    decoration={
        brace,
        mirror,
        raise=5pt
    },
    decorate
] (20pt,-80pt) -- (240pt,-80pt)
node [pos=0.5,anchor=north,yshift=-10pt] {\small{$I_i$}};

\draw [
    thick,
    decoration={
        brace,
        raise=-15pt
    },
    decorate
] (80pt,30pt) -- (130pt,30pt)
node [pos=0.5,anchor=north,yshift=0pt] {\small{Even parity RR}};

\draw [
    thick,
    decoration={
        brace,
        mirror,
        raise=-15pt
    },
    decorate
] (50pt,-20pt) -- (80pt,-20pt)
node [pos=0.5,anchor=north,yshift=10pt] {\small{part of $U_i$}};

\draw [
    thick,
    decoration={
        brace,
        mirror,
        raise=-15pt
    },
    decorate
] (130pt,-20pt) -- (180pt,-20pt)
node [pos=0.5,anchor=north, xshift=10pt, yshift=10pt] {\small{part of $U_i$}};

\draw [
    thick,
    decoration={
        brace,
        mirror,
        raise=-15pt
    },
    decorate
] (70pt,-70pt) -- (160pt,-70pt)
node [pos=0.5,anchor=north,yshift=10pt] {\small{New RR}};

\node (cen) at (140pt, 35pt) {\small{midpoint of $\mathcal{U}_i$}};

\draw[dashed] (20pt,40pt) -- (20pt, -55pt);
\draw[dashed] (240pt,40pt) -- (240pt, -55pt);

\draw[dashed,color=red] (140pt,30pt) -- (140pt, -55pt);

\end{tikzpicture}          \caption{The placement of the cut in $U_i$ and the creation of the new region. The parts of $U_i$ are shown in blue and the midpoint of $\mathcal{U}_i$, where the cut is placed, is shown with a red dashed line. After the cut is placed, an equal amount of green (e.g., ``$+$'') and of red (e.g., ``$-$'') will be reserved to the left and the right of the position of the cut, ``jumping over'' RRs of even parity. The newly added parts of $U_i$ (which are now part of $R_i$ instead are shown in slightly different shades of their respective colors, for clarity. After the operation, the agent is $1/2$-satisfied, which is guaranteed by her value in the union of restricted regions $R_i$.}
        \label{fig:placecut}
    \end{figure}
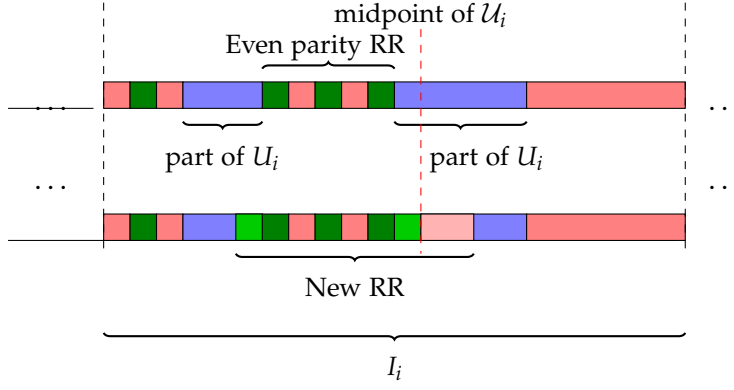

\subsubsection{Correctness of the algorithm} Below, we argue about the correctness of the algorithm. We will prove the correctness via a series of lemmas.

\begin{lemma}\label{lem:atmosthalf}
After step $i$ of the algorithm, the value of agent $i$ for any internal RR in $R_i$ is at most $1/2$.
\end{lemma}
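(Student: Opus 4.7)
The plan is to prove the lemma by induction on $i$, exploiting the ordering of agents (non-increasing heights, so that $|I_1|\le|I_2|\le\cdots\le|I_n|$) together with a structural invariant about RRs. The key structural claim is that any RR $R$ present at the end of step $i$ was either first created in Step 3 or last expanded in Step 1 at some step $k\le i$, and immediately after that step $R\subseteq I_k$: for a newly created $R$ this follows because Step 3 reserves portions of $U_k\cup R_k\subseteq I_k$ around the cut $y$, while for an expanded $R$ it follows because Step 1 explicitly halts whenever the expansion reaches an endpoint of $I_k$.

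For the inductive step, fix an internal RR $R$ of $I_i$ after step $i$. If $R$ was not modified during step $i$, let $k<i$ be the last step that modified $R$. The structural claim ensures $R$ is internal to $I_k$ immediately after step $k$, and since $R$ is unchanged between step $k$ and step $i$, the inductive hypothesis applied at step $k$ gives $\mu_k(R)\le 1/2$, i.e., $|R|\le |I_k|/2$; combining with $|I_k|\le |I_i|$ from the ordering yields $\mu_i(R)=|R|/|I_i|\le 1/2$.

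If $R$ was modified during step $i$, there are two sub-cases. When $R$ is the new RR from Step 3, I would exploit the symmetric extension around $y$: since $y$ is the midpoint of $\mathcal{U}_i$ and reserved portions of $U_i$ are added with equal $\mu_i$-mass on each side of $y$, the freshly reserved $U$-parts contribute zero to agent $i$'s imbalance, so the $1/2$-satisfaction condition depends only on the imbalance inherited from RRs already in $R_i$ together with the $\mu_i$-mass of the $U_i$-region still outside $R$. A careful accounting (using that every preexisting internal RR in $R_i$ already satisfies $\mu_i(\cdot)\le 1/2$ by the first case above, and that absorbed RRs are internal to $I_i$) then shows the extension must stop by the time $\mu_i(R)$ reaches $1/2$. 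When $R$ is an odd-parity internal RR expanded in Step 1, an analogous argument applies: the expansion halts as soon as agent $i$ becomes $1/2$-satisfied or a neighboring RR or endpoint of $I_i$ is hit, and the bound $\mu_i(R)\le 1/2$ before the expansion, together with the symmetric growth, suffices to maintain the inequality.

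The main obstacle is the bookkeeping inside Step 3: one must carefully track which preexisting RRs get absorbed into the new $R$, distinguish their balanced and unbalanced contributions to the imbalance within $I_i$, and verify that the $1/2$-satisfaction trigger activates exactly when $\mu_i(R)$ attains $1/2$ rather than overshooting it. The non-increasing-height ordering is essential: with the reverse ordering, a narrow $I_i$ could be swamped by a wide preexisting RR whose length alone exceeds $|I_i|/2$, and the lemma would fail.
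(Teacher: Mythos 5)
Your proposal takes essentially the same route as the paper's proof: an induction over the steps in which the non-increasing-height (equivalently, non-decreasing-length) ordering transfers the inductive bound for the agent that last created or expanded an RR to the current agent $i$, while RRs touched during step $i$ itself are handled by appealing to the construction of Steps 1 and 3 (expansion stops once the agent is $1/2$-satisfied, and the new RR is sized so that the reserved value is exactly $1/2$). Your explicit structural claim that an RR is contained in $I_k$ immediately after its last modification at step $k$ is precisely the fact the paper uses implicitly, so the two arguments coincide.
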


\begin{proof}
We will prove the lemma using induction. For $i=1$, the statement is clearly true; there are no pre-existing RRs, so we place a cut in the midpoint $y_1$ of $I_1$ and reserve a region of total value $1/4$ for the agent to the left of $y_1$, and a region of total value $1/4$ to the right of $y_1$, which together form the first RR $R_1$. This is the only internal RR in $I_1$ and hence the lemma follows. 

Next, we will argue that the lemma holds for some $1 < j \leq n$. Consider agent $j$ and its value interval $I_j$. Since we consider agents in non-increasing order of the height of their valuation blocks, from the induction hypothesis, it holds that the value of agent $j$ for any RR which is internal for $I_i$, for any $i < j$, is at most $1/2$ as well. Therefore, it is not possible for agent $j$ to value any RR which is internal for $I_j$, at the beginning of step $j$, at more than $1/2$. During step $j$, the algorithm might either expand existing internal RRs, create a new RR, or both, but in any case, by construction, the union of these RRs will never amount to more than $1/2$ of agent $j$'s value. 
\end{proof}

\begin{algorithm}[t]
\caption{Polynomial-time algorithm for $1/2$-\ch} \label{alg:halfch}
  \KwIn{The value blocks of the agents, as pairs $(I_i, v_i)$, such that \begin{itemize}
      \item[-] $I_i=[\alpha_i,b_i]$ is the interval.
      \item[-] $v_i$ is the height of the value block.
  \end{itemize} }

  \KwOut{The set of cuts $c_1, \ldots, c_n$, which satisfies that $|\mu_i(I^{+}) - \mu_i(I^{-})\leq 1/2$ for every agent $i$.} \medskip
  
  Order the agents in non-increasing order of $v_i$ and let wlog $\{1,\ldots,n\}$ be any such ordering.\\
  \For{$i=1$ to $n$}{\medskip
  \tcc{Step (1), Expanding internal RRs of odd parity (\cref{sec:oddparity}).}
  Let $\mathcal{R^{\text{odd}}}$ be the set of \emph{internal} RRs of \emph{odd parity} in $I_i$ that can be expanded.\\ 
    \While{$\mathcal{R^{\text{odd}}} \neq \emptyset$  }{
        Pick $R$ in $\mathcal{R^{\text{odd}}}$ \\
        \textbf{Expand} $R$ until it can no longer be expanded, i.e., \\
~~~~~~~~~ -- The agent becomes $1/2$-satisfied. \\
             ~~~~~~~~~ -- The expansion reaches the endpoint of some other RR. \\
             ~~~~~~~~~ -- The expansion reaches the endpoint of $I_i$. \\
\If{Agent $i$ is $1/2$-satisfied}{ \smallskip
        Break and move on to the next agent.}
        Generate new RRs (if the expansion of $R$ is merged with some other RR).\\
        Update $\mathcal{R^{\text{odd}}}$.
    }\medskip
    \tcc{Step (2), Placing the cut in $U_i$  (see \cref{sec:placecut}).}
    Let $\mathcal{U}_i$ be the interval obtained by joining the intervals of $U_i$.\\
    Let $y$ be the midpoint of $\mathcal{U}_i$.\\
    Place cut $c_i$ on $y$.\\ \medskip
    \tcc{Step (3), Creating the new RR (see \cref{sec:virtualrr}).}
    Consider some $x \in \mathbb{R}$ and the intervals $[y-x,y], [y,y+x] \subseteq \mathcal{U}_i$.\\
    Pick $x$ to satisfy $v_i\cdot (y-x)+v_i\cdot (y+x)+\mu_i(R_i) = 1/2$.\\
    Let $y_1$ and $y_2$ be the points in $U_i$ corresponding to $y-x$ and $y+x$ respectively.\\
    Create a new RR $R'= [y_1, y_2]$ (possibly containing other RRs of even parity).  
    }
\end{algorithm}

\begin{lemma}\label{lem:boundary}
Let $R_b$ be any boundary RR in $I_i$ at step $i$. Then, it holds that $|\mu_i(R_b^{+}) - \mu_i(R_b^{-})| \leq 1/4$.
\end{lemma}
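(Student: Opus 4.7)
The plan is to establish a stronger invariant: throughout the algorithm's execution, for every RR $R$ and every interval $I \subseteq [0,1]$, the length imbalance satisfies
\[
\big||R^+ \cap I| - |R^- \cap I|\big| \;\leq\; |R|/2.
\]
Given this invariant, the lemma follows by taking $R = R_b$ and $I = I_i$: since $\mu_i$ has uniform density $v_i = 1/|I_i|$ on $I_i$ and vanishes outside, the quantity in the lemma equals $v_i \cdot \big||R_b^+ \cap I_i| - |R_b^- \cap I_i|\big| \leq v_i \cdot |R_b|/2$, and the size bound $|R_b| \leq |I_i|/2$ (derived below) then gives exactly $1/4$.

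I would prove the invariant by induction over the operations performed by the algorithm. For the \emph{creation} of a new RR $R'$ in Step~(3) of some agent $j$, decompose $R'$ into its new parts (portions of $U_j$) and its absorbed RRs. In $\mathcal{U}_j$-space the new parts form a contiguous interval of total length $2x$, split exactly in half by $c_j$; and since Step~(1) either $1/2$-satisfies the agent (in which case Step~(3) is skipped) or exits only once every odd-parity internal RR has been merged into an even-parity one, every absorbed RR has even parity. Even parity means that no label flip occurs across a glue point in $\mathcal{U}_j$-space, so the labels there are uniformly $+$ on one side of $c_j$ and $-$ on the other. Hence the new parts contribute at most $x$ to the imbalance of $R' \cap I$ for any $I$; by induction the absorbed RRs contribute jointly at most $|R'_{\text{absorbed}}|/2$; and the triangle inequality yields $x + |R'_{\text{absorbed}}|/2 = |R'|/2$. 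For an \emph{expansion} by $e$ on each side of an odd-parity RR $R$ in Step~(1), the two newly reserved segments carry opposite labels (by odd parity of $R$), contributing at most $e$ to the imbalance of $R \cap I$ for any $I$, which combines inductively to $|R|/2 + e = |R_{\text{new}}|/2$. The \emph{merging} of adjacent RRs is handled by the same triangle-inequality argument; and Step~(2) places $c_i$ in $U_i$, outside every existing RR, so internal labels of existing RRs are unchanged.

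To bound $|R_b|$, observe first that $R_b$ is not modified at step $i$: Step~(1) expands only internal RRs of $I_i$, whereas $R_b$ is boundary; and Step~(3) creates $R'=[y_1,y_2]$ with $y_1,y_2 \in U_i \subseteq I_i$, so $R'$ is disjoint from $R_b \setminus I_i$ and cannot absorb $R_b$. Let $m < i$ be the most recent step at which $R_b$ was created or modified; then at step $m$, $R_b$ was internal to $I_m$, since only internal RRs are expanded in Step~(1) and new RRs are created within $I_m$ in Step~(3). Hence by \cref{lem:atmosthalf} applied at step $m$, $v_m |R_b| = \mu_m(R_b) \leq 1/2$, so $|R_b| \leq |I_m|/2$; and because agents are processed in non-increasing order of $v$ (equivalently, non-decreasing order of $|I|$), $|I_m| \leq |I_i|$, yielding $|R_b| \leq |I_i|/2$. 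The main subtle point will be justifying that absorbed RRs in Step~(3) are always of even parity: this relies on the merging behaviour of Step~(1), where two odd-parity RRs that expand into each other combine into an even-parity RR, and requires care in arguing that any blocked odd-parity RR surviving Step~(1) cannot be absorbed by a subsequent Step~(3) without contradicting the preceding guarantee of $1/2$-satisfaction.
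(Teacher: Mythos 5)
Your route is genuinely different from the paper's. The paper proves no windowed invariant: it passes to an RR $R$ containing $R_b$, uses the fact that every RR is \emph{exactly} length-balanced (all reservations and expansions are symmetric about a cut), so that $\mu_i(R_b^{+}),\mu_i(R_b^{-}) \leq v_i\,|R|/2$, and then rules out $|R| > 1/(2v_i)$ by noting that at the last step $j$ at which $R$ was created or expanded it was internal to $I_j$, whence $\mu_j(R) = v_j|R| > v_j/(2v_i) \geq 1/2$ would contradict \cref{lem:atmosthalf} (using the non-increasing height order). Your invariant $\bigl||R^+\cap I| - |R^-\cap I|\bigr| \leq |R|/2$ for every window $I$ is a strictly stronger structural statement; if established it yields the lemma cleanly, but it is heavier machinery than the paper needs for this particular bound.

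There are two gaps. First, the Step-(3) case of your induction genuinely requires that every absorbed RR has even parity (otherwise the new parts on one side of the cut are not uniformly labelled and their contribution to a window can exceed $x$), and you leave this unproven; moreover your claim that Step (1) ends only after all odd-parity internal RRs have been merged into even ones is not quite right, since an odd RR can survive by becoming unexpandable at an endpoint of $I_i$ (the paper's argument, inside the proof of \cref{lem:satisfied}, handles exactly this case). Note that the paper's proof of \cref{lem:boundary} does not need the even-parity fact at all, so your proof carries an undischarged dependency the paper's does not. Second, the size bound $|R_b| \leq |I_i|/2$ is not justified as stated: the assertion that $R_b$ was internal to $I_m$ at the last step $m$ touching it overlooks RRs generated by \emph{merging} in Step (1) (see \cref{fig:oddRRexpand}) --- when an expanding internal RR reaches an existing boundary RR, a new, larger boundary RR is created which was never internal to any agent's interval, so \cref{lem:atmosthalf} does not apply to it and its length need not be at most $|I_m|/2$. (The paper is also terse at the analogous point, but your argument leans on the length of $R_b$ itself, which is precisely where merges hurt.) A repair within your framework: every RR is a union of RRs each of which, at the end of the step in which it was last created or expanded, was internal to that agent's interval; those constituents lying entirely inside $I_i$ are length-balanced and contribute nothing to agent $i$'s imbalance, and at most one constituent crosses an endpoint of $I_i$, to which your invariant and the \cref{lem:atmosthalf} length bound can then be applied.
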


\begin{proof}
First, note that since $R_b$ is a boundary RR in $I_i$, it is part of some larger RR; let $R$ be that RR, i.e., $R_b \subseteq R$. Also, note that $R$ is ``balanced'', meaning that the total length of $R$ labeled ``$+$'' and the total length of $R$ labeled ``$-$'' are equal; this follows from the fact that in any of the steps (1) and (3) of the algorithm (\cref{sec:oddparity} and \cref{sec:virtualrr}) where the RRs are created or expanded, these operations are only done symmetrically with respect to the position of some cut. In other words, if $R$ is internal for some agent $j$, it holds that $\mu_{j}(R^{+}) = \mu_{j}(R^{-})$.

Assume by contradiction that $|\mu_i(R_b^{+}) - \mu_i(R_b^{-})| > 1/4$ and assume without loss of generality that the excess is in favor of ``$+$'' over ``$-$''. This in particular means that more than $1/4$ of the value of agent $j$ in $R_b$ is labeled ``$+$'', which means that $R_b$ contains an interval of length larger than $1/4\cdot v_i$ that is labeled ``$+$'', where $v_j$ is the height of agent $i$'s valuation block. Since $R$ is balanced, this means that $R$ must have length at least $1/2\cdot v_i$. However, consider the step $j$ when the $R$ was last modified (either created or expanded) before step $i$ and observe that $R$ was an internal RR for $I_j$ at the point. This means that agent $j$ had value larger than $v_{j}\cdot 1/(2\cdot v_i) \geq 1/2$ for $R$ at the point, contradicting \cref{lem:atmosthalf}.
\end{proof}

\begin{lemma}\label{lem:satisfied}
After step (3) of the algorithm (\cref{sec:virtualrr}), agent $i$ is $1/2$-satisfied.
\end{lemma}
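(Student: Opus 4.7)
The plan is to prove $|\mu_i(I_i^+) - \mu_i(I_i^-)| \leq 1/2$ by decomposing $I_i$ after step (3) into four disjoint pieces, showing that three of them contribute zero $\mu_i$-imbalance and that the fourth (the boundary RRs) contributes at most $1/2$ in total.

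Two structural invariants will drive the argument. Invariant (A): every reserved region $R$ that the algorithm ever creates or modifies satisfies $|R^+| = |R^-|$. I would prove this by induction on the algorithm's operations: the first reservation each agent makes is symmetric around her cut; the symmetric expansion of an odd-parity RR in step (1) adds equal lengths of opposite labels on each side (since the left and right outer labels of an odd-parity RR are opposite); each merge combines already-balanced pieces, because the symmetric expansion also contributes equal lengths of each label to the ``gaps'' that get absorbed; and the step (3) reservation around $c_i$ is symmetric in $\mathcal{U}_i$-coordinates while only absorbing already-balanced even-parity internal RRs. Invariant (B): after step (1) is finished, every internal RR of $I_i$ that is strictly in the interior of $I_i$ (does not touch $\partial I_i$) has even parity. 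I would establish this from the while-loop of step (1): any strictly interior odd-parity internal RR belongs to $\mathcal{R}^{\mathrm{odd}}$ and gets expanded further, and the loop keeps expanding and merging it until either the combined RR has become even-parity (and leaves $\mathcal{R}^{\mathrm{odd}}$) or its expansion is blocked by $\partial I_i$ (so that it ceases to be strictly interior).

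With these invariants in hand I would decompose $I_i$ after step (3) as: (i) the new enlarged RR $R'$ built around $c_i$ in step (3); (ii) other internal RRs of $I_i$ not contained in $R'$; (iii) the at most two boundary RRs intersected with $I_i$; and (iv) the remaining unreserved part $I_i \setminus R_i^{\mathrm{new}}$. For (i), the two newly reserved blocks on either side of $c_i$ in $\mathcal{U}_i$ have equal length $x$; by invariant (B) the labels are preserved across every absorbed even-parity internal RR and flip only at $c_i$, so the two blocks receive opposite labels and, since $v_i$ is uniform on $I_i$, contribute $0$ to the $\mu_i$-imbalance; the absorbed even-parity RRs are internal to $I_i$ and balanced in length by invariant (A), so they too contribute $0$. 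For (ii), each remaining internal RR is either even-parity or an odd-parity one touching $\partial I_i$; in either case it is balanced by invariant (A) and internal to $I_i$, contributing $0$. For (iii), \cref{lem:boundary} bounds each boundary RR's imbalance by $1/4$, giving at most $1/2$ in total. For (iv), the leftover corresponds in $\mathcal{U}_i$-coordinates to $[0,y-x]\cup[y+x,\mu(\mathcal{U}_i)]$; since $y=\mu(\mathcal{U}_i)/2$ these two pieces have equal length $y-x$ and, again by invariant (B) together with the flip at $c_i$, opposite labels, so they contribute $0$. Summing the four pieces gives $|\mu_i(I_i^+) - \mu_i(I_i^-)| \leq 0 + 0 + 1/2 + 0 = 1/2$.

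One auxiliary point I still have to settle is that the value of $x$ prescribed in step (3) is well-defined, which amounts to $\mu_i(R_i) \leq 1/2$ at the start of step (3). I expect to handle this via the companion invariant that each earlier agent's step (3) leaves her total reserved mass at exactly $1/2$, combined with \cref{lem:boundary} and invariant (A) to show that a pre-step-(3) configuration with $\mu_i(R_i) > 1/2$ would already satisfy agent $i$ (so that step (1) would have exited early and step (3) would not have been reached). The hardest part will be the case analysis behind invariant (B): one must track how the parity of a merged RR depends on the parities of its constituents (for instance, merging two odd-parity RRs yields an even-parity one, whereas merging odd with even yields an odd one that must continue to be processed), and verify that the update of $\mathcal{R}^{\mathrm{odd}}$ inside the while-loop actually drives every strictly interior odd-parity RR either to even parity through a merge or to $\partial I_i$ through expansion.
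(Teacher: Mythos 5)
Your proof is correct and draws on the same structural facts as the paper's — that internal RRs are perfectly balanced for agent $i$ (your invariant~(A), asserted without proof in the paper), that \cref{lem:boundary} caps each boundary RR's imbalance at $1/4$, and that the new RR created in step~(3) only strictly contains even-parity RRs (your invariant~(B), which is the paper's observation that odd-parity RRs either became even via merging or hit $\partial I_i$ in step~(1)). Where you differ is in assembly. The paper reasons about the state \emph{after step~(1)}, bounds $\mu_i(R_i^+)-\mu_i(R_i^-)\leq 1/2$, splits on whether $\mu_i(R_i)\geq 1/2$, and then argues only the \emph{feasibility} of the step-(3) reservation; it never explicitly accounts for the contribution of the leftover unreserved region after the cut is placed. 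Your version is a direct four-term decomposition of $I_i$ \emph{after step~(3)}: $R'$ contributes $0$, other internal RRs contribute $0$, boundary RRs contribute at most $1/2$, and the leftover $\mathcal{U}_i$-pieces $[0,y-x]$ and $[y+x,\mu(\mathcal{U}_i)]$ contribute $0$ because they have equal length and opposite labels (flip at $c_i$, constancy across even-parity RRs). That last term is exactly what the paper leaves implicit, so your route is arguably cleaner; it also works uniformly including the degenerate $x=0$ case, so it removes the need for the paper's ``if $\mu_i(R_i)\geq 1/2$, we are done'' dichotomy. The open items you flag — a full induction for invariants~(A) and~(B), and the well-definedness $\mu_i(R_i)\leq 1/2$ at the start of step~(3) — are soft spots in the paper as well, so you are not behind on rigor; one simplification worth noting is that for the step-(3) label-consistency and piece~(iv) arguments you only need the \emph{maximal} strictly interior RRs to be even-parity, not every nested one, which weakens invariant~(B) to something easier to extract from the while-loop.
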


\begin{proof}
Consider the imbalance between the portions of $I_i$ assigned to the two labels after step (1) of the algorithm, and assume without loss of generality that there is an excess of the label ``$+$'' over the label ``$-$'', i.e., $\mu_i(I_i^{+}) > \mu_i(I_i^{-})$, as otherwise the agent is perfectly satisfied and we are done. Since by definition, there can be at most $2$ external RRs for $I_i$, and since all internal RRs have a perfect balance of ``$+$'' and ``$-$'' for agent $i$, it follows from \cref{lem:boundary} that $\mu_i(R_i^{+}) - \mu_i(R_i^{-}) \leq 1/2$, i.e., the difference in value for the agent in the reserved regions of $I_i$ is at most $1/2$. If $\mu_i(R_i) \geq 1/2$, then we are done, as the agent is $1/2$-satisfied. Otherwise, we place the cut in step (2) of the algorithm on the midpoint $y$ of $U_i = I_i \setminus R_i$ and we reserve an equal portion of $I_i$ to the left and to the right of $y$, until we establish that $\mu_i(R_i)=1/2$.

The only thing left to argue is that the creation of the RR $R$ in step (3) of the algorithm that will make the agent $1/2$-satisfied is possible. First, observe that $R$ will only strictly contain RRs that are of even parity, as all RRs of odd parity were considered in step (1) of the algorithm and were either (a) transformed to RRs of even parity via merging or (b) expanded until their reached some endpoint of the interval, and therefore can not be strictly contained in $R$. From this, it follows that every sub-interval of $U_i$ that will be included in $R$ on the left side of the cut on $y$ will receive the same label (e.g., ``$+$'') and every sub-interval of $U_i$ that will be included in $R$ on the right side of the cut on $y$ will receive the opposite label (e.g., ``$-$''). 
\end{proof}

\begin{lemma}\label{lem:halfsat}
After any step $j \geq i$, agent $i$ is $1/2$-satisfied.
\end{lemma}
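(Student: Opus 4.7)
The plan is to prove \cref{lem:halfsat} by induction on $j \geq i$. The base case $j = i$ is given directly by \cref{lem:satisfied}.

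For the inductive step, I would assume that agent $i$ is $1/2$-satisfied after step $j-1$ (for some $j > i$) and show that step $j$ preserves this property. My first observation would be that every operation performed during step $j$---expansion of odd-parity internal RRs of $I_j$, placement of the cut $c_j$, and creation of a new reserved region---is confined to $I_j$ and only assigns labels inside the unreserved part $U_j$. In particular, the reserved region built around $c_i$ at step $i$, which by the proof of \cref{lem:satisfied} carries $\mu_i$-mass exactly $1/2$ and is perfectly balanced for agent $i$, is left intact: no cut can be placed inside an RR, and any expansion or merge at step $j$ only adds area outward without altering already-assigned labels.

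The main body of the proof then re-establishes, at step $j$, the three structural facts used in the proof of \cref{lem:satisfied} to conclude $1/2$-satisfaction: (i) every internal RR of $I_i$ is balanced for agent $i$; (ii) every boundary RR of $I_i$ has $\mu_i$-imbalance at most $1/4$; and (iii) $\mu_i(R_i) \geq 1/2$. Fact (iii) is immediate since RRs only grow, never shrink. Fact (i) follows from the symmetric construction of expansions and new-RR creation: when the operation is performed inside $I_i$, agent $i$'s uniform valuation and the left--right symmetry force the added labels to be balanced in physical length, hence in $\mu_i$-mass. Fact (ii) is obtained by rerunning the contradiction argument of \cref{lem:boundary}: any boundary RR $R_b$ of $I_i$ at step $j$ lies inside a larger balanced RR $R$, and if $|\mu_i(R_b^+) - \mu_i(R_b^-)|$ exceeded $1/4$ the length of $R$ would be forced above $1/(2 v_i)$, giving $\mu_k(R) > 1/2$ at the step $k \leq j$ when $R$ was last modified and contradicting \cref{lem:atmosthalf}. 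Once these three facts are re-established at step $j$, the same concluding computation as in the proof of \cref{lem:satisfied} yields $|\mu_i(I^+) - \mu_i(I^-)| \leq 1/2$, completing the induction.

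The step I expect to be the main obstacle is fact (ii) in the regime where the enclosing RR $R$ was last modified at a step $k \geq i$: then $v_k \leq v_i$, and the length comparison $|R| \leq 1/(2 v_k)$ no longer immediately contradicts $|R| > 1/(2 v_i)$. Handling this case requires a careful argument using the specific structure of step 3 in \cref{sec:virtualrr}---in particular the fact that the new RR at step $k$ is centered around $c_k$ in the virtual interval $\mathcal{U}_k$ with equal-length, oppositely-labeled reservation on each side---to show that the portion $R \cap I_i$ either lies entirely inside $I_i$ (and hence is balanced by fact (i)) or only straddles $I_i$'s boundary in a manner whose $\mu_i$-imbalance is still absorbed within the $1/2$ budget, via a case analysis on the relative positioning of $I_i$ and $I_k$ together with the invariant that the new RR at step $k$ inherits balance from $c_k$'s midpoint construction.
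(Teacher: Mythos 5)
Your proof takes a genuinely different and considerably more complicated route than the paper's, and it contains a gap you yourself flag. The paper's proof of \cref{lem:halfsat} is essentially a one-shot observation: after step $i$, the set $R_i(i)$ of reserved regions within $I_i$ has been constructed so that it certifies agent $i$'s $1/2$-satisfaction; since no subsequent cut ever lands in a reserved region, $R_i(i) \subseteq R_i(j)$ for all $j > i$, and the labeling inside $R_i(i)$ can change at most by a global swap of $+$ and $-$, which leaves $|\mu_i(\cdot^+) - \mu_i(\cdot^-)|$ invariant. There is no need to re-derive any structural invariants at later steps. You start by stating precisely this observation (``the reserved region built around $c_i$ at step $i$\dots is left intact''), but then abandon it in favor of re-running the entire argument of \cref{lem:satisfied} at every step $j$, in particular re-establishing the conclusion of \cref{lem:boundary} at step $j$.

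That last step is where the proposal breaks down, and you correctly identify why: the contradiction in \cref{lem:boundary} is driven by the inequality $\mu_k(R) = v_k |R| > v_k/(2v_i) \geq 1/2$, which requires $v_k \geq v_i$, i.e.\ that the RR $R$ enclosing the boundary RR was last touched at some step $k \leq i$. For $j > i$, new RRs intersecting $I_i$ can be created by agents $k > i$ with $v_k \leq v_i$, and the inequality collapses. Your proposed repair (a case analysis on whether $R \cap I_i$ is internal or straddling, using the centered construction of step~3) is only a sketch and does not close the gap; in particular it does not address the case where the new RR created at step $k > i$ straddles a boundary of $I_i$ and is $\mu_i$-unbalanced precisely because agent $i$'s density drops to zero on one side. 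The cleaner resolution is the one the paper uses: do not try to re-certify agent $i$'s satisfaction from scratch at step $j$, but rather observe that the certificate produced at step $i$ is frozen. That sidesteps \cref{lem:boundary} entirely for $j>i$.

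A smaller issue: your parenthetical claim that the new RR built around $c_i$ ``carries $\mu_i$-mass exactly $1/2$'' is not accurate; step~3 ensures that the \emph{union} $R_i$ after step $i$ has $\mu_i$-mass $1/2$, while the freshly created RR alone has mass $1/2 - \mu_i(R_i \text{ before step } 3)$. The paper's argument is phrased in terms of $R_i(i)$, the full union, precisely for this reason.
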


\begin{proof}
First, it follows straightforwardly from \cref{lem:satisfied} that agent $i$ is $1/2$-satisfied after step $i$. For any $j > i$, let $R_i(k)$ be the union of RRs of $I_i$ after step $k$ and consider $R_i(i)$ and $R_i(j)$. By the way RRs are constructed, and since they are never intersected by a new cut, it follows that $R_i(i) \subseteq R_i(j)$ and in particular, the balance of labels in $R_i(i)$ has remained unchanged in $R_i(j)$ (although the labels themselves might have the opposite parity). It follows that the agent is $1/2$-satisfied after step $j$.
\end{proof}

\noindent The correctness of the algorithm follows by applying \cref{lem:halfsat} for $j=n$.\\

\noindent Before we conclude the subsection, we mention that the algorithm can actually be applied to instances with piecewise constant valuations with $d$ blocks, as long as we have $d \cdot n$ cuts at our disposal. The idea is quite simple: Any agent that has (at most) $d$ value blocks will be replaced by (at most) $d$ agents, with single-block valuations. This requires the appropriate scaling of the heights of the blocks, to make sure that the functions are still probability measures. Then, we will run the algorithm, now on $n' \leq d \cdot n$ agents, and we will obtain a solution using $n'$ cuts. This set of cuts will also be a solution to the original instance, since the parameter $\varepsilon$ is constant ($1/2$) in both cases. We obtain the following corollary.

\begin{corollary}
There exists a polynomial-time algorithm for $1/2$-\ch, when the agents have \emph{piecewise constant} valuations with $d$ \emph{blocks} and we are allowed to use $d \cdot n$ cuts.
\end{corollary}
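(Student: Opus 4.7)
The plan is to reduce the $d$-block piecewise constant instance to a single-block instance with at most $dn$ agents, and then invoke \cref{alg:halfch} from the preceding subsection. Concretely, for each original agent $i$ with value blocks $B_{i,1}, \dots, B_{i,d_i}$ (with $d_i \leq d$) of respective masses $m_{i,1}, \dots, m_{i,d_i}$ summing to $1$, I would introduce $d_i$ virtual agents $i_1, \dots, i_{d_i}$, where virtual agent $i_j$ has a single-block valuation supported exactly on $B_{i,j}$. To keep each virtual valuation a probability measure, I rescale the block's height by $1/m_{i,j}$, so that $\mu_{i_j}(B_{i,j}) = 1$. This produces an instance with $n' := \sum_i d_i \leq dn$ agents, each with single-block valuations.

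Next, I would run \cref{alg:halfch} on this enlarged instance to obtain $n'$ cuts such that every virtual agent is $1/2$-satisfied, i.e., $|\mu_{i_j}(B_{i,j}^+) - \mu_{i_j}(B_{i,j}^-)| \leq 1/2$ for all $i$ and $j$. Since we are allowed $dn \geq n'$ cuts in the original problem, any leftover cuts can be placed arbitrarily (e.g., at the right endpoint) without affecting feasibility. The running time remains polynomial because the number of virtual agents is polynomial and the algorithm runs in polynomial time in the number of agents.

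It remains to verify that the same set of cuts makes every original agent $1/2$-satisfied. By construction, the original measure restricted to $B_{i,j}$ equals $m_{i,j} \cdot \mu_{i_j}$, so
\[
\mu_i(I^+) - \mu_i(I^-) = \sum_{j=1}^{d_i} m_{i,j}\bigl(\mu_{i_j}(B_{i,j}^+) - \mu_{i_j}(B_{i,j}^-)\bigr).
\]
Taking absolute values and applying the triangle inequality together with the single-block guarantee gives
\[
|\mu_i(I^+) - \mu_i(I^-)| \leq \sum_{j=1}^{d_i} m_{i,j} \cdot \tfrac{1}{2} = \tfrac{1}{2},
\]
so agent $i$ is indeed $1/2$-satisfied.

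The only thing requiring any care is the bookkeeping of the reduction: ensuring the rescaled virtual valuations are well-defined probability measures (which fails only if some $m_{i,j}=0$, in which case the corresponding virtual agent is simply omitted), and that degenerate configurations where several virtual blocks overlap pose no issue for \cref{alg:halfch}, since the algorithm treats each agent independently through her own interval. With these minor points addressed, the correctness of the single-block algorithm lifts directly to the multi-block setting at the cost of a factor-$d$ blowup in the number of cuts, establishing the corollary.
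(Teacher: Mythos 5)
Your proposal is correct and takes essentially the same route as the paper: replace each $d$-block agent by at most $d$ single-block virtual agents with rescaled probability valuations, run the single-block $1/2$-\ch\ algorithm on the resulting $n' \leq d\cdot n$ agents, and transfer the guarantee back to the original agents. Your explicit weighted triangle-inequality computation $|\mu_i(I^+)-\mu_i(I^-)| \leq \sum_j m_{i,j}\cdot\tfrac12 = \tfrac12$ just spells out the step the paper dismisses with the remark that $\varepsilon=1/2$ is the same constant in both instances.
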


\subsection{A polynomial-time algorithm using \texorpdfstring{$2n - \ell$}{2n-l} cuts}
\label{sec:linearProgramming}

We conclude the section with the following theorem, stating that there is a polynomial-time
algorithm for solving the \emph{exact} \ch\ problem with single-block valuations, if we are allowed to use
$2n-\ell$ cuts, for any constant $\ell$.

\begin{theorem} \label{thm:linearProgramming}
Let $\ell$ be an integer constant. There exists a polynomial-time algorithm that given any single-block instance:
\begin{itemize}
    \item returns an \emph{exact} \ch\ solution that uses at most $2n-\ell$ cuts, or
    \item correctly outputs that no such solution exists.
\end{itemize}
\end{theorem}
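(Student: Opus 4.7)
My plan is to reduce the problem to solving polynomially many linear feasibility programs, one per candidate \emph{combinatorial type} of the cut arrangement. A type $T$ consists of (i) the relative ordering of the $m := 2n - \ell$ cut positions among themselves and with respect to the $2n$ block endpoints, and (ii) the label ($+$ or $-$) of the leftmost piece. Once $T$ is fixed, the exact-balance requirement $\mu_j(I^+) - \mu_j(I^-) = 0$ for each agent $j$ becomes an affine equation in the cut positions, because single-block valuations have constant density, so once $T$ determines which cuts lie inside each block and labels alternate from the leftmost one, each agent's signed mass is linear in the cut coordinates. Together with the $O(n)$ ordering and containment inequalities consistent with $T$, this gives a linear program in $m$ variables, solvable in polynomial time. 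The algorithm iterates over a prescribed family of candidate types, solves the corresponding LP for each, and returns the cuts from the first feasible LP (or reports that no exact $(2n-\ell)$-cut solution exists if none are feasible).

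The crux is enumerating the combinatorial types in polynomial time, since a naive enumeration over all $\binom{m+2n}{m}$ interleavings is exponential. My plan is to reduce to $n^{O(\ell)}$ candidate types via the following structural observation. The ``canonical'' $2n$-cut configuration places, for each block $I_i$, two cuts symmetrically about the midpoint $(a_i+b_i)/2$ at a sufficiently small offset; this trivially satisfies every balance constraint and describes a natural $n$-parameter family of solutions. To reach a $(2n-\ell)$-cut solution from this template, one must perform $\ell$ \emph{mergers}: operations that either identify two cuts with each other or pin a cut to a block endpoint, each eliminating one variable. Since there are $O(n^2)$ candidate mergers, enumerating all $\ell$-tuples of mergers produces $n^{O(\ell)}$ distinct combinatorial types, which is polynomial in $n$ for constant $\ell$.

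The main obstacle is proving \emph{completeness}: every exact $(2n-\ell)$-cut solution must be captured by some $\ell$-merger configuration of the canonical template. I would establish this by a perturbation argument on the feasibility polytope. Given an arbitrary feasible cut vector $(c_1, \ldots, c_m)$, the kernel of the $n$ balance equations in $\mathbb{R}^m$ has dimension at least $m - n = n - \ell$, so one can continuously deform the cut vector within this kernel, preserving exact balance, until an ordering inequality becomes tight. Iterating eventually yields a basic feasible solution whose tight ordering constraints decompose into cut-cut or cut-endpoint coincidences; I would argue that, modulo the continuous symmetries of the canonical template, these coincidences can be described by at most $\ell$ ``essential'' mergers matching the enumeration. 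Fleshing out this perturbation argument---and in particular ensuring that the resulting vertex of the feasibility polytope lies within the enumerated family---is the delicate part of the proof; combined with the per-type LP, it yields the claimed polynomial-time algorithm.
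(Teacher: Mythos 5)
Your overall strategy --- enumerate polynomially many combinatorial types of cut placements and solve one linear feasibility program per type --- is the same as the paper's, and the per-type LP step is fine. But the two ingredients that would make the enumeration both polynomial and complete are exactly the parts you have not established, and as sketched they do not work. First, your ``canonical'' $2n$-cut template (two cuts per block, symmetric about the block midpoint at a small offset) does \emph{not} trivially satisfy the balance constraints once blocks overlap: the cuts of other agents land inside a block and the global alternation of labels then gives that agent almost all of its value under a single label (e.g.\ two nested blocks with all four cuts clustered near the common midpoint). So the object you propose to ``merge down'' from is not itself a family of exact solutions. Second, and more seriously, the completeness claim --- every exact solution with at most $2n-\ell$ cuts is captured by $\ell$ mergers of the template --- is the crux of the theorem, and your perturbation sketch does not deliver it. The balance equations are only \emph{piecewise} linear in the cut positions: their coefficients change whenever a cut crosses a block endpoint, so you cannot deform globally ``within the kernel''; and at the end of such a deformation you reach a vertex of the fixed-type polytope, which has on the order of $2n-\ell$ tight constraints, not $\ell$, so it is not clear (and you do not show) that the tight set can be summarized by $\ell$ mergers. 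Your accounting of mergers is also off: identifying two cuts removes two cuts (a coincident pair can be deleted), while pinning a cut to a block endpoint removes none, so ``$\ell$ mergers $=$ $\ell$ fewer cuts'' does not hold as stated.

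The paper closes precisely this gap differently. It takes the $m\leq 2n-1$ breakpoints $a_1<\dots<a_{m+1}$ at which some density changes, so that every agent is constant on each $[a_j,a_{j+1}]$, and enumerates only the subsets $S\subseteq[m]$ of size $2n-\ell$ of breakpoint intervals that receive exactly one cut, with labels forced to alternate; there are $\binom{m}{m-|S|}=O(n^{\ell-1})$ such subsets, and the one-cut-per-chosen-interval restriction is what makes each agent's balance an affine equation. Completeness is then proved by an explicit normalization rather than a perturbation heuristic: in a solution with the minimum number of cuts, no breakpoint interval can contain three or more cuts, and two cuts in the same interval can be shifted rightwards preserving their distance (hence, by constant density, preserving exact balance) until one sits at the interval's right endpoint and is handed to the next interval; iterating yields a solution with at most one cut per interval, i.e.\ one of the enumerated types, and an induction that inserts a dummy cut at $a_1$ lifts this from the minimum cut number $c_{min}$ to every budget $c\leq m$. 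If you want to salvage your writeup, you essentially need to prove a lemma of this normalization type; without it, your algorithm may simply miss the type of a valid solution and wrongly report infeasibility.
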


\noindent Given the single-block valuations of the $n$ agents, we can determine numbers $0 \leq a_1 < a_2 < \dots < a_m < a_{m+1} \leq 1$ for some $m \in [2n-1]$ such that for every agent $i \in [n]$:
\begin{itemize}
    \item $\mu_i([0,a_1]) = \mu_i([a_{m+1},1]) = 0$, and
    \item for any $j \in [m]$, the density of $\mu_i$ in interval $(a_j,a_{j+1})$ is constant.
\end{itemize}
The upper bound $m \leq 2n-1$ comes from the fact that all agents have single-block valuations.

With this in hand, the case $\ell=1$ is very easy to solve. Indeed, it suffices to position a cut at the middle of each interval $[a_j,a_{j+1}]$, $j \in [m]$, and to alternate the labels $+$ and $-$ between the cuts. Then, the value of every agent within every interval is split in half, which implies that the same also holds for their value over the whole interval. Thus, this yields an exact \ch\ that uses $m \leq 2n-1$ cuts.

Now consider the case where $\ell$ is any integer constant. If $m \leq 2n-\ell$, the solution above uses at most $2n-\ell$ cuts and we are done. Thus, it remains to solve the case where $2n-\ell < m$. Let $k = 2n - \ell$. The algorithm proceeds by picking a subset $S \subseteq [m]$ of size $|S| = k$, and checking whether there exists a solution with one cut in each interval $[a_j,a_{j+1}]$, $j \in S$, by using a linear programming subroutine. Since the number of such subsets $S$ is $\binom{m}{k} = \binom{m}{m-k} \leq \mathcal{O}(n^\ell)$, where $\ell$ is a constant, the algorithm can try all possible subsets $S$ of size $k$.

Let us now describe how the linear programming subroutine for a given subset $S$ works. We are going to put one cut in each interval $[a_j,a_{j+1}]$ for $j \in S$, and alternate the labels $+$ and $-$ between the cuts. In other words, whenever we ``cross a cut'', the label changes. We assume that the left-most label is $+$. First of all, note that for any $j \in [m] \setminus S$, the whole interval $[a_j,a_{j+1}]$ will be assigned to the same label, and, in fact, we can determine which label, since we know the exact number of cuts to the left of that interval. Similarly, we also know to which label the intervals $[0,a_1]$ and $[a_{m+1},1]$ will be assigned. Thus, let $A_+$ denote the union of all intervals for which we already know that they will be assigned to label $+$. Define $A_-$ analogously for label $-$. Note that $[0,1] \setminus (A_+ \cup A_-) = \bigcup_{j \in S} [a_j,a_{j+1}]$ (where we ignore endpoints of intervals, since they do not matter for the valuations).

The LP will have one variable $x_j$ for each $j \in S$, and $x_j$ will represent the position of the cut in the interval $[a_j,a_{j+1}]$. Since we will alternate labels between cuts, we can partition $S = S_+ \cup S_-$, such that for any $j \in S_+$, the interval $[a_j,x_j]$ is labeled $+$ and the interval $[x_j,a_{j+1}]$ is labeled $-$. For any $j \in S_-$, the interval $[a_j,x_j]$ is labeled $-$ and the interval $[x_j,a_{j+1}]$ is labeled $+$.

We can now write down the LP, which, in fact, does not even have an objective function (any feasible solution will do).
\begin{align}\label{eq:LP-more-cuts}
&\mu_i(A_+) + \sum_{j \in S_+} \mu_i([a_j,x_j]) + \sum_{j \in S_-} \mu_i([x_j,a_{j+1}]) &   & \nonumber\\
&~~~~~~~~= \mu_i(A_-) + \sum_{j \in S_+} \mu_i([x_j,a_{j+1}]) + \sum_{j \in S_-} \mu_i([a_j,x_j]) & &\text{ for all } i \in [n]\\
&x_j \in [a_j,a_{j+1}] & &\text{ for all } j \in S \nonumber
\end{align}
Note that $\mu_i(A_+)$ and $\mu_i(A_-)$ are constants, while $\mu_i([a_j,x_j])$ and $\mu_i([x_j,a_{j+1}])$ can be expressed as affine linear functions of $x_j$, since every agent has constant density within $[a_j,a_{j+1}]$. Thus, all of the constraints can indeed be written as linear equations/inequalities. Note that the LP \eqref{eq:LP-more-cuts} can be constructed for any non-empty subset $S \subseteq [m]$. It is easy to see that any feasible solution $(x_j)_{j \in S}$ immediately yields a \ch\ that uses at most $|S|$ cuts. In particular, if $|S|=k$, then this yields a \ch\ that uses at most $k=2n-\ell$ cuts. The correctness of the algorithm follows from the claim below.

\begin{claim}
Let $c \in [m]$. If for all subsets $S \subseteq [m]$ with $|S|=c$, the LP \eqref{eq:LP-more-cuts} has no feasible solution, then the \ch\ instance does not admit a solution with at most $c$ cuts.
\end{claim}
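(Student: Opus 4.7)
I plan to prove the contrapositive: assume a valid $\ch$ solution uses $c^\star \leq c$ cuts, and exhibit a subset $S \subseteq [m]$ of size $c$ together with positions $(x_j)_{j \in S}$ satisfying LP \eqref{eq:LP-more-cuts}.

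First I would \emph{normalize} the given solution so that each interval $[a_j, a_{j+1}]$ contains at most one cut. Since $c^\star \leq c \leq m$, whenever an interval contains $\geq 2$ cuts some other interval must be cut-free, and local ``hop'' moves redistribute cuts between adjacent intervals without changing any $\mu_i$-mass. Concretely, if interval $j$ holds cuts $p_1 < p_2$ while interval $j+1$ is cut-free, I replace the pair by a single cut at $a_{j+1}-(p_2-p_1)$ inside $j$ together with a boundary cut at $a_{j+1}$ reassigned to $j+1$; a direct calculation shows each agent's $\mu_i$ restricted to $[a_j,a_{j+2}]$ is preserved (both intervals have the same ``+ length'' and ``- length'' as before). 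More general cascades (for non-empty $j+1$) analogously transfer congestion through adjacent intervals and terminate at some cut-free interval. After normalization, the solution uses $c^\star$ cuts in $c^\star$ distinct intervals, call this set $S^\star$.

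Next I would \emph{augment} $S^\star$ with $c - c^\star$ additional intervals $T$ while preserving partition validity, using two balance-preserving primitives. A \emph{canceling pair} places $x_j = x_{j+1} = a_{j+1}$ for adjacent $j,j+1 \in T$; the two cuts at the common breakpoint flip labels twice, so the partition is unchanged. A \emph{zero-density singleton} places $x_1 = a_1$ (with $1 \in T$) or $x_m = a_{m+1}$ (with $m \in T$); since $[0,a_1]$ and $[a_{m+1},1]$ carry zero $\mu_i$-mass for every agent, flipping labels there does not change valuations (possibly up to a global label swap, which is also a valid $\ch$). Combining $\lfloor(c-c^\star)/2\rfloor$ canceling pairs with one singleton when $c-c^\star$ is odd yields $T$. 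Setting $x_j = p_j$ (the original cut) for $j \in S^\star$ and $x_j$ at the designated breakpoint for $j \in T$, the LP's alternating labeling reproduces the augmented $\ch$ partition (up to the immaterial global swap), so every agent's equation in LP \eqref{eq:LP-more-cuts} is satisfied.

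The main technical hurdle is the combinatorial realizability of $T$: the canceling pairs must use consecutive indices in $[m] \setminus S^\star$, and a singleton requires $1$ or $m$ to lie in $[m] \setminus S^\star$. For adversarial $S^\star$ (say $\{2,4,\ldots\}$) the complement can lack two consecutive integers, blocking the canceling-pair construction. I expect to resolve this either by exploiting the latitude afforded by Step 1's hop moves to land on an $S^\star$ whose complement carries enough adjacent pairs, or by broadening the admissible augmentations to include pairs of cuts at \emph{different} breakpoints $a_{j_1}, a_{j_2}$, which preserve validity whenever $\mu_i([a_{j_1},a_{j_2}] \cap I^+_{\text{orig}}) = \mu_i([a_{j_1},a_{j_2}] \cap I^-_{\text{orig}})$ for every agent $i$; such balanced cuts exist because the original partition itself is balanced on $[0,1]$. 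This combinatorial accounting is the crux of the proof.
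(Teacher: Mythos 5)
Your proposal correctly identifies the contrapositive as the right target and your Step 1 (normalization via hop moves) closely mirrors the paper's base-case argument. However, the augmentation step (Step 2) is where the proof genuinely breaks down, and you've noticed this yourself: the combinatorial realizability of $T$ is not just ``the crux,'' it is an unresolved gap.

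Concretely, the two primitives you allow for the extra $c-c^\star$ cuts are (i) a canceling pair at a shared breakpoint, which needs two consecutive indices both outside $S^\star$, and (ii) a boundary singleton at $a_1$ or $a_{m+1}$, of which there are at most two. If $S^\star$ is alternating (e.g.\ $\{2,4,\dots\}$), the complement has no two consecutive elements, so primitive (i) is unavailable, and (ii) supplies at most two cuts. Neither of your two proposed fixes closes this. The hop-move latitude in Step 1 is constrained by the same one-cut-per-interval requirement and does not obviously let you steer the complement of $S^\star$ into a favorable shape. The ``balanced pair at different breakpoints'' idea is worse: you assert that $\mu_i([a_{j_1},a_{j_2}]\cap I^+) = \mu_i([a_{j_1},a_{j_2}]\cap I^-)$ for all $i$ for some pair $j_1,j_2$ because the global partition is balanced, but global balance on $[0,1]$ gives no such local sub-interval balance, and for a generic solution no such pair exists.

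The paper sidesteps the whole difficulty by not trying to place the extra cuts at fixed breakpoints among the unoccupied intervals. Instead it proves the claim by induction on $c$ (from $c_{\min}$ upward), and in the step from $c$ to $c+1$ it injects a single ``unclaimed'' cut at $a_1$ and cascades it rightward: whenever the current interval $j$ is already in $S$, it reflects $x_j \mapsto a_j + (a_{j+1}-x_j)$ (which preserves the distance between the two cuts and hence every agent's mass split, by constant density), pushes the unclaimed cut to $a_{j+1}$, and continues; the cascade necessarily terminates at some $j \notin S$ because $|S|=c \leq m-1$. This wave-style argument is immune to the adversarial configuration that blocks your pairing scheme, because it freely moves the \emph{existing} cuts within their intervals rather than insisting on a static placement for the new ones.
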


\begin{proof}
Let $c_{min}$ denote the minimum number of cuts $c$ such that there exists a solution to the \ch\ instance that uses $c$ cuts. Note that $c_{min} \leq n$. The statement in the claim clearly holds for all $c < c_{min}$, since there indeed is no \ch\ solution that uses at most $c$ cuts. Furthermore, the statement also holds for $c=m$, since the (in that case, single) LP \eqref{eq:LP-more-cuts} admits a feasible solution, as explained above. We begin by proving that the statement holds for $c=c_{min}$. Then, we show that if it holds for some $c \geq c_{min}$, then it also holds for $c+1$, as long as $c+1 \leq m$. By induction, this proves the claim.

\paragraph{Base case $c=c_{min}$:}
Let $I^+,I^-$ denote an exact solution to the \ch\ instance that uses the minimum number of cuts possible for this instance, i.e., $c_{min}$ cuts. First of all, note that the intervals $[0,a_1]$ and $[a_{m+1},1]$ cannot contain any cut. Indeed, since all agents have value $0$ for these intervals, which lie at the edge of the cake, we could remove any such cut without destroying the perfect balance between $+$ and $-$. Since $c_{min}$ is the minimum number of cuts in any solution, this is impossible. Furthermore, all the cuts have to lie at distinct positions and the labels have to alternate between $+$ and $-$. Otherwise, it is easy to see that we can again remove cuts, without destroying the perfect balance.

At the beginning, all cuts are at the positions indicated by $I^+,I^-$ and they are \emph{unclaimed}. For all $j \in [m]$, starting with $j=1$ and going up to $j=m$, we perform the following operations to construct $S$ and $(x_j)_{j \in S}$. Consider the interval $[a_j,a_{j+1}]$.
\begin{itemize}
    \item If the interval does not contain any unclaimed cut, then move on to the next $j$.
    \item If the interval contains exactly one unclaimed cut, at some position $s$, then mark the cut as claimed, add $j$ to the set $S$, let $x_j := s$, and move on to the next $j$.
    \item If the interval contains exactly two unclaimed cuts, at positions $s_1 < s_2$, then move the first cut to $a_{j+1}-(s_2-s_1)$ and claim it, and move the second cut to $a_{j+1}$. Then, add $j$ to the set $S$, let $x_j := a_{j+1}-(s_2-s_1)$ and move on to the next $j$. Note that the distance between the two cuts is still $s_2-s_1$, and since both cuts also still lie in $[a_j,a_{j+1}]$, where all agents have constant density, we keep the perfect balance between $+$ and $-$.
\end{itemize}
No other case can occur. In particular, it is not possible that $s_1=s_2$ in the third bullet point, since that would mean that the two cuts can be removed without destroying the perfect balance. Furthermore, it is not possible for the interval to contain more than two cuts (claimed or unclaimed), since then we can definitely remove at least two cuts without destroying the perfect balance. Indeed, if there are three cuts, then it is easy to see that they can be replaced by a single cut. If there are four cuts, then they can be replaced by two cuts.

When we reach $j=m$, the interval $[a_m,a_{m+1}]$ can contain at most one cut. If it is unclaimed, it will be claimed by $x_m$ and we will add $m$ to the set $S$. If the interval contained more than one cut, then using the same arguments as above we could either remove one cut, or, if there were two cuts, we could shift them to the right until the second cut reaches $a_{m+1}$, but then this cut can again be removed.

Note that after we have finished our pass, all cuts must have been claimed. Thus, we obtain a set $S \subseteq [m]$ of size $|S|=c_{min}$ and values $(x_j)_{j \in S}$ that satisfy the LP \eqref{eq:LP-more-cuts}. This proves that the statement of the claim holds for $c=c_{min}$.

\paragraph{From $c$ to $c+1$:} Consider any $c \geq c_{min}$ such that the statement of the claim holds and $c+1 \leq m$. Since the statement holds for $c$ and, by definition of $c_{min}$, there exists a solution to the \ch\ instance that uses at most $c$ cuts, it follows that there exists a set $S \subseteq [m]$ with $|S| = c$, and values $(x_j)_{j \in S}$ that satisfy the LP \eqref{eq:LP-more-cuts}.

We show how to construct a set $S' \subseteq [m]$ with $|S'| = c+1$, and values $(x_j)_{j \in S'}$ that satisfy the LP \eqref{eq:LP-more-cuts}. It immediately then follows that the statement of the claim also holds for $c+1$.

At the beginning, all cuts are at the positions indicated by $(x_j)_{j \in S}$ and they are \emph{claimed}. We add an additional unclaimed cut at position $a_1$. Note that we now have $c+1$ cuts and they still form a solution to the \ch\ instance. For all $j \in [m]$, starting with $j=1$ and going up to $j=m$, we perform the following operations. Consider the interval $[a_j,a_{j+1}]$. The unclaimed cut is at position $a_j$.
\begin{itemize}
    \item If $j \notin S$, then let $S' = S \cup \{j\}$, set $x_j := a_j$, and terminate.
    \item If $j \in S$, then update $x_j := a_j + (a_{j+1}-x_j)$, put the unclaimed cut at position $a_{j+1}$ and move on the next $j$. Note that the cuts still form a solution to the \ch\ instance, since the distance between the two cuts in the interval has not changed.
\end{itemize}
Since $c \leq m-1$, there exists $j \in [m] \setminus S$, and the procedure will terminate when it reaches the smallest such $j$. It is easy to see that $S'$ and $(x_j)_{j \in S'}$ satisfy the LP \eqref{eq:LP-more-cuts}.
\end{proof} 

\section{\textsc{Consensus-\texorpdfstring{$1/3$}{1/3}-Division} is PPAD-hard}\label{sec:ppad3hardness}

As we mentioned in the Introduction, our newly developed tools that allowed us to obtain a strengthening of the PPA-completeness result for \ch, turn out to be very useful for proving a hardness result for a more general version of the problem, the \cd\ problem, for $k=3$. We provide the definition below.

\begin{definition}[$\varepsilon$-\cd]
Let $k \geq 2$. We are given $\varepsilon > 0$ and continuous probability measures $\mu_1, \dots, \mu_n$ on $[0,1]$. The probability measures are given by their density functions on $[0,1]$. The goal is to partition the unit interval into $k$ (not necessarily connected) pieces $A_1,\dots,A_k$ using at most $(k-1)n$ cuts, such that $|\mu_j(A_i) - \mu_j(A_\ell)| \leq \varepsilon$ for all $i,j,\ell$.
\end{definition}

\noindent For $\varepsilon$-\textsc{Consensus-$1/3$-Division}, for ease of notation, we will use the labels A/B/C instead of $1/2/3$. We state the main theorem of the section.

\begin{theorem}\label{thm:ppad3hard}
$\varepsilon$-\textsc{Consensus-$1/3$-Division} is PPAD-hard, for inverse exponential approximation parameter $\varepsilon$.
\end{theorem}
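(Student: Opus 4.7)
The plan is to adapt the reduction template of \cref{sec:ppahardness} but to start from a PPAD-complete problem in place of \textsc{high-D-Tucker}. The natural source is a high-dimensional discrete Brouwer/Sperner problem: given a Boolean circuit computing a Lipschitz displacement map $F\colon[-1,1]^N\to[-1,1]^N$, find an approximate fixed point $x$ with $\|F(x)\|_\infty\le\varepsilon'$. This problem is PPAD-hard for inverse-exponential $\varepsilon'$ via the arithmetic-circuit reductions standard in the Nash-equilibrium and 3D-Sperner literature, and it has precisely the fixed-point flavour that a feedback mechanism will enforce. The global architecture of \cref{sec:overview} is reused verbatim: a Coordinate-Encoding region, a Constant-Creation region, $p(N)$ parallel Circuit-Simulation regions, and a Feedback region with one agent per output coordinate of $F$.

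The two structural changes are dictated by the richer label set and the larger cut budget. With labels $A,B,C$, each unit interval $I$ now carries two independent scalar quantities, for example $u(I):=\mu(I\cap A)-\mu(I\cap B)$ and $w(I):=\mu(I\cap B)-\mu(I\cap C)$, both ranging over $[-1,1]$; and since we have $2n$ cuts instead of $n$, the free-cut budget is $2N$ rather than $N$, exactly matching the two degrees of freedom per unit-length encoding slot. The basic gates $G_{\times(-1)}$, $G_\zeta$, $G_+$ of \cref{sec:gates} are redesigned so that each operates on the $(u,w)$-encoding: every gate agent carries a value block whose height profile couples the three labels in the required way, and correctness reduces to the same ``one cut per anchor interval'' argument used for $\ch$. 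Composing these gates, the bit-extraction and circuit-simulation phases of \cref{sec:gates} are replicated to build $p(N)$ approximate evaluators of $F$, and each feedback agent $f_i$ is a single uniform block over its strip $F_i$ of the Feedback region, forcing $\mu(F_i\cap A)\approx\mu(F_i\cap B)\approx\mu(F_i\cap C)$ and thereby yielding the two balance constraints per coordinate needed to push the averaged two-dimensional output across all simulators to zero. Since $\varepsilon$ is only required to be inverse-exponential, the cumulative gate error accrued by a full arithmetic simulation of $F$ is affordable.

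The main obstacle will be controlling stray cuts in the three-label setting. The proof of \cref{thm:ppahardness} leaned crucially on the equivariance of all gates under the global $+\leftrightarrow-$ swap together with the antipodal boundary condition of \textsc{high-D-Tucker} to neutralize sign-flipping stray cuts; with three labels the relevant symmetry group is $S_3$, which is non-abelian and has no direct analogue on the boundary of a Brouwer instance. My plan is to reserve $C$ as a canonical ``neutral'' label that the gates never produce in well-behaved solutions, thereby restricting the useful symmetry to the $A\leftrightarrow B$ swap and re-enabling \cref{rem:equivariant-gates}; stray cuts that do introduce $C$-intervals inside gate regions are classified by which pair of labels they separate, and each type is shown to corrupt only $O(1)$ circuit simulators per cut, so that choosing $p(N)$ a sufficiently large polynomial leaves a clear majority of simulators uncorrupted and forces $F(x)\approx 0$. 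The delicate point, analogous to \cref{lem:no-stray,lem:stray} but with a case analysis indexed by $S_3$ rather than $\mathbb{Z}_2$, is verifying that no stray-cut pattern can spoof a fake fixed-point of $F$; this is where the bulk of the technical work will lie.
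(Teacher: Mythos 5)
Your plan ports the feedback-and-averaging architecture of \cref{sec:ppahardness} to a high-dimensional Brouwer problem, but it leaves unresolved exactly the point on which that architecture breaks for $k=3$, and which is why the paper takes a different route. The core obstruction is gate design under an unknown label order: with three labels one cannot assume anything about the $A/B/C$ sequence between cuts (indeed, forcing a cyclic order destroys totality, as the paper notes via \citep{palvolgyi2009combinatorial}), and the relation between a cut's position and the pair $(u(I),w(I))$ you propose to read out depends on which label occupies which piece. Your claim that each gate ``couples the three labels in the required way'' by the same one-cut-per-anchor argument is precisely the step that is not known to work; moreover, the idea of reserving $C$ as a label the gates ``never produce'' is incompatible with the problem itself, since every gate agent must receive one third of its own value under label $C$, so $C$ necessarily cuts through every gate's blocks. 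The $\mathbb{Z}_2$-equivariance plus Tucker antipodality that neutralized stray cuts in \cref{lem:stray} has no substitute here, and your own admission that the $S_3$-indexed stray-cut/spoofing analysis is ``where the bulk of the technical work will lie'' is an admission that the proof is missing, not merely deferred.

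The paper's actual proof sidesteps all of this rather than solving it: it reduces from the two-dimensional problem \tlinfixp\ (\cref{thm:fixp}) and proves hardness of \emph{exact} Consensus-$1/3$-Division, then lifts to inverse-exponential $\varepsilon$ by the LP argument of \cref{lem:exact-to-approx}. In that construction every agent has a length-$9$ output interval with three tall blocks; these intervals are pairwise disjoint and each forces two cuts, so all $2n$ cuts are pinned and there are \emph{no stray cuts and no feedback region at all}. The label-order ambiguity is tamed by the notion of a \emph{valid encoding} ($\mu(X(I)_C)=2$, enforceable only because the solution is exact and the $C$-share of $1/3$ propagates through gates), by placing $Out_1$ and $Out_2$ adjacent so that one label's position is fixed by convention, and by projection agents that wire outputs back to inputs (\cref{lem:ppad-projection,lem:ppad-gates}). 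If you want to pursue your averaging-based design, you would first need a three-label gate construction that is robust to arbitrary label orders and to additive error --- which is exactly what the paper states it does not know how to do, and the reason its result is limited to inverse-exponential $\varepsilon$ and $k=3$.
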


As we discussed in the Introduction, the problem for $k\geq 3$ is not necessarily harder than the case of $k=2$, because we have more cuts at our disposal. Before we present the proof, we highlight some fundamental challenges that arise when one moves from $k=2$ to larger $k$.

First, when moving to $k \geq 3$, we move from a simple $+/-$ parity to a more general setting with at least $3$ labels. This is already severely problematic when it comes to the reduction of \citep{FRG18-Necklace}, which is highly dependent on the solution having two labels. Indeed, the interpretation of the inputs in \citep{FRG18-Necklace} and the corresponding design of the gates needs to know which label will appear on the left side of the cut, and special ``parity-flip`` gadgets are used throughout the reduction to ensure this. On the contrary, with our new value interpretation, we design gadgets which take the label sequence ``internally'' into account, by adjusting the position of the cut accordingly. 

The sequence of the labels however gives rise to a second and more challenging issue: While in the case of $k=2$ we can assume without loss of generality that the labels between cuts alternate between ``$+$'' and ``$-$'', we cannot make any such assumptions even when $k=3$. In fact, it is known that if one restricts the solution to exhibit a cyclic sequence of labels $A/B/C$, then the problem is no longer a total search problem \citep{palvolgyi2009combinatorial}. This seems to be a fundamental obstacle to the design of gates for the case of $k \geq 3$. For $k=3$, we manage to side-step this obstacle by using a clever ``trick'': we make sure that the intervals of the \textsc{Consensus-$1/3$-Division} instance where we read the two inputs (of the $2$-dimensional PPAD-complete problem that we reduce from, see \cref{def:fixp}) are placed next to each other, therefore fixing the position of one of the three labels. We prove PPAD-hardness for the \emph{exact} version of the problem (in which we are looking for a perfect balance of the labels, with no allowable discrepancy $\varepsilon$), which will guarantee that in a solution, the value of this label will be fixed to $1/3$ throughout the instance. Since our instance is constructed to have piecewise constant valuation functions, the result can be extended to the case of inverse-exponential $\varepsilon$ using the following lemma, which is based on an argument of \citet{EY10-Nash-FIXP}.

\begin{lemma}\label{lem:exact-to-approx}
Let $k \geq 2$. For piecewise constant valuations, exact \cd\ reduces in polynomial time to $\varepsilon$-\cd\ with inverse-exponential $\varepsilon$.
\end{lemma}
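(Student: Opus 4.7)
The plan is to adapt the \citet{EY10-Nash-FIXP}-style approximation-to-exact argument to our piecewise-constant setting. The reduction takes an exact \cd\ instance $I$ and feeds the same instance to the $\varepsilon$-\cd\ solver with $\varepsilon := 2^{-p(|I|)}$, where $|I|$ denotes the total bit-complexity of the valuations and $p$ is a polynomial to be fixed below; given the returned $\varepsilon$-approximate solution $\vec{s} = (s_1, \dots, s_{(k-1)n})$ with labeling $\lambda$, the algorithm will output an exact solution in polynomial time.

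The first step is to set up a structural observation about the piecewise-constant data. For each agent $j$, fix the constant-piece of $\mu_j$ into which each ordered cut $s_1 \leq \dots \leq s_{(k-1)n}$ falls, together with the labeling $\lambda$ of the intervals between consecutive cuts; call this discrete data the \emph{combinatorial type} $\tau$. Under $\tau$, each measure $\mu_j(A_i)$ is an affine-linear function of the cut positions whose coefficients (the piece-heights of $\mu_j$ and the endpoints of its pieces) are rationals of polynomial bit-complexity. Consequently, the set of exact \cd\ solutions of combinatorial type $\tau$ is the feasible region of a linear program $L_\tau$ of polynomial size, whose inequalities (ordering and piece-containment) and equalities (the exact-balance conditions $\mu_j(A_i) = \mu_j(A_\ell)$) all have rational coefficients of polynomial bit-complexity. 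The algorithm then reads off $\tau := \tau(\vec{s})$ from the approximate solution, solves $L_\tau$ by any polynomial-time LP method, and returns any feasible point. Ties at boundaries between types (e.g.\ when a cut coincides with a piece boundary of some $\mu_j$) are broken arbitrarily, since moving a cut across a zero-width transition leaves every $\mu_j$ value unchanged.

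The main obstacle is proving that $L_\tau$ is feasible whenever $\varepsilon$ is inverse-exponentially small. The plan is a quantitative-separation argument: assume toward contradiction that $L_\tau$ is infeasible, and relax each equality $\mu_j(A_i) = \mu_j(A_\ell)$ to the pair of inequalities $-\eta \leq \mu_j(A_i) - \mu_j(A_\ell) \leq \eta$; let $\eta^\ast_\tau > 0$ be the smallest $\eta$ for which this LP is feasible. By a standard application of Cramer's rule to the polynomially-sized constraint matrix of $L_\tau$, $\eta^\ast_\tau \geq 2^{-q(|I|)}$ for some polynomial $q$ depending only on $n$, $k$ and $|I|$. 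Choosing $p \geq q + 1$ therefore gives $\varepsilon < \eta^\ast_\tau$, contradicting the fact that $\vec{s}$ itself is an $\varepsilon$-approximate solution of combinatorial type $\tau$. Hence $L_\tau$ is feasible, the LP solver produces the desired exact solution, and the whole reduction runs in polynomial time.
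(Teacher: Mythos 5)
Your proposal is correct and follows essentially the same route as the paper: solve the instance approximately with inverse-exponentially small $\varepsilon$, freeze the combinatorial data (labeling and the piece containing each cut) read off from the approximate solution, and set up a polynomial-size LP whose coefficients have polynomial bit-length, so that the standard bound on the bit-size of optimal basic solutions forces the imbalance to be exactly zero. The paper phrases this as minimizing the imbalance $z$ and concluding $z^*=0$, while you phrase it contrapositively via the minimal relaxation parameter $\eta^*_\tau$ and Cramer's rule, but these are the same argument.
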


\begin{proof}
We use the proof idea introduced by \citet{EY10-Nash-FIXP}. Given an instance of Consensus-$1/k$-Division with piecewise constant valuations that we want to solve exactly, we first find an $\varepsilon$-approximate solution $S$ for some sufficiently small $\varepsilon$. Then, using $S$, we construct an LP in polynomial time. Finally, we show that any solution to this LP yields an exact solution to the Consensus-$1/k$-Division instance.

Let $I$ be an instance of Consensus-$1/k$-Division with agents $\{1,2, \dots, n\}$ that have piecewise constant valuations. Then, we can efficiently find positions $0 = p_0 < p_1 < \dots < p_{m-1} < p_m = 1$ such that for all agents $i \in [n]$ and all $j \in [m]$, the density of the valuation function $\mu_i$ in interval $[p_{j-1},p_{j}]$ is constant. Denote that constant by $f_{i,j}$. Note that $m$ and the bit-length of $f_{i,j}$ are polynomially upper bounded by the size of the instance $I$.

Let $S = (\widehat{x}, \textup{label})$ be a candidate solution, where the cut positions are $0 = \widehat{x}_0 \leq \widehat{x}_1 \leq \widehat{x}_2 \leq \dots \leq \widehat{x}_{(k-1)n-1} \leq \widehat{x}_{(k-1)n} \leq \widehat{x}_{(k-1)n+1} = 1$ and for $t \in [(k-1)n+1]$, $\textup{label}(t)$ is the label in $[k]$ assigned to interval $[\widehat{x}_{t-1},\widehat{x}_t]$.

We construct an LP that has variables $x_0, \dots, x_{(k-1)n+1}$ and a variable $z$. We denote it by $\textup{LP}(I,S)$. In the LP we minimize $z$ under the constraints:
\begin{itemize}
    \item $x_0=0$ and $x_{(k-1)n+1}=1$ (these variables are only used for notational convenience)
    \item for every $t \in [(k-1)n]$: cut $x_t$ must lie between cuts $x_{t-1}$ and $x_{t+1}$
    \item for every $t \in [(k-1)n]$: for every $j$ such that $\widehat{x}_t \in [p_{j-1},p_j]$ (at most two such $j$ exist), $x_t$ must also lie in $[p_{j-1},p_j]$
    \item let $A_1, \dots, A_k$ denote the partition of the domain $[0,1]$ given by the cuts $x_0, \dots, x_{(k-1)n+1}$ and the labeling $\textup{label}$. Then the constraint is:
    
    for every agent $i \in [n]$: $\max_{\ell_1,\ell_2} |\mu_i(A_{\ell_1})-\mu_i(A_{\ell_2})| \leq z$
\end{itemize}
We could also add a constraint $z \geq 0$, but this is already implicitly enforced by the existing constraints.

Apart from the last constraint type, it is clear that all the other constraints can be expressed linearly. For the last type of constraint, note that it can be broken down into all the constraints $\mu_i(A_{\ell_1})-\mu_i(A_{\ell_2}) \leq z$ and $\mu_i(A_{\ell_2}) - \mu_i(A_{\ell_1}) \leq z$ for all $\ell_1, \ell_2$. Thus, it remains to show that $\mu_i(A_{\ell_1})$ and $\mu_i(A_{\ell_2})$ can be written as linear functions of $x_0, \dots, x_{(k-1)n+1}$.

To see this, first note that the labeling is fixed, i.e.\ for any interval $[x_t,x_{t+1}]$ we know which label it is assigned to. Furthermore, for every $t \in [(k-1)n]$ the cut $x_t$ is constrained to lie in some interval $[p_{j-1},p_j]$, and it is not allowed to cross from one interval $[p_{j-1},p_j]$ to another. Thus, for any agent $i \in [n]$ and for any interval $[p_{j-1},p_j]$, we can express the amount of value of agent $i$ in $[p_{j-1},p_j]$ going to each of the labels $\{1,2, \dots, k\}$ as a linear expression. If $[p_{j-1},p_j]$ does not contain any cut $x_t$, then all of it is assigned to some label $\ell$ (which is fixed, i.e.\ only depends on $S$). Thus, the value assigned to label $\ell$ is $f_{ij}(p_j-p_{j-1})$, and the value assigned to all other labels is $0$. If $[p_{j-1},p_j]$ contains the cuts $x_{t_1} \leq \dots \leq x_{t_s}$, then:
\begin{itemize}
    \item interval $[p_{j-1},x_{t_1}]$ yields value $f_{ij}(x_{t_1}-p_{j-1})$ to $\textup{label}(t_1)$ (and value $0$ to all other labels)
    \item interval $[x_{t_p},x_{t_{p+1}}]$ (for $1 \leq p \leq s-1$) yields value $f_{ij}(x_{t_{p+1}}-x_{t_p})$ to $\textup{label}(t_{p+1})$
    \item interval $[x_{t_s},p_j]$ yields value $f_{ij}(p_j-x_{t_s})$ to $\textup{label}(t_s+1)$
\end{itemize}
Note that any feasible solution to this LP with $z=0$ is an exact solution to our Consensus-$1/k$-Division instance $I$. Furthermore, note that there exist polynomials $p_1$ and $p_2$ such that for any candidate solution $S$, the number of equations in $\textup{LP}(I,S)$ is at most $p_1(|I|)$ and the bit-size of any number appearing in $\textup{LP}(I,S)$ is at most $p_2(|I|)$. Here $|I|$ denotes the representation size of the input $I$. Thus, there exists some polynomial $q$ such that we can efficiently compute an optimal solution of $\textup{LP}(I,S)$ where the bit-size of the variables is at most $q(|I|)$. Note that this does not depend on $S$.

Now assume that we have picked $\varepsilon < 1/2^{q(|I|)}$ and obtain a solution $S = (\widehat{x}, \textup{label})$ to $\varepsilon$-Consensus-$1/k$-Division($I$). We then construct $\textup{LP}(I,S)$ and compute an optimal solution $(x^*,z^*)$. Note that $(\widehat{x},\varepsilon)$ is a feasible solution to $\textup{LP}(I,S)$. Thus, it must hold that $z^* \leq \varepsilon < 1/2^{q(|I|)}$. But since the bit-size of $z^*$ is at most $q(|I|)$ and $z^* \geq 0$, it must hold that $z^* = 0$. Thus, $S'=(x^*, \textup{label})$ is an exact solution to our Consensus-$1/k$-Division instance $I$.
\end{proof}

\subsection{A problem to reduce from: \tlinfixp}

We start with the following problem, proven to be PPAD-complete by \citet{mehta2014constant}.
\begin{definition}[\linfixp~\citep{mehta2014constant}]
The problem \linfixp\ is defined as follows. We are given a circuit $C$ using gates $\{+, \times \zeta, \max\}$ and rational constants, that computes a function $F_C : [0,1]^2 \to [0,1]^2$. The goal is to find $x \in [0,1]^2$ such that $F_C(x)=x$.
\end{definition}

\begin{theorem}[\citep{mehta2014constant}]
\linfixp\ is \textup{PPAD}-complete.
\end{theorem}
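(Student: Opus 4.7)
The plan is to establish membership in PPAD and then reduce a known PPAD-complete problem to \linfixp. For membership, any circuit $C$ using gates in $\{+, \times \zeta, \max\}$ and rational constants computes a continuous, piecewise-linear function $F_C : [0,1]^2 \to [0,1]^2$ whose Lipschitz constant is at most exponential in $|C|$ (i.e.\ polynomial in bit-length). Brouwer's theorem guarantees that a fixed point exists, and since all gates are piecewise linear with rational coefficients, one such fixed point is rational with polynomially bounded bit-length. A standard grid discretization combined with a Sperner-labeling argument reduces the search to \textsc{2D-Sperner} and thence to \textsc{End-of-Line}, placing the problem in PPAD.

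For PPAD-hardness, the plan is to reduce from a canonical PPAD-complete 2D Brouwer problem whose displacement function is specified by an arbitrary Boolean circuit reading a binary encoding of its input. The task then amounts to simulating that Boolean circuit using only $\{+, \times \zeta, \max\}$ gates acting on real inputs in $[0,1]$. Encoding a bit $b \in \{0,1\}$ as the real value $b$, I would implement $\textsc{NOT}(b) = 1 - b$, $\textsc{AND}(b_1,b_2) = \max(0, b_1+b_2-1)$, and $\textsc{OR}(b_1,b_2) = 1 - \max(0, 1-b_1-b_2)$, each of which preserves perfect Boolean values. Bit-extraction gadgets built from truncation (two $\max$ gates giving $T[z] = \max(0,\min(1,z))$) together with multiplication by suitable constants --- entirely in the spirit of the arithmetic-gate construction in \cref{sec:gates} --- recover a polynomial number of leading bits of a real coordinate. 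Composing bit extraction, simulation of the Boolean displacement circuit, and a final affine map that writes the displacement back into $[0,1]^2$ produces the target $F_C$, whose fixed points coincide with those of the original 2D Brouwer instance.

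The main obstacle is the apparent narrowness of the two-dimensional output: one might worry that an exponentially large PPAD computation cannot be embedded into a self-map of $[0,1]^2$. The resolution, which is the central insight underlying \citet{mehta2014constant}, is that \linfixp\ is an \emph{internal-circuit} model: the circuit $C$ may contain polynomially many internal wires carrying arbitrarily complex intermediate values, and only the final value $F_C(x)$ is required to be two-dimensional. Hence all Boolean-circuit simulation can be performed on internal wires without any dimensional penalty on the output. The remaining technical care is to design the bit-extraction gadgets so that no spurious fixed points arise at inputs lying on bit-boundaries, where the extracted bit would be ambiguous; this is handled by the same ``push away from the boundary'' mechanism that ensures robustness of the gates in \cref{sec:gates}, so that the only approximate fixed points of $F_C$ correspond to genuine solutions of the underlying 2D Brouwer instance.
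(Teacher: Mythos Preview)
The paper does not prove this theorem: it is stated as a cited result from \citet{mehta2014constant} and immediately used as a black box to define the starting point \tlinfixp\ for the reduction in \cref{sec:ppad3hardness}. There is therefore no proof in the paper to compare your proposal against.

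That said, your sketch diverges substantially from the route actually taken in \citet{mehta2014constant}. Mehta obtains PPAD-hardness of \linfixp\ by encoding (a restricted form of) two-player Nash equilibrium computation --- already known to be PPAD-complete --- as a piecewise-linear fixed point in two real dimensions, exploiting the linear-complementarity structure of bimatrix games. Your route instead starts from a Boolean-circuit 2D Brouwer instance and proposes to simulate the circuit inside $F_C$ via bit extraction. The central step you leave unsubstantiated is precisely the hard one: ensuring that exact fixed points of $F_C$ cannot occur at inputs where bit extraction is ambiguous. The ``push away from the boundary'' mechanism you invoke from \cref{sec:gates} is an \emph{averaging} argument over many displaced circuit copies, designed for an \emph{approximate} problem where a bounded number of copies are allowed to fail; it does not transfer to an exact fixed-point statement $F_C(x)=x$, where a single ambiguous bit could make the simulated Boolean circuit output garbage and create spurious fixed points. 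Without a concrete mechanism that provably excludes fixed points at all bit-boundaries simultaneously, this part of the argument is a genuine gap.
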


In order to prove PPAD-hardness of Consensus-1/3-Division, we will use a slightly modified version of \linfixp, that we call \tlinfixp. The first difference is that the domain is $[-1,1]^2$ instead of $[0,1]^2$. Furthermore, instead of the gates $\{+, \times \zeta, \max\}$ and rational constants in $\mathbb{Q}$, the circuit will only be allowed to use the gates $\{ \tplus, \ttimes \zeta\}$ and rational constants in $[-1,1] \cap \mathbb{Q}$. The gate $\tplus$ corresponds to \emph{truncated addition} and the gate $\ttimes \zeta$ corresponds to \emph{truncated multiplication by} $\zeta$. For any $x,y \in [-1,1]$ and any $\zeta \in \mathbb{Q}$, we have $x \tplus y = T[x+y]$ and $x \ttimes \zeta = T[x \times \zeta]$, where $T$ is the truncation operator in $[-1,1]$ as defined in \cref{sec:ppahardness}.

\begin{definition}[\tlinfixp]\label{def:fixp}
The computational problem \tlinfixp\ is defined as follows. We are given a circuit $C$ using gates $\{ \tplus, \ttimes \zeta\}$ and rational constants in $[-1,1]$, that computes a function $F_C : [-1,1]^2 \to [-1,1]^2$. The goal is to find $x \in [-1,1]^2$ such that $F_C(x)=x$.
\end{definition}

By applying some simple modifications to any \linfixp\ circuit, we are able to show the following theorem. 
\begin{theorem}\label{thm:fixp}
\tlinfixp\ is \textup{PPAD}-complete.
\end{theorem}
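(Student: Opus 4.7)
The plan is to prove PPAD-hardness by reducing \linfixp\ (PPAD-complete by Mehta) to \tlinfixp. Membership in PPAD is immediate by embedding \tlinfixp\ into Linear FIXP: rewrite $x \tplus y$ as $\min(\max(x+y,-1),1)$ and $x \ttimes \zeta$ as $\min(\max(x\cdot\zeta,-1),1)$, and invoke the Etessami--Yannakakis result that Linear FIXP is contained in PPAD. So the real work lies in the hardness reduction.

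Given a \linfixp\ circuit $C$ computing $F_C : [0,1]^2 \to [0,1]^2$ with gates $\{+, \times\zeta, \max\}$ and rational constants, the plan is to build a \tlinfixp\ circuit $C'$ that computes
\[ G(x) \;=\; 2\,F_C\!\bigl((x+1)/2\bigr) - 1. \]
The affine bijection $y = (x+1)/2$ then maps fixed points of $G$ onto fixed points of $F_C$ (and the inverse is trivially computable in polynomial time). The two outer blocks of $C'$ realise this change of variables: the input shift $x \mapsto (x+1)/2$ via $y_i := (x_i \ttimes (1/2)) \tplus (1/2)$, and the output shift $z \mapsto 2z - 1$ via $(z_i \tplus (-1/2)) \ttimes 2$. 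In both cases the relevant operands lie in $[-1/2,1/2]$, so no $T$-truncation ever fires spuriously.

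To realise $F_C$ itself inside $C'$, I would first compute in polynomial time a uniform bound $M = 2^{V(L+1)}$ on the absolute value of every intermediate gate-value of $C$ on inputs in $[0,1]^2$, where $V$ is the gate-count of $C$ and $L$ the maximum bit-length of its constants; this $M$ has polynomial bit-length, since each $+$ at most doubles and each $\times\zeta$ multiplies by at most $2^L$. I then pass to a scaled circuit $\tilde{C}$ whose wires carry the values of $C$ divided by $2M$, so every wire value lies in $[-1/2,1/2]$: each $+$ of $C$ becomes a $\tplus$ in $\tilde{C}$ (no truncation actually fires, since sums of values in $[-1/2,1/2]$ stay in $[-1,1]$), each $\times\zeta$ becomes $\ttimes \zeta$, and each original constant $c$ is replaced by $c/(2M) \in [-1,1]$. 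At the interface between $\tilde{C}$ and the output-shift block a single $\ttimes(2M)$ gate rescales $\tilde{C}$'s output from $[0,1/(2M)]^2$ back up to $[0,1]^2$; this is permitted because $\ttimes$ accepts any $\zeta \in \mathbb{Q}$ (the $[-1,1]$ restriction applies only to hard-wired constant nodes).

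The main obstacle is the $\max$ gate of $C$, which is not linear. I would replace each such gate by a constant-size gadget based on the identity $\max(a,b) = a + \mathrm{ReLU}(b-a)$ together with the observation that, for $y \in [-1,1]$, $\mathrm{ReLU}(y) = T[\,T[y-1] + 1\,]$: when $y \ge 0$ this evaluates to $y$, and when $y < 0$ the inner truncation returns $-1$ and the outer one returns $0$. Each nested truncation is realised with a $\tplus$ against the hard-wired constants $\pm 1 \in [-1,1]$, and the subtraction $b - a$ uses $\ttimes(-1)$ followed by $\tplus$. The reason for scaling by $1/(2M)$ rather than $1/M$ is precisely that it forces $a,b \in [-1/2,1/2]$, so $b - a \in [-1,1]$ (no premature clipping), the ReLU gadget computes the exact value, and the outer sum $a + \mathrm{ReLU}(b-a) = \max(a,b) \in [-1/2,1/2]$ also survives its $\tplus$ without spurious truncation. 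Altogether $C'$ has size polynomial in $|C|$, its hard-wired constants lie in $[-1,1] \cap \mathbb{Q}$, and fixed points of $G$ pull back to fixed points of $F_C$ via $y = (x+1)/2$, completing the reduction.
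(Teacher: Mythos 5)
Your proposal is correct and takes essentially the same route as the paper's proof: membership via the containment of Linear-FIXP in PPAD, and hardness by reducing from \linfixp, rescaling every wire by a polynomial-bit-length bound on the intermediate values so that no truncation ever fires (exploiting that $\ttimes \zeta$ admits arbitrary rational $\zeta$ for scaling back up), and simulating $\max$ through the double-truncation identity $\max\{y,0\} = (y \tplus (-1)) \tplus 1$. Your only deviations are cosmetic: you conjugate by the affine bijection between $[0,1]^2$ and $[-1,1]^2$ rather than clamping the outputs as the paper does, and you scale by $1/(2M)$ so the $\max$ gadget needs no internal halving, whereas the paper scales by $1/M$ and places the factor $1/2$ inside its $\max$ identity (and you should, as the paper does explicitly, also divide the circuit's inputs by your scaling factor, which your phrase ``wires carry the values of $C$ divided by $2M$'' leaves implicit).
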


\begin{proof}
Containment in PPAD follows from the containment of the more general problem \textsc{Linear-FIXP} in PPAD \citep{EY10-Nash-FIXP}. In particular, note that the truncated gates can be simulated by using their non-truncated versions, along with $\max$ and constant-gates.

To show PPAD-hardness, we reduce from \linfixp. Let $C$ be a circuit that computes a function $F_C: [0,1]^2 \to [0,1]^2$ that uses gates $\{+,\max,\times \zeta\}$ and rational constants. First, let us change the domain to $[-1,1]^2$. We modify the circuit $C$ into $\widehat{C}$ by applying the following transformation (also used in \citep{mehta2014constant}) to every output: $x \mapsto \max\{\min\{1,x\},0\} = \max\{-1\times\max\{-1,-1 \times x\},0\}$. Then, for any input $(x,y) \in [-1,1]^2$, we must have that $F_{\widehat{C}}(x,y) \in [0,1]^2$ and for $(x,y) \in [0,1]^2$ we have $F_{\widehat{C}}(x,y) = F_C(x,y)$. It follows that $\widehat{C}$ computes a function $F_{\widehat{C}} : [-1,1]^2 \to [-1,1]^2$ and any fixed-point of $F_{\widehat{C}}$ must be in $[0,1]^2$ and thus also a fixed-point of $F_C$. This means that without loss of generality, we can assume that the domain is $[-1,1]^2$ instead of $[0,1]^2$.

Next, let us show how to replace the gates $\{+, \times \zeta\}$ by $\{\tplus, \ttimes \zeta\}$, and ensure that the constant-gates all have value in $[-1,1]$. Let $c \geq 2$ be an upper bound on the absolute value of all the constants appearing in the circuit, i.e.\ both the constant-gates and the $\times \zeta$-gates. Note that $c$ has bit representation length that is polynomial in the size of $C$ (since the size of $C$ includes the representation length of all constants). Let $n$ be the number of gates $\{+,\max,\times \zeta\}$ in $C$.

Begin with the circuit $C_0$ that only contains the two input gates of $C$ and all the constant-gates of $C$. Pick an arbitrary gate in $C$ with input(s) in $C_0$ and add it to $C_0$ to obtain circuit $C_1$. Then, pick an arbitrary gate in $C$ (and not yet in $C_1$) with input(s) in $C_1$ and add it to $C_1$ to obtain $C_2$. If we repeat this $n$, we obtain an incremental sequence of circuits $C_0$, $C_1$, $\dots$, $C_n = C$, where a single gate is added at each step.

For $i \in \{0,1,\dots,n\}$, let $v(C_i)$ denote the maximum absolute value of any gate in $C_i$, over all inputs in $[-1,1]^2$. Clearly, we have $v(C_0) \leq \max\{1,c\} = c$. To obtain $C_1$ from $C_0$, we have added a single gate. If this gate is a $+$-gate, then maximum absolute value appearing in the circuit is at most multiplied by $2 \leq c$. If it is a $\max$-gate, then the maximum absolute remains the same. Finally, if it is a $\times \zeta$-gate, the maximum absolute value is at most multiplied by $|\zeta| \leq c$. Thus, in any case, we have $v(C_0) \leq c v(C_1)$. By induction it follows that $v(C) = v(C_n) \leq c^n v(C_0) = c^{n+1}$. Note that $M := c^{n+1}$ has representation length that is polynomial with respect to the size of $C$.

From the arguments above, it follows that if the input is in $[-1,1]^2$, all the intermediate values in the computation of the circuit are upper bounded by $M$ in absolute value. Now modify $C$ to obtain $C'$ as follows. For every constant-gate, replace its constant $\zeta$ by $\zeta/M$. For every input, introduce a $\times (1/M)$-gate that multiplies it by $1/M$ and then use the output of that gate as the corresponding input for $C$. By induction, it is easy to see that on any input $(x,y) \in [-1,1]^2$, $C'$ computes $F_C(x,y)/M$. Importantly, all the intermediate values in the computation of the circuit are now upper bounded by $1$ in absolute value. In particular, all the constant-gates have value in $[-1,1]$.

For any $(x,y) \in [-1,1]^2$, we know that $F_C(x,y) \in [-1,1]^2$, i.e.\ $F_C(x,y)/M \in [-1/M,1/M]^2$. Obtain circuit $C''$ by taking $C'$ and multiplying all the outputs by $M$ (by using $\times M$-gates). Then, on any input $(x,y) \in [-1,1]^2$, $C''$ outputs $F_C(x,y)$ and all intermediate values in the computation of the circuit are upper bounded by $1$ in absolute value.

Thus, we have transformed $C$ into a circuit that computes the same function $F_C: [-1,1]^2 \to [-1,1]^2$, but only uses gates $\{\tplus, \ttimes \zeta, \max\}$, and all the constant-gates have value in $[-1,1]$.

The last step is to get rid of $\max$-gates. To do this observe that for any $x,y \in [-1,1]$, we have
$$\max\{x,y\} = \left(\frac{1}{2}\ttimes x + \max\left\{\frac{1}{2}\ttimes y \tplus \left(-\frac{1}{2}\right) \ttimes x, 0\right\}\right) \ttimes 2$$
and
$$\max\{x,0\} = (x \tplus (-1)) \tplus 1$$
Using these two equations we can implement any $\max$-gate by using the gates $\{\tplus, \ttimes \zeta\}$ and rational constants in $[-1,1]$.

Putting everything together, we have provided a reduction from \linfixp\ to \tlinfixp.
\end{proof}

\subsection{Description of the reduction}

In order to show that Consensus-1/3-Division is PPAD-hard, we reduce from \tlinfixp. Namely, given a \tlinfixp\ circuit $C$, we will construct an instance $I_C$ of Consensus-1/3-Division, such that any solution of $I_C$ yields a solution to $C$ (i.e., a fixed-point of $F_C: [-1,1]^2 \to [-1,1]^2$). As before, we will construct a Consensus-1/3-Division instance on some domain $[0,M]$ for some polynomial $M$, which is easy to transform into an equivalent instance on domain $[0,1]$.

First, let us give a high-level description of the \emph{ideal} reduction that we would like to construct. First, we would show how any interval of the Consensus-1/3-Division domain encodes a value in $[-1,1]$. Namely, in any solution $S$ to instance $I_C$, for every interval $I$, $v_{S}(I) \in [-1,1]$ would be the value encoded by interval $I$. Then, we would construct agents that implement the arithmetic gates needed by \tlinfixp, namely $\{\tplus, \ttimes \zeta\}$ and rational constants in $[-1,1]$. These agents read some value(s) in $[-1,1]$ from one or two intervals and output the result of the gate-operation into some other interval of the domain.

With these gates we could implement the circuit $C$ inside our Consensus-1/3-Division instance. In particular, we would have two intervals $In_1$ and $In_2$ each representing the two inputs, and two intervals $Out_1$ and $Out_2$ each representing the two outputs. These intervals are pairwise disjoint. In the final step we would then ``connect'' the outputs to the inputs. Namely, we would introduce an agent implementing a $\ttimes 1$-gate with input $Out_1$ and output $In_1$, and a second agent implementing a $\ttimes 1$-gate with input $Out_2$ and output $In_2$. This ensures that from any solution of the Consensus-1/3-Division instance we can extract a fixed-point of $F_C$.

If we could do all this, then this reduction would be very similar to the reduction of \citet{filos2018hardness} showing that $\varepsilon$-Consensus-Halving is PPAD-hard for constant $\varepsilon$. Unfortunately, there is a significant obstacle. Namely, we don't know how to find an encoding of values in intervals such that we can implement arithmetic gates that always work. Because we don't know in what order the labels $A$, $B$ and $C$ will appear in any given interval, implementing arithmetic gates is actually much harder than in the case of Consensus-Halving. Thus, the gates we are able to implement only work if the input interval encodes a value in a very specific way. In this case, we say that the interval is a \emph{valid encoding} of a value. Not all intervals will be a valid encoding of a value. In general, it is very hard to enforce valid encodings. This is the reason why our reduction does not seem to generalize to yield hardness for inversely polynomial $\varepsilon$, or to Consensus-$1/k$-Division with $k > 3$.

Nevertheless, for exact Consensus-1/3-Division we are able to find a work-around to force all intervals to be valid encodings of a value. If an interval is a valid encoding of a value (in solution $S$), then let $v_S(I) \in [-1,1]$ denote the value encoded by $I$. We will drop the subscript $S$ in the remainder of the exposition. The following two lemmas are crucial. They are proved in \cref{sec:ppad-details}, where the proof of \cref{thm:ppad3hard} is detailed.

\begin{lemma}\label{lem:ppad-projection}
In the instance $I_C$ we construct, it holds that:
\begin{itemize}
    \item the two intervals $In_1$ and $In_2$ are valid encodings, and
    \item if $Out_1$ and $Out_2$ are valid encodings, then $v(Out_1)=v(In_1)$ and $v(Out_2)=v(In_2)$.
\end{itemize}
\end{lemma}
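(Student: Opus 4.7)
The plan is to design the instance $I_C$ around a small dedicated region containing $In_1$ and $In_2$, layered on top of the gate-simulating agents that implement the circuit $C$. The two parts of the lemma will follow from structural constraints imposed at this region, separately from the rest of the circuit simulation.

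\textbf{Part 1: forcing valid encodings on $In_1$ and $In_2$.} I would place $In_1$ and $In_2$ adjacent to each other and introduce a small collection of bespoke ``blanket'' enforcer agents whose valuation mass is concentrated on specific sub-intervals of $In_1 \cup In_2$. The key is that we are proving hardness for the \emph{exact} version of the problem: in any solution, each label receives mass exactly $1/3$ of every agent's valuation, so every balance condition is an equality. Using adjacency together with the exact $1/3$ budget, I would design the enforcers so that one of the three labels (say $C$) is pinned down to occupy a single connected sub-region of prescribed length within $In_1 \cup In_2$. Once $C$ is thus fixed, the remaining complement must be partitioned between $A$ and $B$, reducing the 3-label structure inside each $In_k$ to an essentially 2-label structure analogous to the Consensus-Halving encoding from Section~\ref{sec:gates}; this is exactly what it means for $In_k$ to be a valid encoding, and the value $v(In_k)\in[-1,1]$ can be read off as a signed mass difference between the $A$- and $B$-regions inside $In_k$.

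\textbf{Part 2: propagating values from outputs to inputs.} For each $k\in\{1,2\}$ I would add a single ``copy'' agent implementing a $\ttimes 1$ operation from $Out_k$ to $In_k$, analogous to the $G_{\times(-1)}$/$G_{\textup{copy}}$ gadgets in Section~\ref{sec:gates} but adapted to the 3-label setting enforced by Part 1. Under the hypothesis that $Out_k$ is already a valid encoding, the construction of the copy-agent is such that its exact $1/3$ balance across the combined interval $Out_k\cup In_k$ forces the signed $(A,B)$-mass difference on $In_k$ to equal that on $Out_k$; since both encodings are valid, this is precisely the statement $v(In_k)=v(Out_k)$. The argument here is a direct linear computation once the valid-encoding structure is in place.

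\textbf{Main obstacle.} The principal technical difficulty lies in Part 1: with three labels, the combinatorial space of admissible orderings inside $In_1\cup In_2$ is much richer than in Consensus-Halving, and even the exact balance does not a priori exclude pathological cyclic patterns of the type that make the cyclic $A/B/C$ version non-total \citep{palvolgyi2009combinatorial}. I therefore expect the bulk of the work to be a careful case analysis over possible label sequences on $In_1\cup In_2$, showing that the adjacency of the two intervals together with the geometry and exact balance of the enforcer agents leaves only configurations of the intended valid-encoding form. The sidestep of the cyclic obstruction via adjacency is the crucial structural idea; once it pins down the $C$-label, Part 2 and the rest of the reduction proceed along familiar lines.
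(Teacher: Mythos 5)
Your Part~2 is broadly in the spirit of what the paper does (the paper connects $Out_k$ to $In_k$ via a projection agent into a spare interval $Temp_k$ followed by a $G_{\ttimes -1}$-gate, rather than a single direct copy), but Part~1 --- which is the actual heart of the lemma, the \emph{unconditional} validity of $In_1$ and $In_2$ --- has a genuine gap. Validity is a label-specific condition: it requires that the particular label $C$, used consistently throughout the whole instance (in every gate's correctness argument, in the constant gates, and in the sign convention $v(I)=(\mu(X(I)_A)-\mu(X(I)_B))/2$), receives exactly one third of $X(In_k)$. But any constraint imposed by ``blanket'' enforcer agents is invariant under permuting the three labels, so agents whose mass sits on an interior region $In_1\cup In_2$ can force cut windows and measure shares, but they cannot determine \emph{which} label occupies which piece there --- that is decided by the label arriving from the left of the region, i.e.\ by all cuts to its left. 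The global renaming freedom can be spent only once, and the paper spends it at the left end of the domain: $Out_1=[0,9]$ and $Out_2=[10,19]$ are the two left-most output intervals, renaming fixes the order $A,B,C$ in $Out_1$, and adjacency (no cuts between them) forces the left piece of $Out_2$ to be $C$. This a priori positional knowledge is then exploited by the two projection agents, whose valuation inside $Out_k$ is arranged so that label $C$ gets exactly $1/3$ of that valuation \emph{regardless} of where the two cuts of $Out_k$ lie; exactness then forces $C$ to get $1/3$ of $X(Temp_k)$, so $Temp_k$ is a valid encoding unconditionally, and validity propagates to $In_k$ through the $G_{\ttimes -1}$-gate. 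Your proposal anchors the label structure at the inputs instead, without explaining how $C$ (and the $A$/$B$ sign) is identified there; the ``case analysis over label sequences'' you defer is precisely the step that cannot be closed by label-symmetric enforcers alone.

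There are two further concrete frictions. First, cut ownership: in this construction every agent's two cuts are absorbed by its own output interval (this is how ``no stray cuts'' is obtained), and $In_k$ must be the output interval of exactly one agent; if your enforcers place their forcing blocks inside $In_1\cup In_2$ they inject extra cuts there and destroy the two-cut, well-cut structure on which the encoding $v(\cdot)$ and all downstream gates rely, while if they only ``read'' from $In_1\cup In_2$ they impose share constraints but cannot pin positions. Second, your intended invariant (label $C$ occupying a single connected sub-region of prescribed length) is stronger than what holds in the paper: in a valid encoding $C$ may be the left, middle, or right piece, and the gate analysis deliberately handles all three cases; only in $Out_1$ is the order fixed, by convention. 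So the route through adjacent \emph{inputs} plus enforcers does not go through as described; the working mechanism is the one anchored at the adjacent, left-most \emph{output} intervals together with the projection agents.
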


\begin{lemma}\label{lem:ppad-gates}
In instance $I_C$, we can implement arithmetic gates $\{G_{\tplus}, G_{\ttimes \zeta}, G_\zeta\}$ for operations $\{\tplus, \ttimes \zeta\}$ and constant-gate respectively, such that:
\begin{itemize}
    \item $G_\zeta$ outputs a valid encoding of $\zeta \in [-1,1] \cap \mathbb{Q}$
    \item if the input to $G_{\ttimes \zeta}$ is a valid encoding of value $x \in [-1,1]$, then the gate outputs a valid encoding of $x \ttimes \zeta$
    \item if the two inputs to $G_{\tplus}$ are valid encodings of $x,y \in [-1,1]$, then the gate outputs a valid encoding of $x \tplus y$.
\end{itemize}
\end{lemma}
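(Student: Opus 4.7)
The plan is to mirror the architecture of \cref{sec:gates}, adapting it to the much less rigid setting of three labels by exploiting the exactness of the instance. First I would make precise the notion of a \emph{valid encoding}. Given the remark in the overview that the two input intervals of the circuit are placed adjacent to each other so as to ``fix the position of one of the three labels'', I expect a unit-length interval $I$ to be a valid encoding of $v \in [-1,1]$ when exactly one-third of $I$ carries a designated reference label (say $B$) in a prescribed slice, while the remaining two-thirds of $I$ are split between $A$ and $C$, with $v$ recovered as $(3/2)(\mu(I \cap A) - \mu(I \cap C))$ where $\mu$ is Lebesgue measure. Exactness of \cd\ pins the global $B$-mass to exactly $1/3$, and the geometric layout of the construction then forces $B$ to occupy its prescribed slice inside each gate-level interval.

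For the constant gate $G_\zeta$, I would introduce a single agent with a carefully-designed two-block piecewise uniform valuation supported on the output interval $O$. The weights are chosen so that the only way to satisfy the exact balance condition across the three labels is for the cuts inside $O$ to produce the valid-encoding layout with the $A/C$ split equal to $\zeta$. This is the three-label counterpart of the $\delta$-Volume gate from \cref{sec:gates}, with $B$'s position already constrained by the global layout so that only the $A/C$ balance remains as a degree of freedom.

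For $G_{\ttimes \zeta}$ and $G_{\tplus}$, I would place agents with valuation blocks on both the input interval(s) and the output interval, calibrated so that each agent's exact balance condition becomes a linear equation relating the $A/C$-differences of its inputs to the $A/C$-difference of its output. Truncation comes for free: if the arithmetic target value lies outside $[-1,1]$, the unique feasible configuration is the one that saturates to $\pm 1$, because the encoding itself is already clipped to $[-1,1]$. For addition I would follow the template of \cref{sec:gates}, copying the two inputs into adjacent length-$1$ sub-intervals of a length-$2$ scratch region by means of two $G_{\ttimes (-1)}$-type gadgets, then running a single summing agent spanning this region and the output.

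The main obstacle is the absence of the $+/-$ alternation that is exploited heavily in the PPA-hardness proof of \cref{sec:ppahardness}: between two successive cuts, any of $\{A,B,C\}$ may appear, so an agent cannot assume which label borders a given cut on either side. Indeed, if the global cyclic order of labels were allowed to be arbitrary, no coherent value could in general be decoded from an interval, which reflects the non-totality phenomenon of \citet{palvolgyi2009combinatorial} for restricted cyclic label sequences. The ``adjacent inputs'' trick circumvents this by effectively turning $In_1 \cup In_2$ into one long interval whose $B$-mass is pinned by the exact $1/3$ global constraint to sit at a fixed location; this rigidity is precisely what makes each gate's balance condition well-defined and linear, and it is also why the argument does not extend to inverse-polynomial $\varepsilon$ or to $k \geq 4$, where the analogous rigidity breaks down.
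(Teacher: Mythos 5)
There is a genuine gap, and it sits exactly at the point you yourself flag as the main obstacle. Your construction rests on the assumption that in every gate-level interval the reference label occupies a \emph{prescribed slice}, guaranteed because ``exactness of \cd\ pins the global $B$-mass to exactly $1/3$''. No such global constraint exists: exactness constrains each \emph{agent's} measure, not Lebesgue measure, and nothing in your sketch forces the reference label to sit in any particular position (or even in any particular amount) inside an arbitrary interval. The paper obtains the needed rigidity by a mechanism your proposal is missing: every agent's output interval (of length $9$) carries three heavy blocks of value $3/10$ each, which forces at least two cuts into each such interval; since only $2n$ cuts are available for $n$ agents, every output interval contains \emph{exactly} two cuts, is ``well-cut'', and there are no stray cuts anywhere. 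A valid encoding is then defined by the \emph{amount} of the reference label $C$ in the light part $X(I)$ (exactly $1/3$ of it, i.e.\ $\mu(X(I)_C)=2$), not by its position, and validity is propagated from input to output through each gate agent's own exact balance: if $C$ receives exactly $1/3$ of the agent's value on the input interval, it must receive exactly $1/3$ on the output interval, and the big-block geometry converts this into $\mu(X(O)_C)=2$. Crucially, the position of $C$ in the output interval is \emph{not} known, and each gate must be verified in all three cases ($C$ the left, middle, or right piece); this case analysis is also what makes truncation work (a cut pinned at $O[3]$ or $O[6]$, or two cuts at distance exactly $3$ that move together against a big block), so truncation does not ``come for free''.

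The second concrete failure is the constant gate. An agent supported only on its output interval cannot output a nonzero constant: its exact-balance condition is invariant under permuting the labels inside the gadget, so if some configuration encodes $\zeta$, the configuration with two labels swapped encodes $-\zeta$ and is equally feasible. The paper breaks this symmetry by anchoring $G_\zeta$ (and the projection agents) to the one interval where the label order is fixed, namely the left-most output interval $Out_1$, whose three pieces are $A$, $B$, $C$ from left to right by a renaming convention, and by placing blocks of heights depending on $\zeta$ on those known-label pieces; relatedly, the adjacency trick concerns $Out_1$ and $Out_2$ (the circuit's outputs, read by the projection agents), not the circuit's input intervals. Your idea for $G_{\ttimes \zeta}$ and $G_{\tplus}$ --- an exact-balance equation linearly relating the input and output differences --- is in the right spirit and matches how the paper's gates work once well-cutness and validity are in place, but without the cut-accounting/big-block structure, the positional case analysis, and an anchored constant gate, the argument does not go through.
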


If these two lemmas indeed hold, then the reduction is correct. First of all, by \cref{lem:ppad-projection}, the two inputs to the circuit are valid encodings. Thus, using \cref{lem:ppad-gates}, it follows by induction that all the gates in the circuit perform their operation correctly and output a valid encoding. In particular, the two outputs of the circuit are valid encodings. Thus, by \cref{lem:ppad-projection}, we get that each of the two outputs is equal to the corresponding input. As a result, we have identified a fixed-point of $F_C: [-1,1]^2 \to [-1,1]^2$.

\subsection{Details of the proof}\label{sec:ppad-details}

We construct an instance with a set of agents $\{1,2,\dots, n\}$ for some $n$ that is polynomial in the size of $C$ (the exact value of $n$ can easily be inferred from the rest of the proof).
For every $i \in [n]$, agent $i$ will have an \emph{output interval} $O_i$ of length $9$ on the domain with the following properties:
\begin{itemize}
    \item for all $j \neq i$ it holds $O_j \cap O_i = \emptyset$ (output intervals are pairwise disjoint)
    \item agent $i$'s density function has height $3/10$ in $O_i[1,2] \cup O_i[4,5] \cup O_i[7,8]$.
\end{itemize}
where the notation $O_i[a,b]$ is used to denote a sub-interval of $O_i$, and $O_i[0,9] = O_i$.

From the second point, it follows that in any solution $S$, for each $i \in [n]$, $O_i$ must contain at least two cuts. Indeed, since the interval $O_i$ contains at least $3 \times 3/10 = 9/10$ value for agent $i$ (out of the total $1$), it is impossible for that agent to be satisfied unless there are at least two cuts in $O_i$. From the first point, namely the pairwise disjointness of the intervals $O_i$, it follows that all $2n$ cuts are accounted for and thus for each $i \in [n]$, $O_i$ contains exactly two cuts. But then it is easy to check that $O_i$ must be \emph{well-cut}: one cut lies in $O_i[7/4,17/4]$ and the other one in $O_i[19/4,29/4]$. In particular, there are no stray cuts in this reduction.

\paragraph{\textbf{Representation of a value.}} In any solution $S$, every interval $I$ of length 9 encodes a value as follows. Let $X(I) = I[0,1] \cup I[2,4] \cup I[5,7] \cup I[8,9]$. Let $X(I)_A$, $X(I)_B$ and $X(I)_C$ denote the subsets of $X(I)$ allocated by the solution $S$ to labels $A$, $B$ and $C$ respectively. Let $\mu$ denote the Lebesgue measure on $\mathbb{R}$. We say that $I$ is a \emph{valid encoding} of a value, if $\mu(X(I)_C) = 2$ and if $I$ is well-cut. If $I$ is a valid encoding, then the encoded value is $v(I) = (\mu(X(I)_A)-\mu(X(I)_B))/2 \in [-1,1]$.

In the next section, we show how to construct the arithmetic gates. With these gates, we can implement the circuit $C$ in our instance $I_C$. We will implement the circuit such that the first output of the circuit is encoded in the left-most interval of the domain $Out_1 = [0,9]$, and the second output of the circuit is encoded in the next interval, i.e.\ $Out_2 = [10,19]$. We will ensure that the next two intervals are not used by any gate of the circuit, namely $Temp_1=[20,29]$ and $Temp_2=[30,39]$ are available. Finally, the next two intervals correspond to the two inputs of the circuit, i.e.\ $In_1=[40,49]$ and $In_2=[50,59]$.

By construction, we know that $Out_1$ is well-cut, because it is the output interval of some agent. In particular, it contains exactly two cuts. By convention, we will assume that the labels appear in order $A$, $B$, $C$ from left to right in $Out_1$. Given any solution $S$ of $I_C$, we can always ensure that this holds by renaming the labels. Since $Out_1$ is well-cut, we know that interval $Out_1[0,1/2]$ is labeled $A$, interval $Out_1[17/4,19/4]$ is labeled $B$, and interval $Out_1[17/2,9]$ is labeled $C$. Furthermore, since interval $Out_2$ is also well-cut and right next to $Out_1$, we know that $Out_2[0,1/2]$ is labeled $C$. This observation is crucial for this reduction. However, note that we do not know the labels of $Out_2[17/4,19/4]$ and $Out_2[17/2,9]$, except that one of them is $A$ and the other $B$ (but not which one is which).

\paragraph{\textbf{Projection Agents.}} In order to ensure that \cref{lem:ppad-projection} holds, we are going to introduce two special agents, that we call \emph{projection agents}. Let $I := Out_1$ and $O := Temp_1$. The first projection agent's valuation function has height $3/10$ in $O[1,2] \cup O[4,5] \cup O[7,8]$, height $1/120$ in $X(O)$, height $1/30$ in $I[17/2,9]$, height $1/120$ in $I[2,4]$, height $1/60$ in $I[0,1/2] \cup I[17/4,19/4]$, and height $0$ everywhere else. See \cref{fig:projection-gadget} for an illustration of the first projection agent's valuation function. Since $Out_1$ is well-cut and we know the labels of the pieces, this agent ensures that:
\begin{itemize}
    \item interval $Temp_1$ is a valid encoding
    \item if interval $Out_1$ is a valid encoding, then $v(Temp_1) = - v(Out_1)$
\end{itemize}
The first property holds because exactly $1/3$ of the projection agent's value in interval $Out_1$ goes to label $C$. Thus, label $C$ also needs to get exactly $1/3$ of the agent's value in interval $Temp_1$. By construction of the valuation function in $Temp_1$, it follows that $C$ must get exactly $1/3$ of $X(Temp_1)$. To show the second property, consider the three possible cases:
\begin{itemize}
    \item label $C$ is the right-most label in $Temp_1$: then there is a cut at position $Temp_1[6]$. In this case, it is easy to see that the other cut in $Temp_1$ will exactly reflect what the first cut in $Out_1$ does.
    \item label $C$ is the middle label in $Temp_1$: since $C$ gets $1/3$ of $X(Temp_1)$, it follows that the cuts in $Temp_1$ have distance exactly $3$ between them. From this, it follows that both cuts reflect what the first cut in $Out_1$ does.
    \item label $C$ is the left-most label in $Temp_1$: then there is a cut at position $Temp_1[3]$. Again, as in the first case, the other cut in $Temp_1$ will exactly reflect what the first cut in $Out_1$ does. (Note that this case is actually not possible here, but we have included it, because it might occur for the second projection agent).
\end{itemize}

\begin{figure}
\centering
\begin{tikzpicture}[scale=0.6,transform shape]

	\draw (0,0) -- (9,0);
	\draw[dotted] (9,0) -- (11,0);
	\draw (11,0) -- (20,0);
	
	\draw (0,0) -- (0,-0.2);
	\draw (1,0) -- (1,-0.2);
	\draw (2,0) -- (2,-0.2);
	\draw (3,0) -- (3,-0.2);
	\draw (4,0) -- (4,-0.2);
	\draw (5,0) -- (5,-0.2);
	\draw (6,0) -- (6,-0.2);
	\draw (7,0) -- (7,-0.2);
	\draw (8,0) -- (8,-0.2);
	\draw (9,0) -- (9,-0.2);
	
	\node at (4.5,-1.5) {\Large{$I := Out_1$}};
	
	\draw (11,0) -- (11,-0.2);
	\draw (12,0) -- (12,-0.2);
	\draw (13,0) -- (13,-0.2);
	\draw (14,0) -- (14,-0.2);
	\draw (15,0) -- (15,-0.2);
	\draw (16,0) -- (16,-0.2);
	\draw (17,0) -- (17,-0.2);
	\draw (18,0) -- (18,-0.2);
	\draw (19,0) -- (19,-0.2);
	\draw (20,0) -- (20,-0.2);
	
	\node at (15.5,-1.5) {\Large{$O := Temp_1$}};
	
	\draw[fill=red!20!white] (12,0) rectangle (13,18/5);
	\draw[fill=red!20!white] (15,0) rectangle (16,18/5);
	\draw[fill=red!20!white] (18,0) rectangle (19,18/5);
	\draw[fill=red!20!white] (11,0) rectangle (12,1/10);
	\draw[fill=red!20!white] (13,0) rectangle (15,1/10);
	\draw[fill=red!20!white] (16,0) rectangle (18,1/10);
	\draw[fill=red!20!white] (19,0) rectangle (20,1/10);
	
	\draw[fill=red!20!white] (2,0) rectangle (4,1/10);
	\draw[fill=red!20!white] (0,0) rectangle (0.5,2/10);
	\draw[fill=red!20!white] (17/4,0) rectangle (19/4,2/10);
	\draw[fill=red!20!white] (17/2,0) rectangle (9,4/10);

	\draw[dashed,color=blue] (2.7,2) -- (2.7, -0.5);
	\draw[dashed,color=blue] (7.5,2) -- (7.5, -0.5);
	
	\node[color=blue] at (1.3,1.3) {\Large{$A$}};
	\node[color=blue] at (5.3,1.3) {\Large{$B$}};
	\node[color=blue] at (8.3,1.3) {\Large{$C$}};
	
	\draw[dashed,color=blue] (13.7,2) -- (13.7, -0.5);
	\draw[dashed,color=blue] (17,2) -- (17, -0.5);
	
	\node[color=blue] at (11.5,1.3) {\Large{$B$}};
	\node[color=blue] at (14.3,1.3) {\Large{$A$}};
	\node[color=blue] at (17.5,1.3) {\Large{$C$}};

\end{tikzpicture} \caption{The first projection agent's valuation function. In interval $Out_1$ the labels are ordered $A,B,C$ by construction. In interval $Temp_1$ the labels may appear in any order. The example above illustrates the case where the ordering is $B,A,C$. Note that even if the second cut in $Out_1$ moves to the right or left (as long as $Out_1$ remains well-cut), the second cut in $Temp_1$ will not move. This is a crucial property of the projection agent.}
\label{fig:projection-gadget}
\end{figure}

In order to show that \cref{lem:ppad-projection} holds for $In_1$ and $Out_1$, it suffices to use a $G_{\ttimes -1}$-gate (see next section) with input $Temp_1$ and output $In_1$.

Now let $I := Out_2$ and $O := Temp_2$. The second projection agent's valuation function has height $3/10$ in $O[1,2] \cup O[4,5] \cup O[7,8]$, height $1/120$ in $X(O)$, height $1/30$ in $I[0,1/2]$, height $1/120$ in $I[5,7]$, height $1/60$ in $I[17/4,19/4] \cup I[17/2,9]$, and height $0$ everywhere else. In other words, the second projection agent's valuation function in interval $Out_2$ is the same as the first projection agent's in $Out_1$, except that it is mirrored. By using the same arguments we used for the first projection agent, we get that \cref{lem:ppad-projection} also holds for $In_2$ and $Out_2$.

\subsubsection{Arithmetic gates}

In this section, we prove \cref{lem:ppad-gates}, by providing an explicit construction of the arithmetic gates.
As before, let $T : \mathbb{R} \to [-1,1]$ denote the truncation operator. When constructing the agents implementing the gates, we can assume that their output interval is well-cut, since this holds for all agents by construction.

\paragraph{\textbf{Multiplication by a constant $\zeta \in \mathbb{Q}$ $[G_{\ttimes \zeta}]$}:} Let $I$ and $O$ be two disjoint intervals of length $9$. First consider the case $\zeta \leq 0$. We create an agent with the following valuation function. The agent's density function has height $3/10$ in $O[1,2] \cup O[4,5] \cup O[7,8]$, height $\frac{|\zeta|}{60(|\zeta|+1)}$ in $X(I)$, height $\frac{1}{60(|\zeta|+1)}$ in $X(O)$, and height $0$ everywhere else. If $I$ is a valid encoding, then $O$ is a valid encoding of the value $v(O) = T[v(I) \times \zeta]$. If $\zeta > 0$, then use a $G_{\ttimes -\zeta}$-gate and then a $G_{\ttimes -1}$-gate.

Let us see why this works. Since $I$ encodes a valid value, exactly $1/3$ of the agent's value in interval $I$ goes to label $C$. It follows that exactly $1/3$ of the agent's value in interval $O$ must go to label $C$. By construction of the valuation function in $O$, it follows that $1/3$ of $X(O)$ must go to label $C$. As a result, $O$ is also a valid encoding. Now, if label $C$ gets the left-most piece in $O$, then there is a cut at position $O[3]$. The other cut, which is located in $O[19/4,29/4]$, will thus encode the value of interval $O$ and the truncation will work similarly to our Consensus-Halving gadgets. The analogous argument also holds if $C$ gets the right-most piece in $O$. If label $C$ gets the middle piece in $O$, then the distance between the two cuts will be exactly $3$. Thus, the two cuts ``move together'' and will touch a big block at the same time. As a result, the truncation works here as well.

\paragraph{\textbf{Addition $[G_{\tplus}]$}:} Let $I_1$, $I_2$ and $O$ be three pairwise disjoint intervals of length $9$. The agent's density function has height $3/10$ in $O[1,2] \cup O[4,5] \cup O[7,8]$, height $1/180$ in $X(I_1) \cup X(I_2) \cup X(O)$, and height $0$ everywhere else. If $I_1$ and $I_2$ are valid encodings, then $O$ is a valid encoding of the value $v(O) = - T[v(I_1) + v(I_2)]$. To obtain $G_{\tplus}$, just apply a $G_{\ttimes -1}$-gate on the output. A similar reasoning to the case of $G_{\ttimes \zeta}$ proves the correctness of this gate too.

\paragraph{\textbf{Constant $\zeta \in [-1,1] \cap \mathbb{Q}$ $[G_{\zeta}]$:}} For this we use the left-most interval of length $9$ of the domain, namely $Out_1$. We know that interval $I:=Out_1$ is the output interval of some gate, so it is well-cut. Furthermore, we know that the labels appear in order $A$, $B$, $C$ from left to right (by convention). Thus, we know that interval $I[0,1/2]$ is labeled $A$, interval $I[17/4,19/4]$ is labeled $B$, and interval $I[17/2,9]$ is labeled $C$.

Introduce an agent with output interval $O$ of length $9$. The agent's density function has height $3/10$ in $O[1,2] \cup O[4,5] \cup O[7,8]$, height $1/120$ in $X(O)$, height $1/30$ in $I[17/2,9]$, height $\frac{1-\zeta/2}{30}$ in $I[0,1/2]$, height $\frac{1+\zeta/2}{30}$ in $I[17/4,19/4]$, and height $0$ everywhere else. From this construction, it follows that interval $O$ is a valid encoding of the value $v(O) = \zeta$.\\

\noindent As a result, \cref{lem:ppad-gates} holds and \cref{thm:ppad3hard} follows. 
\section{Future directions}
The main technical question is whether $\varepsilon$-\ch\ for single-block valuations is PPA-hard or polynomially solvable, or perhaps even complete for some other class. Another interesting direction is to extend the PPA-hardness result of \cref{thm:ppahardness} (or even for a larger number of blocks) to constant $\varepsilon$; such a result however would seemingly require radically new ideas, namely an averaging argument over a constant set of outputs that is robust to stray cuts. 

In a slightly different direction, \citet{deligkas2020company} very recently showed that the problem for a \emph{constant} number of agents is PPA-complete, if we allow agents to have more general valuations, in particular non-additive. This leaves open the fundamental question of showing PPA-hardness of $\varepsilon$-\ch\ for a constant number of agents with additive valuations. 

Finally, it would be interesting to study the complexity of the \cd\ problem when $k\geq 3$ and possibly strengthen or extend our hardness result to other values of $k$. Given the membership of the problem in PPA-$k$ \citep{FHSZ20}, for any $k$ which is a prime power, the important question is whether \cd\ (and consequently Necklace Splitting with $k$ thieves \citep{FRG18-Consensus}) is actually complete for PPA-$k$.

\subsubsection*{Acknowledgments}
Alexandros Hollender was supported by an EPSRC doctoral studentship (Reference 1892947). Katerina Sotiraki was supported in part by NSF/BSF grant \#1350619, an MIT-IBM grant, and a DARPA Young Faculty Award,  MIT Lincoln Laboratories and Analog Devices. Part of this work was done while the author was visiting the Simons Institute for the Theory of Computing. Manolis Zampetakis was supported by a Google Ph.D. Fellowship and NSF Award \#1617730.  \\

\noindent We would like to thank Paul Goldberg for helpful discussions and comments, as well as the anonymous reviewers at EC'20 and SICOMP for their suggestions that helped improve the presentation of the paper.

\bibliographystyle{plainnat}
\bibliography{Consensus_easier}

\begin{thebibliography}{39}
\providecommand{\natexlab}[1]{#1}
\providecommand{\url}[1]{\texttt{#1}}
\expandafter\ifx\csname urlstyle\endcsname\relax
  \providecommand{\doi}[1]{doi: #1}\else
  \providecommand{\doi}{doi: \begingroup \urlstyle{rm}\Url}\fi

\bibitem[Aisenberg et~al.(2020)Aisenberg, Bonet, and Buss]{ABB15-2DTucker}
James Aisenberg, Maria~Luisa Bonet, and Sam Buss.
\newblock 2-{D} {T}ucker is {PPA} complete.
\newblock \emph{Journal of Computer and System Sciences}, 108:\penalty0
  92--103, 2020.
\newblock \doi{10.1016/j.jcss.2019.09.002}.

\bibitem[Alon(1987)]{Alon87-Necklace}
Noga Alon.
\newblock Splitting necklaces.
\newblock \emph{Advances in Mathematics}, 63\penalty0 (3):\penalty0 247--253,
  1987.
\newblock \doi{10.1016/0001-8708(87)90055-7}.

\bibitem[Alon and Graur(2020)]{alon2020efficient}
Noga Alon and Andrei Graur.
\newblock Efficient splitting of measures and necklaces.
\newblock \emph{arXiv preprint}, 2020.
\newblock URL \url{https://arxiv.org/abs/2006.16613}.

\bibitem[Alon and West(1986)]{Alon1986}
Noga Alon and Douglas~B. West.
\newblock {The Borsuk-Ulam Theorem and Bisection of Necklaces}.
\newblock \emph{Proceedings of the American Mathematical Society}, 98\penalty0
  (4):\penalty0 623--628, 1986.
\newblock \doi{10.2307/2045739}.

\bibitem[Arunachaleswaran et~al.(2019)Arunachaleswaran, Barman, Kumar, and
  Rathi]{arunachaleswaran2019fair}
Eshwar~Ram Arunachaleswaran, Siddharth Barman, Rachitesh Kumar, and Nidhi
  Rathi.
\newblock Fair and efficient cake division with connected pieces.
\newblock In \emph{Proceedings of the 15th Conference on Web and Internet
  Economics (WINE)}, pages 57--70, 2019.
\newblock \doi{10.1007/978-3-030-35389-6_5}.

\bibitem[Aziz and Mackenzie(2016{\natexlab{a}})]{aziz2016discrete}
Haris Aziz and Simon Mackenzie.
\newblock A discrete and bounded envy-free cake cutting protocol for any number
  of agents.
\newblock In \emph{Proceedings of the 57th Annual IEEE Symposium on Foundations
  of Computer Science (FOCS)}, pages 416--427, 2016{\natexlab{a}}.
\newblock \doi{10.1109/FOCS.2016.52}.

\bibitem[Aziz and Mackenzie(2016{\natexlab{b}})]{aziz2016discreteb}
Haris Aziz and Simon Mackenzie.
\newblock A discrete and bounded envy-free cake cutting protocol for four
  agents.
\newblock In \emph{Proceedings of the 48th Annual ACM Symposium on Theory of
  Computing (STOC)}, pages 454--464, 2016{\natexlab{b}}.
\newblock \doi{10.1145/2897518.2897522}.

\bibitem[B{\'a}r{\'a}ny et~al.(1981)B{\'a}r{\'a}ny, Shlosman, and
  Sz{\"u}cs]{BSS81}
Imre B{\'a}r{\'a}ny, Senya~B. Shlosman, and Andr{\'a}s Sz{\"u}cs.
\newblock On a topological generalization of a theorem of {T}verberg.
\newblock \emph{Journal of the London Mathematical Society}, 2\penalty0
  (1):\penalty0 158--164, 1981.
\newblock \doi{10.1112/jlms/s2-23.1.158}.

\bibitem[Batziou et~al.(2021)Batziou, Hansen, and
  H{\o}gh]{BatziouHH21-consensus-BBU}
Eleni Batziou, Kristoffer~Arnsfelt Hansen, and Kasper H{\o}gh.
\newblock Strong approximate {C}onsensus {H}alving and the {B}orsuk-{U}lam
  theorem.
\newblock In \emph{Proceedings of the 48th International Colloquium on
  Automata, Languages, and Programming (ICALP)}, pages 24:1--24:20, 2021.
\newblock \doi{10.4230/LIPIcs.ICALP.2021.24}.

\bibitem[Borsuk(1933)]{Borsuk1933}
Karol Borsuk.
\newblock {Drei S{\"{a}}tze {\"{u}}ber die n-dimensionale euklidische
  Sph{\"{a}}re}.
\newblock \emph{Fundamenta Mathematicae}, 20:\penalty0 177--190, 1933.
\newblock \doi{10.4064\%2Ffm-20-1-177-190}.

\bibitem[Brams and Taylor(1996)]{brams1996fair}
Steven~J. Brams and Alan~D. Taylor.
\newblock \emph{Fair Division: From cake-cutting to dispute resolution}.
\newblock Cambridge University Press, 1996.
\newblock \doi{10.1017/CBO9780511598975}.

\bibitem[Chen et~al.(2009)Chen, Deng, and Teng]{chen2009settling}
Xi~Chen, Xiaotie Deng, and Shang-Hua Teng.
\newblock Settling the complexity of computing two-player {N}ash equilibria.
\newblock \emph{Journal of the ACM}, 56\penalty0 (3):\penalty0 14, 2009.
\newblock \doi{10.1145/1516512.1516516}.

\bibitem[Chen et~al.(2017)Chen, Paparas, and Yannakakis]{chen2013complexity}
Xi~Chen, Dimitris Paparas, and Mihalis Yannakakis.
\newblock The complexity of non-monotone markets.
\newblock \emph{Journal of the ACM}, 64\penalty0 (3):\penalty0 20:1--20:56,
  2017.
\newblock \doi{10.1145/3064810}.

\bibitem[Daskalakis et~al.(2009)Daskalakis, Goldberg, and
  Papadimitriou]{Daskalakis2009}
Constantinos Daskalakis, Paul~W. Goldberg, and Christos~H. Papadimitriou.
\newblock The complexity of computing a {N}ash equilibrium.
\newblock \emph{SIAM Journal on Computing}, 39\penalty0 (1):\penalty0 195--259,
  2009.
\newblock \doi{10.1137/070699652}.

\bibitem[Deligkas et~al.(2021)Deligkas, Fearnley, Melissourgos, and
  Spirakis]{deligkas2019computing}
Argyrios Deligkas, John Fearnley, Themistoklis Melissourgos, and Paul~G.
  Spirakis.
\newblock Computing exact solutions of consensus halving and the
  {Borsuk}-{Ulam} theorem.
\newblock \emph{Journal of Computer and System Sciences}, 117:\penalty0 75--98,
  2021.
\newblock \doi{10.1016/j.jcss.2020.10.006}.

\bibitem[Deligkas et~al.(2022)Deligkas, Filos-Ratsikas, and
  Hollender]{deligkas2020company}
Argyrios Deligkas, Aris Filos-Ratsikas, and Alexandros Hollender.
\newblock Two's company, three's a crowd: Consensus-halving for a constant
  number of agents.
\newblock \emph{Artificial Intelligence}, 313, 2022.
\newblock \doi{10.1016/j.artint.2022.103784}.

\bibitem[Deng et~al.(2012)Deng, Qi, and Saberi]{deng2012algorithmic}
Xiaotie Deng, Qi~Qi, and Amin Saberi.
\newblock Algorithmic solutions for envy-free cake cutting.
\newblock \emph{Operations Research}, 60\penalty0 (6):\penalty0 1461--1476,
  2012.
\newblock \doi{10.1287/opre.1120.1116}.

\bibitem[Deng et~al.(2017)Deng, Feng, and Kulkarni]{deng2017octahedral}
Xiaotie Deng, Zhe Feng, and Rucha Kulkarni.
\newblock Octahedral {T}ucker is {PPA}-complete.
\newblock Technical Report TR17-118, Electronic Colloquium on Computational
  Complexity (ECCC), 2017.
\newblock URL \url{https://eccc.weizmann.ac.il/report/2017/118/}.

\bibitem[Deng et~al.(2021)Deng, Edmonds, Feng, Liu, Qi, and
  Xu]{deng2016understanding}
Xiaotie Deng, Jack~R. Edmonds, Zhe Feng, Zhengyang Liu, Qi~Qi, and Zeying Xu.
\newblock Understanding {PPA}-completeness.
\newblock \emph{Journal of Computer and System Sciences}, 115:\penalty0
  146--168, 2021.
\newblock \doi{10.1016/j.jcss.2020.07.006}.

\bibitem[Etessami and Yannakakis(2010)]{EY10-Nash-FIXP}
Kousha Etessami and Mihalis Yannakakis.
\newblock On the complexity of {N}ash equilibria and other fixed points.
\newblock \emph{SIAM Journal on Computing}, 39\penalty0 (6):\penalty0
  2531--2597, 2010.
\newblock \doi{10.1137/080720826}.

\bibitem[Filos-Ratsikas and Goldberg(2018)]{FRG18-Consensus}
Aris Filos-Ratsikas and Paul~W. Goldberg.
\newblock {Consensus Halving is PPA-complete}.
\newblock In \emph{Proceedings of the 50th Annual ACM Symposium on Theory of
  Computing (STOC)}, pages 51--64, 2018.
\newblock \doi{10.1145/3188745.3188880}.

\bibitem[Filos-Ratsikas and Goldberg(2019)]{FRG18-Necklace}
Aris Filos-Ratsikas and Paul~W. Goldberg.
\newblock {The Complexity of Splitting Necklaces and Bisecting Ham Sandwiches}.
\newblock In \emph{Proceedings of the 51st Annual ACM Symposium on Theory of
  Computing (STOC)}, pages 638--649, 2019.
\newblock \doi{10.1145/3313276.3316334}.

\bibitem[Filos-Ratsikas et~al.(2018)Filos-Ratsikas, Frederiksen, Goldberg, and
  Zhang]{filos2018hardness}
Aris Filos-Ratsikas, S{\o}ren Kristoffer~Still Frederiksen, Paul~W. Goldberg,
  and Jie Zhang.
\newblock Hardness results for {C}onsensus-{H}alving.
\newblock In \emph{Proceedings of the 43rd International Symposium on
  Mathematical Foundations of Computer Science (MFCS)}, pages 24:1--24:16,
  2018.
\newblock \doi{10.4230/LIPIcs.MFCS.2018.24}.

\bibitem[Filos-Ratsikas et~al.(2021)Filos-Ratsikas, Hollender, Sotiraki, and
  Zampetakis]{FHSZ20}
Aris Filos-Ratsikas, Alexandros Hollender, Katerina Sotiraki, and Manolis
  Zampetakis.
\newblock A topological characterization of modulo-$p$ arguments and
  implications for necklace splitting.
\newblock In \emph{Proceedings of the 32nd ACM-SIAM Symposium on Discrete
  Algorithms (SODA)}, pages 2615--2634, 2021.
\newblock \doi{10.1137/1.9781611976465.155}.

\bibitem[Garg et~al.(2018)Garg, Mehta, and Vazirani]{garg2018substitution}
Jugal Garg, Ruta Mehta, and Vijay~V. Vazirani.
\newblock Substitution with satiation: A new class of utility functions and a
  complementary pivot algorithm.
\newblock \emph{Mathematics of Operations Research}, 43\penalty0 (3):\penalty0
  996--1024, 2018.
\newblock \doi{10.1287/moor.2017.0892}.

\bibitem[Goldberg and West(1985)]{Goldberg1985}
Charles~H. Goldberg and Douglas~B. West.
\newblock {Bisection of Circle Colorings}.
\newblock \emph{SIAM Journal on Algebraic Discrete Methods}, 6\penalty0
  (1):\penalty0 93--106, 1985.
\newblock \doi{10.1137/0606010}.

\bibitem[Goldberg and Hollender(2021)]{goldberg2019hairy}
Paul~W. Goldberg and Alexandros Hollender.
\newblock The {H}airy {B}all problem is {PPAD}-complete.
\newblock \emph{Journal of Computer and System Sciences}, 122:\penalty0 34--62,
  2021.
\newblock \doi{10.1016/j.jcss.2021.05.004}.

\bibitem[G{\"o}{\"o}s et~al.(2020)G{\"o}{\"o}s, Kamath, Sotiraki, and
  Zampetakis]{goos2019complexity}
Mika G{\"o}{\"o}s, Pritish Kamath, Katerina Sotiraki, and Manolis Zampetakis.
\newblock {On the Complexity of Modulo-$q$ Arguments and the Chevalley-Warning
  Theorem}.
\newblock In \emph{Proceedings of the 35th Computational Complexity Conference
  (CCC)}, pages 19:1--19:42, 2020.
\newblock \doi{10.4230/LIPIcs.CCC.2020.19}.

\bibitem[Grigni(2001)]{Grigni2001}
Michelangelo Grigni.
\newblock A {S}perner lemma complete for {PPA}.
\newblock \emph{Information Processing Letters}, 77\penalty0 (5--6):\penalty0
  255--259, 2001.
\newblock \doi{10.1016/S0020-0190(00)00152-6}.

\bibitem[Hobby and Rice(1965)]{hobby1965moment}
Charles~R. Hobby and John~R. Rice.
\newblock {A moment problem in L1 approximation}.
\newblock \emph{Proceedings of the American Mathematical Society}, 16\penalty0
  (4):\penalty0 665--670, 1965.
\newblock \doi{10.2307/2033900}.

\bibitem[Hollender(2021)]{Hollender2019}
Alexandros Hollender.
\newblock The classes {PPA}-$k$: Existence from arguments modulo $k$.
\newblock \emph{Theoretical Computer Science}, 885:\penalty0 15--29, 2021.
\newblock \doi{10.1016/j.tcs.2021.06.016}.

\bibitem[Megiddo and Papadimitriou(1991)]{Megiddo1991}
Nimrod Megiddo and Christos~H. Papadimitriou.
\newblock On total functions, existence theorems and computational complexity.
\newblock \emph{Theoretical Computer Science}, 81\penalty0 (2):\penalty0
  317--324, 1991.
\newblock \doi{10.1016/0304-3975(91)90200-L}.

\bibitem[Mehta(2018)]{mehta2014constant}
Ruta Mehta.
\newblock Constant rank two-player games are {PPAD}-hard.
\newblock \emph{SIAM Journal on Computing}, 47\penalty0 (5):\penalty0
  1858--1887, 2018.
\newblock \doi{10.1137/15m1032338}.

\bibitem[Neyman(1946)]{neyman1946theoreme}
Jerzy Neyman.
\newblock Un th{\'e}or{\`e}me d'existence.
\newblock \emph{CR Acad. Sci. Paris}, 222:\penalty0 843--845, 1946.

\bibitem[P{\'a}lv{\"o}lgyi(2009)]{palvolgyi2009combinatorial}
D{\"o}m{\"o}t{\"o}r P{\'a}lv{\"o}lgyi.
\newblock {Combinatorial Necklace Splitting}.
\newblock \emph{The Electronic Journal of Combinatorics}, 16(1)\penalty0
  (R79):\penalty0 1--8, 2009.
\newblock \doi{10.37236/168}.

\bibitem[Papadimitriou(1994)]{Papadimitriou94-TFNP-subclasses}
Christos~H. Papadimitriou.
\newblock On the complexity of the parity argument and other inefficient proofs
  of existence.
\newblock \emph{Journal of Computer and System Sciences}, 48\penalty0
  (3):\penalty0 498--532, 1994.
\newblock \doi{10.1016/S0022-0000(05)80063-7}.

\bibitem[Simmons and Su(2003)]{SS03-Consensus}
Forest~W. Simmons and Francis~E. Su.
\newblock Consensus-halving via theorems of {B}orsuk-{U}lam and {T}ucker.
\newblock \emph{Mathematical social sciences}, 45\penalty0 (1):\penalty0
  15--25, 2003.
\newblock \doi{10.1016/S0165-4896(02)00087-2}.

\bibitem[Steinhaus(1949)]{Steinhaus1949}
Hugo Steinhaus.
\newblock Sur la division pragmatique.
\newblock \emph{Econometrica}, 17:\penalty0 315--319, 1949.
\newblock \doi{10.2307/1907319}.

\bibitem[Tucker(1945)]{tucker1945some}
Albert~W. Tucker.
\newblock {Some Topological Properties of Disk and Sphere}.
\newblock In \emph{Proceedings of the First Canadian Math. Congress, Montreal},
  pages 286--309. University of Toronto Press, 1945.

\end{thebibliography}

\end{document}